\DeclareMathOperator{\sgn}{sgn}
\titlespacing{\paragraph}{%
  0pt}{%              left margin
  0.1\baselineskip}{% space before (vertical)
  1em}%               space after (horizontal)
\titlespacing\section{0pt}{8pt plus 1pt minus 1pt}{2pt plus 1pt minus 1pt}
\titlespacing\subsection{0pt}{8pt plus 1pt minus 1pt}{2pt plus 1pt minus 1pt}
\titlespacing\subsubsection{0pt}{8pt plus 1pt minus 1pt}{2pt plus 1pt minus 1pt}
\newtheoremstyle{slplain}% name
  {.4\baselineskip\@plus.1\baselineskip\@minus.1\baselineskip}% Space above
  {.3\baselineskip\@plus.1\baselineskip\@minus.1\baselineskip}% Space below
  {\itshape}% Body font
  {}%Indent amount (empty = no indent, \parindent = para indent)
  {\bfseries}%  Thm head font
  {.}%       Punctuation after thm head
  { }%      Space after thm head: " " = normal interword space;
\theoremstyle{slplain} % italics
\algnewcommand\algorithmicswitch{\textbf{switch}}
\algnewcommand\algorithmiccase{\textbf{case}}
\renewcommand{\paragraph}[1]{\vspace{0.07cm}\noindent {\bf #1}:}
\newtheorem*{theorem*}{Theorem}
\newtheorem{theorem}{Theorem}[section]
\newtheorem{lemma}[theorem]{Lemma}
\newtheorem{claim}[theorem]{Claim}
\newtheorem{corollary}[theorem]{Corollary}
\newtheorem{invariant}[theorem]{Invariant}
\newtheorem*{rep@theorem}{\rep@title}
\newcommand{\newreptheorem}[2]{%
\newenvironment{rep#1}[1]{%
 \def\rep@title{#2 \ref{##1}}%
 \begin{rep@theorem}}%
 {\end{rep@theorem}}}
\theoremstyle{definition}
\theoremstyle{remark}
\numberwithin{equation}{section}
\newtheoremstyle{etplain}% name
  {.0\baselineskip\@plus.1\baselineskip\@minus.1\baselineskip}% Space above
  {.0\baselineskip\@plus.1\baselineskip\@minus.1\baselineskip}% Space below
  {\itshape}% Body font
  {}%Indent amount (empty = no indent, \parindent = para indent)
  {\bfseries}%  Thm head font
  {.}%       Punctuation after thm head
  { }%      Space after thm head: " " = normal interword space;
\newcommand{\idlow}[1]{\mathord{\mathcode`\-="702D\it #1\mathcode`\-="2200}}
\newcommand{\id}[1]{\ensuremath{\idlow{#1}}}
\newcommand{\litlow}[1]{\mathord{\mathcode`\-="702D\sf #1\mathcode`\-="2200}}
\newcommand{\lit}[1]{\ensuremath{\litlow{#1}}}
\newcommand{\namedref}[2]{\hyperref[#2]{#1~\ref*{#2}}}
\newcommand{\theoremref}[1]{\namedref{Theorem}{#1}}
\newcommand{\figureref}[1]{\namedref{Figure}{#1}}
\newcommand{\figurerefb}[2]{\hyperref[#1]{Figure~\ref*{#1}#2}}
\newcommand{\lemmaref}[1]{\namedref{Lemma}{#1}}
\newcommand{\claimref}[1]{\namedref{Claim}{#1}}
\newcommand{\invariantref}[1]{\namedref{Invariant}{#1}}
\newcommand{\equationref}[1]{\hyperref[#1]{(\ref*{#1})}}
\renewcommand{\eqref}{\equationref}
\newcommand{\confspacen}{\Lambda_n \to \mathbb{N}}
\newcommand{\confspacem}{\Lambda_m \to \mathbb{N}}
\newcommand{\reach}{\Longrightarrow}
\newcommand{\DEBUG}[1]{}
\renewcommand{\setminus}{-}
\renewcommand{\emptyset}{\varnothing}
\newcommand{\EX}{\operatornamewithlimits{\mathbb{E}}}
\newcommand{\FullOrShort}{short}
  \newcommand{\fullOnly}[1]{#1}
  \newcommand{\shortOnly}[1]{}
    \newcommand{\fullOnly}[1]{}
    \newcommand{\shortOnly}[1]{#1}
\begin{document}

%\linespread{0.96}

\date{}

\title{Time-Space Trade-offs in Population Protocols}

\author{
  \and
  Dan Alistarh\\
  \small ETH Zurich\\
	\small dan.alistarh@inf.ethz.ch\\
  \and
  James Aspnes\\
        \small Yale\\
        \small james.aspnes@yale.edu 
  \and
  David Eisenstat\\
        \small Google\\
        \small eisenstatdavid@gmail.com
  \and
  \and
  Rati Gelashvili\\
        \small MIT\\
	\small gelash@mit.edu
  \and
  Ronald L. Rivest\\
        \small MIT\\
	\small rivest@mit.edu
}

\maketitle
\begin{abstract}
{\small
Population protocols are a popular model of distributed computing, in which randomly-interacting agents with little computational power cooperate to jointly perform computational tasks. 
Inspired by developments in molecular computation, and in particular DNA computing, recent algorithmic work has focused on the complexity of solving simple yet fundamental tasks in the population model, such as \emph{leader election} 
(which requires stabilization to a single agent in a special ``leader'' state), and \emph{majority} (in which agents must stabilize to a decision as to which of two possible initial states had higher initial count). 
Known results point towards an inherent trade-off between the \emph{time complexity} of such algorithms, and the \emph{space complexity}, i.e. size of the memory available to each agent. 

 In this paper, we explore this trade-off and provide new upper and lower bounds for majority and leader election. 
 First, we prove a unified lower bound, which relates the space available per node with the time complexity achievable by a protocol:  
 for instance, our result implies that any protocol solving either of these tasks for $n$ agents using $O( \log \log n )$ states must take $\Omega( n / \polylog n )$ expected time. 
 This is the first result to characterize time complexity for protocols which employ super-constant number of states per node, and proves that fast, poly-logarithmic running times require protocols to have relatively large space costs. 
 
On the positive side, we give algorithms showing that fast, poly-logarithmic stabilization time can be achieved using $O( \log^2 n )$ space per node, in the case of both tasks. Overall, our results highlight a time complexity separation between $O(\log \log n)$ and $\Theta( \log^2 n )$ state space size for both majority and leader election in population protocols, and introduce new techniques, which should be applicable more broadly. 
 	 }	
\end{abstract}

%% \vspace{15em}

\thispagestyle{empty}

\newpage
\setcounter{page}{1}

%!TEX root = main.tex
\section{Introduction}
Population protocols~\cite{AADFP06} are a model of distributed computing in which agents with little computational power and no control over the interaction schedule cooperate to collectively perform computational tasks. While initially introduced to model animal populations~\cite{AADFP06}, they have proved a useful abstraction for wireless sensor networks~\cite{PVV09, DV12}, chemical reaction networks~\cite{CCDS14}, and gene regulatory networks~\cite{BB04}. 
%Algorithms developed in this model have recently been implemented in synthetic DNA molecules~\cite{CCDS14}. 
A parallel line of applied research has shown that population protocols can be implemented at the level of DNA molecules~\cite{CD13}, and that they are equivalent to computational tasks solved by living cells in order to function correctly~\cite{CCN12}. 

A population protocol consists of a set of $n$ finite-state agents, interacting 
  in pairs, where each interaction may update the local state of both participants. 
A \emph{configuration} captures a ``global state'' of the system at any given time, 
  and formally can be described by the counts of nodes in each state.      
The protocol starts in a valid initial configuration, 
  and defines the outcomes of pairwise interactions.
The goal is to have all agents stabilize to some configuration, 
  representing the output of the computation, 
  which satisfies some predicate over the initial configuration of the system. 
For example, 
one fundamental task is \emph{majority (consensus)}~\cite{AAE08, PVV09, DV12}, in which agents start in one of two input states $A$ and $B$, and must stabilize on a decision as to which state has a higher initial count.\footnote{In this paper, we will focus on the \emph{exact} majority task, in which the protocol must return the correct decision in all executions, as opposed to \emph{approximate} majority, where the wrong decision might be returned with low probability~\cite{AAE08}.} A complementary fundamental task is \emph{leader election}~\cite{AngluinAE08, AG15, DS15}, which requires the system to stabilize to final configurations in which a \emph{single} agent is in a special \emph{leader} state. 
  
The set of interactions occurring in each step is usually assumed to be decided by a \emph{probabilistic} scheduler, which picks the next pair to interact uniformly at random. 
One complexity measure is \emph{parallel time}, defined as the number of pairwise interactions until stabilization, divided by $n$, the number of agents. 
Another is \emph{space complexity}, defined as the number of \emph{distinct states} that an agent can represent internally. 

There has been considerable interest in the complexity of \emph{fundamental tasks} such as leader election and consensus in the population model. 
In particular, a progression of deep technical results~\cite{Doty14, CCDS14} has culminated in showing that \emph{leader election in sublinear time is impossible} for protocols which are restricted to a \emph{constant} number of states per node~\cite{DS15}.
At the same time, it is now known that leader election can be solved in $O(\log^3 n)$ time via a protocol requiring $O(\log^3 n)$ states per node~\cite{AG15}. 
For the majority task, the space-time complexity gap is much wider: sublinear time is impossible for protocols restricted to having at most \emph{four} states per node~\cite{AGV15}, but there exists a \emph{poly-logarithmic time} protocol which requires a number of states per node that is \emph{linear} in $n$~\cite{AGV15}. 
 
These results hint towards a trade-off between the \emph{running time} of a population protocol and the \emph{space}, or number of states, available at each agent. This relation is all the more important since time efficiency is critical in practical implementations, while technical constraints limit the number of states currently implementable in a molecule~\cite{CD13}. (One such technical constraint is the possibility of \emph{leaks}, i.e. spurious creation of states following an interaction~\cite{Leaks}. In DNA implementations, the more states a protocol implements, the higher the likelihood of a leak, and the higher the probability of divergence.) 
However, the characteristics of the time-space trade-off in population protocols are currently an open question. 

\paragraph{Contribution} In this paper, we take a step towards answering this question. First, we exhibit a general trade-off between the number of states available to a population protocol and its time complexity, which characterizes which deterministic predicates can be computed efficiently with limited space. Further, we give new and improved algorithms for majority and leader election, tight within poly-logarithmic factors. Our results, and their relation to previous work, are summarized in Figure~\ref{table}. 

\paragraph{Lower Bounds} When applied to majority, our lower bound  proves that there exist constants $c \in (0, 1)$ and $K \geq 1$ such that any protocol using $\lambda_n \leq c \log \log n$ states must take $\Omega( n / ( K^{\lambda_n} +  \epsilon n)^2 ) )$ time, where $\epsilon n$ is the difference between the initial counts of the two competing states. For example, any protocol using \emph{constant} states and supporting a constant initial difference necessarily takes \emph{linear} time. 

For leader election, we show that there exist constants $c \in (0, 1)$ and $K \geq 1$ such that any protocol using $ \lambda_n \leq c \log \log n$ states and electing at most $\ell(n)$ leaders, requires  $\Omega( n / ( K^{\lambda_n} \cdot \ell(n)^2 ) )$ expected time.  
Specifically, any protocol electing one leader using $\leq c \log \log n$ states requires $\Omega ( n / \textnormal{polylog } n )$ time.

%The lower bound framework also extends to a larger state space, to show that protocols for leader election and majority require $\Omega(\log n)$ time, irrespective of the number of states at their disposal. 

\iffalse

The technical argument behind our lower bound roughly works as follows. 
We begin by noticing that both tasks we consider start from \emph{dense} initial configurations, that is, configurations in which all present states have $\Theta(n)$ count. 
The first step proves that, due to the limited size of the state space, such initial configurations are likely to lead to later configurations in which \emph{all} possible states have count $\Omega( n^{1 - \epsilon} )$, for some $0 < \epsilon < 1$. 
In the second step, we start from this latter configuration, 
  and prove that a hypothetical algorithm which stabilizes ``fast,'' i.e., in time $o( n / K^{\lambda_n} \cdot \ell(n)^2 ) )$, will necessarily reach a final configuration in which 
\fi

\paragraph{Algorithms} On the positive side, we give new poly-logarithmic-time algorithms for majority and leader election which use $O( \log^2 n)$ space. 
Our majority algorithm, called Split-Join, runs in $O( \log^3 n )$ time both in expectation and with high probability, 
  and uses $O( \log^2 n)$ states per node. 
The only previosly known algorithms to achieve sublinear time required $\Theta( n )$ states per node~\cite{AGV15}, or exponentially many states per node~\cite{Berenbrink16}. 
Further, we give a new leader election algorithm called Lottery-Election, which uses $O(\log^2 n)$ states, 
  and stabilizes in $O( \log^{5.3} n \log \log n)$ parallel time in expectation and 
  $O( \log^{6.3} n)$ parallel time with high probability. 
This reduces the state space size by a logarithmic factor 
  over the best known algorithm~\cite{AG15}, at the cost of a poly-logarithmic running time increase with respect to the $O(\log^3 n)$ bound of~\cite{AG15}. 
 
 A key improvement with respect to previous work is that these time-space bounds hold independently of the initial configuration: for instance, the AVC algorithm~\cite{AGV15} could stabilize in poly-logarithmic time and using poly-logarithmic space under a restricted  set of initial configurations, e.g. assuming that the discrepancy $\epsilon$ between the two initial states is large. 
 Our Split-Join algorithm can achieve this for worst-case initial configurations, i.e. $\epsilon = 1 / n$. 
 
\begin{figure}[t]
\centering
{\small
\begin{tabular}{|c|c|c|c|c|}
\hline
Problem & Type & Expected Time Bound & Number of States  & Reference \\ \hline
\multirow{4}{*}{\parbox{2.2cm}{Exact Majority $\epsilon = 1/n$}} 
 & Algorithm & $O( n \log n )$ & 4 & \cite{DV12, MNRS14}  \\ 
 & Algorithm & $O( \log^2 n )$ & $\Theta( n )$ & \cite{AGV15}  \\ 
 & Lower Bound & $\Omega( n )$ & $\leq 4$ & \cite{AGV15}  \\ 
 & Lower Bound & $\Omega( \log n )$ & any & \cite{AGV15}  \\ \hline
\multirow{2}{*}{Leader Election} & Algorithm & $O( \log^3 n )$ & $O(\log^3 n)$ & \cite{AG15}  \\ 
& Lower Bound & $\Omega (n)$ & $O(1)$ & \cite{DS15}  \\ \hline
{Exact Majority} & \multirow{2}{*}{Lower Bound} & \multirow{2}{*}{$\Omega( n / \polylog n )$} & \multirow{2}{*}{$< 1/ 2 \log \log n$} & \multirow{2}{*}{This paper} \\ 
{Leader Election} & &  & & \\ \hline
{Exact Majority} & Algorithm & $O( \log^3 n )$ & $O(\log^2 n)$ & This paper  \\ \hline 
{Leader Election} & Algorithm & $O( \log^{5.3} n \log \log n)$ & $O(\log^2 n)$ & This paper  \\ \hline
\end{tabular}
}
\caption{Summary of results and relation to previous work.}
\label{table}
\end{figure}

\paragraph{Techniques}
The core of the lower bound is a two-step argument. 
First,  we prove that a hypothetical algorithm which would stabilize faster than allowed by the lower bound may reach ``stable'' configurations\footnote{Roughly, a configuration is stable if it may not generate any new types of states.} in which certain low-count states can be ``erased,'' i.e., may disappear completely following a sequence of reactions.  
In the second step, we engineer examples where these low-count states are exactly the set of all possible leaders (in the case of leader election protocols), or a set of nodes whose state may sway the outcome of the majority computation (in the case of consensus). This implies a contradiction with the correctness of the computation.   
Technically, our argument employs the method of bounded differences to obtain a stronger version of the main density theorem of~\cite{Doty14}, 
and develops a new technical characterization of the stable states which can be reached by a protocol, which does not require constant bounds on state space size, generalizing upon~\cite{DS15}. 
The argument provides a unified analysis: the bounds for each task in turn are corollaries of the main theorem characterizing the existence of certain stable configurations. 

On the algorithmic side, we introduce a new \emph{synthetic coin} technique, which allows nodes to generate almost-uniform local coins within a \emph{constant} number of interactions, by exploiting the randomness in the scheduler, and in particular the properties of random walks on the hypercube. 
Synthetic coins are useful for instance by allowing nodes to estimate the total number of agents in the system, and may be of independent interest as a way of generating randomness in a constrained setting. 

The Split-Join majority protocol starts from the following idea: nodes can encode their output opinions and their relative strength as \emph{integer values}: the value is positive if the node supports a majority of $A$, and negative if the node supports a majority of $B$. The higher the absolute value, the higher the ``confidence" in the corresponding outcome. 
Whenever two nodes meet, they \emph{average} their values, rounding to integers. 
This template has been used successfully in~\cite{AGV15} to achieve consensus in poly-logarithmic time, but doing so requires a state space of size at least linear $n$, as all values between $1$ and $n$ may need to be represented. 
Here we introduce a new \emph{quantized} averaging technique, by which nodes represent their output estimates by encoding them as powers of two. Again, opinions are averaged on each interaction. We prove that our quantization preserves correctness, and allows for fast (poly-logarithmic) stabilization, while reducing the size of the state space almost exponentially. 

The Lottery-Election protocol follows the basic and common convention that every agent starts as a potential leader, 
  and whenever two leaders interact, one drops out of contention.
However, once only a constant number of potential leaders remain, 
  they take a long time to interact, implying super-linear stabilization time.
To overcome this problem,~\cite{AG15} introduced a propagation mechanism, 
  by which contenders compete by comparing their seeding, 
  and the nodes who drop out of contention assume the identity of their victor, 
  causing nodes still in contention but with lower seeding to drop out.
Here we employ synthetic coins to ``seed'' potential leaders randomly, 
  which lets us reduce the number of leaders at an accelerated rate.
This in turn reduces the maximum seeding that needs to be encoded,
  and hence the number of states required by the algorithm.

\paragraph{Implications} Our lower bound can be seen as bad news for algorithm designers, since it show that stabilization is slow even if the protocol implements a super-constant number of states per node. 
On the positive side, the achievable stabilization time improves quickly as the size of the state space nears the logarithmic threshold: in particular, fast, poly-logarithmic time can be achieved using poly-logarithmic space. 

It is interesting to note that previous work by Chatzigiannakis et al.~\cite{CMNPS11} identified $\Theta( \log \log n)$ as a space complexity  threshold in terms of the \emph{computational power} of population protocols, i.e. the set of predicates that such algorithms can compute. In particular, their results show that variants of such systems in which nodes are limited to $o( \log \log n )$ space per node are limited to only computing \emph{semilinear predicates}, whereas $O( \log  n )$ space is sufficient to compute general symmetric predicates. 
By contrast, we show a \emph{complexity} separation between algorithms which use $O( \log \log n )$ space per node, and algorithms employing $\Omega( \log n )$ space per node.

\section{Preliminaries}
\paragraph{Population Protocols} 
A population protocol is a set of $n \geq 2$ agents, or nodes,
  each executing as a deterministic state machine with states 
  from a finite set $\Lambda_n$, whose size might depend on $n$.
The agents do not have identifiers, 
  so two agents in the same state are identical and interchangeable. 
Thus, we can represent any set of agents simply 
  by the counts of agents in every state, which we call a \emph{configuration}.
Formally, a \emph{configuration} $c$ is a function 
  $c: \Lambda_n \to \mathbb{N}$, where $c(s)$ represents the 
  \emph{number of agents in state $s$ in configuration $c$}.
We let $|c|$ stand for the sum, over all states $s \in \Lambda_n$, of $c(s)$,
  which is the same as the total number of agents in configuration $c$.
For instance, if $c$ is a configuration of all agents in the system,
  then $c$ describes the global state of the system, and $|c| = n$. 

Further, we define the set $I_n$ of all allowed initial configurations of the protocol 
  for $n$ agents, a finite set of output symbols $O$, a transition function 
  $\delta_n : \Lambda_n \times \Lambda_n \rightarrow \Lambda_n \times \Lambda_n$, 
  and an output function $\gamma_n : \Lambda_n \rightarrow O$.
The system starts in one of the initial configurations $i_n \in I_n$ (clearly, $|i_n| = n$),
  and each agent keeps updating its local state following interactions with other agents, 
  according to the transition function $\delta_n$.
The execution proceeds in \emph{steps}, where in each step a new pair of agents 
  is selected uniformly at random from the set of all pairs. 
Each of the two chosen agents updates its state according to the function $\delta_n$. 

% The agents' interactions proceed according to a directed \emph{interaction graph} $G$ 
  % without self-loops, whose edges indicate possible agent interactions. 
% Usually, the graph $G$ is considered to be the complete graph on $n$ vertices 
  % (corresponding to a well-mixed chemical solution), 
% a convention we also adopt in this paper. 

A population protocol $\mathcal{P}$ will be a sequence of protocols 
  $\mathcal{P}_2, \mathcal{P}_3, \ldots,$ where for each $n \geq 2$ we have 
  $\mathcal{P}_n = (\Lambda_n, I_n, \delta_n, \gamma_n)$,
  defining the protocol states, initial configurations, transitions and 
  output mapping, respectively, for $n$ agents.
We say that a protocol is \emph{monotonic} if, for all $i \geq 1$,  
  (1) the number of states cannot decrease for higher node count $i$, 
  i.e., $|\Lambda_i| \leq |\Lambda_{i + 1}| $, and 
  (2) if the state counts are the same, then the protocol is the same, 
  i.e. if $|\Lambda_i| = |\Lambda_j|$, then $\Lambda_i =\Lambda_j$,
  $I_i = I_j$, $\delta_i = \delta_j$ and $\gamma_i = \gamma_j$.

We say that a monotonic population protocol $\mathcal{P}$ is \emph{input-additive}, if for 
  all $i_n \in I_n$ with $|\Lambda_n| = |\Lambda_{2n}|$ (which by monotonicity implies 
  $\mathcal{P}_n = \mathcal{P}_{2n}$), it holds that $i_n + i_n \in I_{2n}$, i.e. doubling the number 
  of agents in each state still leads to a valid initial configuration in a system of $2n$ agents.  

A configuration $c'$ is \emph{reachable} from a configuration $c$, 
  denoted $c \reach c'$, if there exists a sequence of consecutive steps 
  (interactions from $\delta_n$ between pairs of agents) 
  leading from $c$ to $c'$.
If the transition sequence is $p$, we will also write $c \reach_p c'$.
We call a configuration $c$ the \emph{sum of configurations} $c_1$ and $c_2$ 
  and write $c = c_1 + c_2$, iff $c(s) = c_1(s) + c_2(s)$ for all states $s \in \Lambda_n$.

\paragraph{Leader Election} 
Fix an $n \geq 2$.
In the \emph{leader election} problem, all agents start in the same initial state $A_n$, 
  i.e. $I_n = \{ i_n \}$ with $i_n(A_n) = n$. 
The output set is $O = \{\id{Win}, \id{Lose}\}$. 

We say that a configuration $c$ \emph{has a single leader} if 
  there exists some state $s \in \Lambda_n$ with $\gamma_n(s) = \id{Win}$ and $c(s) = 1$, 
  such that for any other state $s' \neq s$, $c(s') > 0$ implies $\gamma_n(s') = \id{Lose}$.
A population protocol $\mathcal{P}_n$ solves leader election 
  within $r$ steps with probability $1 - \phi$, 
  if, with probability $1 - \phi$, any configuration $c$ reachable from $i_n$ by the protocol within $\geq r$ steps has a single leader.

\paragraph{The Majority Problem}
Fix an $n \geq 2$.
In the \emph{majority problem}, we have two initial states $A_n, B_n \in \Lambda_n$,
  and $I_n$ consists of all configurations where each of the $n$ agents 
  is either in $A_n$ or $B_n$.
The output set is $O = \{\id{Win}_A, \id{Win}_B\}$. 
Given an initial configuration $i_n \in I_n$ let $a = i_n(A_n)$ and $b = i_n(B_n)$ be the count of the two initial states,  
  and let $\epsilon n = | a - b | $ denote the initial relative advantage of the majority state. 

We say that a configuration $c$ \emph{correctly outputs the majority decision} for $i_n$,
  when for any state $s \in \Lambda_n$, 
  if $a > b$ then $c(s) > 0$ implies $\gamma_n(s) = \id{Win}_A$,
  and if $b > a$ then $c(s) > 0$ implies $\gamma_n(s) = \id{Win}_B$. (The output in case of an initial tie is arbitrary between the two.) 
A population protocol $\mathcal{P}_n$ solves the majority problem
  from $i_n$ within $\ell$ steps with probability $1 - \phi$,
  if for any configuration $c$ reachable from $i_n$ by the protocol with $\geq \ell$ steps, 
  with probability $1 - \phi$, the protocol correctly outputs the majority for $i_n$. 
In this paper we consider the \emph{exact} majority task, 
  as opposed to \emph{approximate} majority~\cite{AAE08}, 
  which allows the wrong output with some probability. 

Finally, a population protocol $\mathcal{P}$ solves a task if, for every $n \geq 2$, 
the protocol instantiation $\mathcal{P}_n$ solves the task. The complexity measures for a protocol (defined below), are functions of $n$.

\paragraph{Stabilization}
A configuration $c$ of $n$ agents has a \emph{stable leader}, 
  if for all $c'$ reachable from $c$, it holds that $c'$ has a single leader.
Analogously, a configuration $c$ has a \emph{stable correct majority decision} 
  for $i_n$, if for all $c'$ with $c \reach c'$, 
  $c'$ correctly outputs the majority decision for $i_n$.
We say that a protocol \emph{stabilizes} when it reaches 
  such a stable output configuration.

\paragraph{Complexity measures} 
The above setup considers sequential interactions; 
however, interactions between pairs of distinct agents are independent, 
  and are usually considered as occurring in parallel. 
It is customary to define one unit of \emph{parallel time} as $n$ consecutive steps 
  of the protocol. 
We are interested in the expected parallel time for a population protocol to stabilize,
  i.e. the total number of (sequential) interactions from the initial configuration 
  required to reach a configuration with a stable leader 
  or with a stable correct majority decision, divided by $n$.
We also refer to this quantity as the \emph{stabilization time}.
If the expected time is finite, then we say that population protocol stably elects a leader 
  (or stably computes majority decision).
The \emph{space complexity} of a protocol is the number of states 
  in the protocol with respect to $n$, i.e. $| \Lambda_n |$.

% For simplicity of exposition, we assume that the agents have identifiers from the set 
  % $V = \{1, 2, \ldots, n\}$, 
  % although these identifiers are not known to agents, and not used by the protocols. 
% We can describe the \emph{global-state} of the system as a function 
  % $s : V \rightarrow \Lambda_n $ which gives 
  % the local state of each agent $i$ at a given point in time. 

%!TEX root = main.tex
\section{Lower Bound}
Our definition of population protocols allows the protocol to arbitrarily change its structure 
  as the number of states changes.
This generality strengthens the lower bound, and to our knowledge, 
  the definition covers all known population protocols.

\paragraph{Technical machinery}
We now lay the groundwork for our main argument by proving a set of technical lemmas.
In the following, we assume $n$ to be fixed; $\Lambda_n$ is the set of all states of the protocol. 
Let $S$ be an arbitrary set of states.
Abusing notation, we denote by $\delta_n ( S, S )$ the set of all states which can be generated 
  by transitions between states in $S$.
Then, we let $S_0$ be the set of states in the initial configurations of the protocol  
  and for all integers $k \geq 1$, we define inductively the set of states 
  $S_k = S_{k - 1} \cup \delta_n ( S_{k - 1}, S_{k - 1} ).$
Simply put, $S_k$ is the set of all states which can be generated 
  by the protocol after $k$ steps. 

Assume without loss of generality that all states in $\Lambda_n$ actually occur in some configurations 
  reachable by the protocol $\mathcal{P}_n$ from some initial configuration.
Then, it holds that $S_{|\Lambda_n| - 1} = S_{|\Lambda_n|} = \ldots,$ and $S_{|\Lambda_n| - 1} = \Lambda_n$.
We say that a configuration $c$ is $X$-rich with respect to a set of states $S$ if in the configuration $c$ all states in $S$ have count $\geq X$. 
A configuration is \emph{dense} with respect to a set of states $S$ if it is $n / M$-rich with respect to $S$, for some constant $M > 1$. 
We call an initial configuration \emph{fully dense}, 
  if it is dense with respect to the set of initial states $S_0$: practically, such configurations contain at least $n/M$ agents in each of the initial states.
We now prove the following statement, which says that all protocols with a limited state count will end up in a configuration in which all reachable states are present in large count. 
This lemma generalizes the main result of~\cite{Doty14} to a super-constant state space. 

\begin{replemma}{lem:density}
For all population protocols $A$ using $|\Lambda_n| \leq 1/2 \log \log n$ states,
  starting in a fully dense initial configuration, with probability $\geq 1 - (1 / n)^{0.99}$,
  there exists a step $j$ such that the configuration reached after $j$ steps is $n^{0.99}$-rich with respect to $\Lambda_n$. 
\end{replemma}
Fix a function $f : \mathbb{N} \to \mathbb{R}^{+}$. 
Fix a configuration $c$, and states $r_1$ and $r_2$ in $c$. 
A transition $\alpha : (r_1, r_2) \to (z_1, z_2)$ is an $f$\emph{-bottleneck} for $c$,
  if $c(r_1) \cdot c(r_2) \leq f(|c|)$. 
This bottleneck transition implies that the probability of a transition $(r_1, r_2) \to (z_1, z_2)$ is bounded. 
Hence, proving that transition sequences from initial configuration to final configurations 
  contain a bottleneck implies a lower bound on the stabilization time.
Conversely, the next lemma shows that, if a protocol stabilizes fast, then it must be possible 
  to stabilize using a transition sequence which does not contain any bottleneck. 
\begin{replemma}{lem:bottlefree}
Let $\mathcal{P}$ be a population protocol with $|\Lambda_n| \leq 1/2\log{\log{n}}$ states,
  and let $D_n \subseteq I_n$ be a non-empty set of fully dense initial configurations. 
Fix a function $f$. 
Assume that for sufficiently large $n$, $\mathcal{P}$ stabilizes in expected time 
  $o\left(\frac{n}{f(n)|\Lambda_n|^2}\right)$ from all $i_n \in D_n$.
Then, for all sufficiently large $m \in \mathbb{N}$ there is a configuration $x_m$ with $m$ agents, 
  reachable from some $i \in D_m$ and a transition sequence $p_m$, such that: 
\begin{enumerate}
 \item $x_m(s) \geq m^{0.99}$ for all $s \in \Lambda_{m}$,
 \item $x_m \reach_{p_m} y_m$, where $y_m$ is a stable output configuration, and
 \item $p_m$ has no $f$-bottleneck.
\end{enumerate}
\end{replemma}
Hence, fast stabilization from a sufficiently rich configuration requires 
  the existence of a bottleneck-free transition sequence.
The next \emph{transition ordering lemma}, due to~\cite{CCDS14}, 
  proves a property of such a transition sequence:  
  there exists an order over all states whose counts decrease more than some set threshold such that, 
  for each of these states $d_j$, 
  the sequence contains at least a certain number of a specific transition that consumes $d_j$, 
  but does not consume or produce any states $d_1, \ldots, d_{j-1}$ that are earlier in the ordering.
\begin{replemma}{lem:ordering}
Fix $b \in \mathbb{N}$, and let $B = |\Lambda_n|^2 \cdot b + |\Lambda_n| \cdot b$. 
Let $x, y : \confspacen$ be configurations of $n$ agents,
  such that for all states $s \in \Lambda_n$ we have $ x(s) \geq B^2$ and $x \reach_q y$ via 
  a transition sequence $q$ without a $B^2$-bottleneck.
Define
\begin{equation*}
\Delta = \{d \in \Lambda_n \mid y(d) \leq b\}
\end{equation*}
to be the set of states whose count in configuration $y$ is at most $b$. 
Then there is an order $\Delta = \{d_1, d_2,\ldots, d_k\}$, such that,
  for all $j \in \{1, \ldots, k\}$, there is a transition $\alpha_j$ of the form 
  $(d_j, s_j) \rightarrow (o_j, o_j')$ with $s_j, o_j, o_j' \not\in \{d_1, \ldots, d_j\}$,
  and $\alpha_j$ occurs at least $b$ times in $q$.
\end{replemma}

\paragraph{The Lower Bound Argument}
Given a population protocol $\mathcal{P}_n$, a configuration $c : \confspacen$ 
  and a function $g: \mathbb{N} \to \mathbb{N}^{+}$, we define the sets 
  $\Gamma_g(c) = \{s \in \Lambda_n \mid c(s) > g(|c|)\}$ and 
  $\Delta_g(c) = \{s \in \Lambda_n \mid c(s) \leq g(|c|)\}$.
  Intuitively, $\Gamma_g(c)$ contains states above a certain count, while $\Delta_g(c)$ contains state below that count. 
Notice that $\Gamma_g(c) = \Lambda_n \setminus \Delta_g(c)$. 

% \iffalse
% Our framework is based on the lower bound argument of~\cite{DS15}, but differs from it in a few key points. 
% Specifically, the crux of the argument in~\cite{DS15} is the existence of a set $\Gamma$ of \emph{unbounded states},
%   whose counts grow unbounded in stable configurations, as the number of agents $n$ tends to infinity. 
%   This property is then used  to construct a leaderless reachable configuration where only states in $\Gamma$ have non-zero counts. 
% The unbounded property of $\Gamma$ is used multiple times throughout the proof,
%   and together with Dickson's Lemma it establishes the existence of a sequence of 
%   stable configurations with growing counts for all states in $\Gamma$.
% In turn, this implies that the leaderless configuration must be stable,
%   as stability is closed downward\footnote{A configuration $c$ must be stable if another configuration $c'$ is stable with counts of all states not less than in $c$.}, completing the contradiction argument.

% This argument breaks in our case: Dickson's Lemma does not apply for non-constant state counts,
%   and even the definition of $\Gamma$ becomes problematic.
% A notational issue is that we consider sequence of population protocols $\mathcal{P}_n$ 
%   for different values $n$, without requiring that they use the same state space.
% Even if they did, the set $\Gamma$ according to its original definition could just be empty\footnote{Consider for instance when the stable configurations for larger $n$ increase the counts for the newest states.}.
% \fi

The proof strategy is to first show that if a protocol stabilizes ``too fast,''
  then it can also reach configurations where all agents are in states in $\Gamma_g(c)$.
Recall that a configuration $c$ is defined as a function $\Lambda_n \to \mathbb{N}$.
Let $S \subseteq \Lambda_n$ be some subset of states such that 
  all agents in configuration $c$ are in states from $S$, 
  formally, $\{s \in \Lambda_n \mid c(s) > 0\} \subseteq S$.
For notational convenience, we will write $c_{> 0} \subseteq S$ to mean the same.
\begin{theorem}
\label{thm:surgery}
Let $\mathcal{P}$ be a monotonic input-additive population protocol 
  using $|\Lambda_n| \leq 1/2 \log{\log{n}}$ states.
Let $g: \mathbb{N} \to \mathbb{N}^{+}$ be a function such that $g(n) \geq 2^{|\Lambda_n|}$ for all $n$ and 
  $6^{|\Lambda_n|} \cdot |\Lambda_n|^2 \cdot g(n) = o(n^{0.99})$.

Suppose $\mathcal{P}$ stabilizes in 
  $o \left( \frac{n}{(6^{|\Lambda_n|} \cdot |\Lambda_n|^3 \cdot g(n))^2} \right)$
  time from any initial configuration,
  and that it has fully dense initial configurations for all sufficiently large $n$.
Then, for infinitely many $m$ with $|\Lambda_m| = \ldots = |\Lambda_{3m}|$, 
  there exists an initial configuration of $2m$ agents $i \in I_{2m}$ and 
  stable output configuration $y$ of $m$ agents, such that for any configuration $u$ that satisfies 
  the boundedness predicate $\mathcal{B}(m, y)$ below, it holds that $i + u \reach z$ where $z_{> 0} \subseteq \Gamma_g(y)$.

We say that a configuration $u$ satisfies the boundedness predicate $\mathcal{B}(m, y)$ if 1) it contains between $0$ and $m$ agents, 
  2) all agents in $u$ are in states from $\Delta_g(y)$, i.e. $u_{>0} \subseteq \Delta_g(y)$, 
  and 3) $y(s) + u(s) \leq g(m)$ for all states $s \in \Delta_g(y)$.
\end{theorem}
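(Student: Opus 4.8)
The statement asserts that a fast-stabilizing protocol can, starting from an initial configuration of $2m$ agents plus any "small residue" $u$ (bounded by the predicate $\mathcal{B}(m,y)$), reach a configuration $z$ in which only high-count states from $\Gamma_g(y)$ are occupied. My plan is to chain together the three technical lemmas already established. First, apply \lemmaref{lem:bottlefree} with $f(n) = (6^{|\Lambda_n|}\cdot|\Lambda_n|^3\cdot g(n))^2 / |\Lambda_n|^2$ (or a comparable choice matching the stated time bound) to obtain, for infinitely many $m$ with $|\Lambda_m|=\dots=|\Lambda_{3m}|$, a rich configuration $x_m$ with $x_m(s)\ge m^{0.99}$ for all $s$, a stable output configuration $y_m=:y$, and a bottleneck-free transition sequence $p_m$ with $x_m \reach_{p_m} y$. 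The subsequence of $m$'s where the state count is constant on $[m,3m]$ is guaranteed to be infinite by monotonicity (the state count is nondecreasing and $\le \frac12\log\log n$, hence eventually locally constant on long intervals).

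\medskip
\noindent\textbf{Extracting the ordered erasing transitions.} Next, apply \lemmaref{lem:ordering} to the pair $(x_m, y)$ with a suitable parameter $b$ chosen so that $b \ge g(m)$ and $B^2 = (|\Lambda_m|^2 b + |\Lambda_m| b)^2 \le m^{0.99} \le x_m(s)$; the hypothesis $6^{|\Lambda_m|}\cdot|\Lambda_m|^2\cdot g(m) = o(m^{0.99})$ is exactly what makes room for this. Since $p_m$ has no $B^2$-bottleneck, \lemmaref{lem:ordering} yields an ordering $\Delta_g(y) = \Delta = \{d_1,\dots,d_k\}$ (note $\Delta_g(y)$ is a subset of the lemma's $\Delta$ once $b\ge g(m)$, so I may restrict attention to it) together with transitions $\alpha_j : (d_j,s_j)\to(o_j,o_j')$, each occurring $\ge b$ times in $p_m$, with $s_j,o_j,o_j'\notin\{d_1,\dots,d_j\}$ — i.e. $\alpha_j$ consumes a copy of $d_j$ and touches only states that are "later" or outside $\Delta_g(y)$.

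\medskip
\noindent\textbf{The surgery: assembling $i+u \reach z$.** This is the heart of the argument and the main obstacle. Take $i := x_m + x_m$ — this lies in $I_{2m}$ by input-additivity, since $|\Lambda_m|=|\Lambda_{2m}|$ implies $\mathcal{P}_m = \mathcal{P}_{2m}$ and $x_m$ is reachable from some $i'\in D_m$, so $x_m + x_m$ is reachable from $i'+i'\in I_{2m}$; one argues the reachable $z_{>0}\subseteq\Gamma_g(y)$ property on the configuration after first running both copies to $y+y$. Actually the cleaner route: run one copy of $x_m$ along $p_m$ to $y$, and use the \emph{other} copy of $x_m$ together with $u$ as a reservoir of extra agents in $\Gamma_g$-states. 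The plan is then to process $d_1,\dots,d_k$ in order: because $\alpha_1$ occurs $\ge b \ge g(m)\ge y(d_1)+u(d_1)$ times, and each firing of $\alpha_1$ consumes one $d_1$ and produces only states outside $\{d_1\}$, I can fire $\alpha_1$ enough times to drive the $d_1$-count to zero; crucially the needed partner copies of $s_1$ are available in abundance since $s_1\in\Gamma_g(y)$ has count $\ge g(m)$ (inflated further by the reservoir), and no later $\alpha_j$ re-creates $d_1$. Inductively, after eliminating $d_1,\dots,d_{j-1}$, the transition $\alpha_j$ involves no eliminated state, so firing it $\le g(m)$ times clears $d_j$ without disturbing the cleared states. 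The boundedness predicate $\mathcal{B}(m,y)$ — specifically $y(s)+u(s)\le g(m)$ for $s\in\Delta_g(y)$, and $u$ supported on $\Delta_g(y)$ — is precisely what guarantees the total mass to be erased in each $d_j$ is at most $g(m)\le b$, so the available transition copies suffice. After all $k$ rounds, no state of $\Delta_g(y)$ is occupied, i.e. the resulting $z$ satisfies $z_{>0}\subseteq\Gamma_g(y)$.

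\medskip
\noindent\textbf{Where the difficulty concentrates.** The delicate point is bookkeeping the agent counts through the surgery: I must verify that (a) enough copies of each partner state $s_j$ survive all earlier rounds — this is where the doubled configuration $x_m+x_m$ and the reservoir pay off, keeping $\Gamma_g$-counts well above the $O(g(m))$ demands; (b) firing $\alpha_j$ never needs a state that a later round will need in larger quantity — handled by the ordering property $s_j,o_j,o_j'\notin\{d_1,\dots,d_j\}$, which permits $\alpha_j$ to touch $d_{j+1},\dots$, but those are dealt with afterwards anyway; and (c) the factor $6^{|\Lambda_m|}$ in the hypotheses — I expect this arises from a union-style or branching overhead in counting how much "collateral" each erasing round can inject into other states, possibly compounding over the $\le\frac12\log\log m$ rounds, so the slack $6^{|\Lambda_m|}\cdot|\Lambda_m|^2\cdot g(m)=o(m^{0.99})$ must absorb the worst-case accumulation. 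Making this accounting rigorous — tracking a potential/invariant on the vector of counts that stays positive through all $k$ rounds — is the main technical obstacle.
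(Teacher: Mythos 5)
Your skeleton matches the paper's: obtain a rich $x_m$, a stable $y$, and a bottleneck-free $p_m$ from \lemmaref{lem:bottlefree}; extract the ordered transitions $\alpha_j$ from \lemmaref{lem:ordering}; double the population via input-additivity; and eliminate the $\Delta_g(y)$ states one by one. But the sketch breaks down at the two points where the paper has to work hardest. First, you assert that the partner state $s_1$ (and, implicitly, each $s_j$) is ``available in abundance since $s_1\in\Gamma_g(y)$ has count $\ge g(m)$.'' The transition ordering lemma gives no such thing: it only guarantees $s_j\notin\{d_1,\dots,d_j\}$, so $s_j$ may well be some later $d_{j'}$, i.e.\ a state whose count in $y$ (and in $y+u$) is at most $b$. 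Appending $\approx g(m)$ firings of $\alpha_j$ to the end of $p_m$ can therefore demand more copies of $s_j$ than exist, and the collateral agents that earlier rounds dump into $d_{j+1},\dots,d_k$ compound the deficit by a factor growing like $2^{|\Lambda_m|}$. This is exactly why the paper introduces the auxiliary configuration $e$ in \claimref{clm:surgery1}: the elimination only works from $e+u+x_m$, where $e$ supplies up to $2^{|\Lambda_m|}\cdot g(m)$ extra agents in precisely these scarce states.

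Second, your ``reservoir'' copy of $x_m$ has $\ge m^{0.99}$ agents in \emph{every} state, including every state of $\Delta_g(y)$; your procedure fires each $\alpha_j$ only about $g(m)$ times, which clears the $\le g(m)$ agents coming from $y+u$ but leaves the reservoir's own $m^{0.99}$ agents per $\Delta_g(y)$ state untouched, so the final configuration would violate $z_{>0}\subseteq\Gamma_g(y)$. The paper's \claimref{clm:surgery2} is the missing piece: the second copy of $x_m$ is not left idle as a reservoir but is itself run through a modified version of $p_m$ --- with occurrences of the $\alpha_j$ both added and removed, and a small buffer $p$ of temporarily ``borrowed'' agents whose counts may go negative mid-surgery --- so that it lands exactly on $w+e^{\Delta_g}$ with $w_{>0}\subseteq\Gamma_g$ and with $e^{\Delta_g}$ being precisely the fuel that \claimref{clm:surgery1} then consumes. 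Without an analogue of this second surgery (or some other mechanism that simultaneously manufactures the scarce partners and empties the reservoir's $\Delta_g$ states), the argument does not close. A minor additional slip: $i$ must be the initial configuration $i_m+i_m\in I_{2m}$, not $x_m+x_m$, which is merely reachable from it.
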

\begin{proof}
For simplicity, set $b(n) = (6^{|\Lambda_n|} + 2^{|\Lambda_n|}) \cdot g(n)$,
   $b_2(n) = |\Lambda_n|^2 \cdot b(n) + |\Lambda_n| \cdot b(n)$, and $f(n) = (b_2(n))^2$.
The theorem statement implies that the protocol stabilizes in 
  $o\left(\frac{n}{f(n) |\Lambda_n|^2}\right)$ time.
By~\lemmaref{lem:bottlefree}, for all sufficiently large $m$ we can find 
  configurations of $m$ agents $i_m, x_m, y : \confspacem$, such that:
\begin{itemize}
\item $i_m \in I_m$ is a fully dense initial configuration of $m$ agents. 
\item $i_m \reach x_m \reach_{p_m} y$, where $y$ is a stable final configuration, as desired, 
  and the transition sequence $p_m$ does not contain an $f$-bottleneck (i.e. a $(b_2)^2$-bottleneck).
\item $\forall s \in \Lambda_m: x_m(s) \geq b_2(m)$. (Here, we use the assumption on the function $g$.) 
\end{itemize}

Moreover, because $|\Lambda_n| \leq 1/2 \log{\log{n}}$ for sufficiently large $n$, for infinitely many $m$ 
  it also additionally holds that $|\Lambda_m| = |\Lambda_{m+1}| = \ldots = |\Lambda_{3m}|$ 
  (otherwise $|\Lambda_n|$ would grow at least logarithmically in $n$).
This, due to monotonicity implies that
  population protocols $\mathcal{P}_m, \mathcal{P}_{m+1}, \ldots, \mathcal{P}_{3m}$ are all the same. 

Consider any such $m$.
Then, we can invoke~\lemmaref{lem:ordering} with $x_m$, $y$, transition sequence $p_m$
  and parameter $b = b(m)$.
The definition of $\Delta$ in the lemma statement matches $\Delta_b(y)$, and $b_2$ matches $b_2(m)$.
Thus, we get an ordering of states $\Delta_b(y) = \{d_1, d_2, \ldots, d_k\}$ and a corresponding 
  sequence of transitions $\alpha_1, \alpha_2, \ldots, \alpha_k$, where each $\alpha_j$ is of the form
  $(d_j, s_j) \to (o_j, o_j')$ with $s_j, o_j, o_j' \not \in \{d_1, d_2, \ldots, d_j\}$.
Finally, each transition $\alpha_j$ occurs at least $b(m) = (6^{|\Lambda_m|} + 2^{|\Lambda_m|}) \cdot g(m)$ 
  times in $p_m$.

We will now perform a set of transformations on the transition sequence $p_m$, called \emph{surgeries}, 
  with the goal of converging to a desired type of configuration. 
The next two claims specify these transformations, which are similar to the surgeries used in~\cite{DS15}, 
  but with some key differences due to configuration $u$ and 
  the new general definitions of $\Gamma$ and $\Delta$. 
The proofs are provided in the appendix.
Configuration $u$ is defined as in the theorem statement.
For brevity, we use $\Gamma_g = \Gamma_g(y)$, $\Delta_g = \Delta_g(y)$, 
  $\Gamma_b = \Gamma_b(y)$ and $\Delta_b = \Delta_b(y)$.
\begin{repclaim}{clm:surgery1}
There exist configurations $e : \confspacem$ 
  and $z'$ with $z'_{>0}  \subseteq \Gamma_g$, such that $e + u + x_m \reach z'$.
Moreover, we have an upper bound on the counts of states in $e$:
  $\forall s \in \Lambda_m: e(s) \leq 2^{|\Lambda_m|} \cdot g(m)$.
\end{repclaim}
\noindent The configuration $e + u + x_m$ has at most 
  $2^{|\Lambda_m|} \cdot g(m) \cdot |\Lambda_m| + g(m) + m$ agents, which is less than $3m$ for sufficiently 
  large $m$.
The state transitions used here and everywhere below are from $\delta_m = \ldots = \delta_{3m}$.
% In the arguments below, we will consider configurations and reachability with different number of agents,

For any configuration $e : \confspacem$, let $e^{\Delta}$ be its projection onto $\Delta$, 
  i.e. a configuration consisting of only the agents from $e$ in states $\Delta$.
We can define $e^{\Gamma}$ analogously. By definition, $e^{\Gamma} = e \setminus e^{\Delta}$.
\begin{repclaim}{clm:surgery2}
Let $e$ be any configuration satisfying $\forall s \in \Lambda_m: e(s) \leq 2^{|\Lambda_m|} \cdot g(m)$.
There exist configurations $p $ and $w$, such that 
  $p_{>0} \subseteq \Delta_b$, $w_{>0} \subseteq \Gamma_g$ and $p + x_m \reach p + w + e^{\Delta_g}$.
Moreover, for counts in $p$, we have that $\forall s \in \Lambda_m: p(s) \leq b(m)$
  and for counts in $w^{\Gamma_g}$, we have $\forall s \in \Gamma_g: w(s) \geq 2^{|\Lambda_m|} \cdot g(m)$.
\end{repclaim} 
Let our initial configuration $i$ be $i_m + i_m$, which because $i_m$ is fully dense and 
  because of input-additivity, must also be a fully dense initial configuration from $I_{2m}$.
Trivially, $i \reach x_m + x_m$.
Let us apply~\claimref{clm:surgery2} with $e$ as defined in~\claimref{clm:surgery1}, 
  but use one $x_m$ instead of $p$.
This is possible because $\forall s \in \Lambda_m: x(s) \geq b_2(m) \geq b(m) \geq p(s)$.
Hence, we get $x_m + x_m \reach x_m + w + e^{\Delta_g} = x_m + e + (w - e^{\Gamma_g})$.
The configuration $w - e^{\Gamma_g}$ is well-defined because both $w$ and $e^{\Gamma_g}$ contain agents
  in states in $\Gamma_g$, with each count in $w$ being larger or equal to the respective count in $e^{\Gamma_g}$,
  by the bounds from the claims.

Finally, by~\claimref{clm:surgery1}, we have $u + x_m + e + (w - e^{\Gamma_g}) \reach z' + (w - e^{\Gamma_g})$. 
We denote the resulting configuration (with all agents in states in $\Gamma_g$) by $z$, 
 and have $i \reach z$, as desired.
\end{proof}
\noindent The following lemma is a technical tool that critically relies on 
  our  definitions of $\Gamma$ and $\Delta$.
\begin{replemma}{lem:stability}
Consider a population protocol in a system with any fixed number of agents $n$, 
  and an arbitrary fixed function $h:\mathbb{N} \to \mathbb{N}^{+}$ such that $h(n) \geq 2^{|\Lambda_n|}$.
Let $\xi(n) = 2^{|\Lambda_n|}$.
For all configurations $c, c' : \confspacen$, such that 
  $c_{>0} \subseteq \Gamma_h(c) \subseteq \Gamma_{\xi}(c')$, 
  any state producible from $c$ is also producible from $c'$. 
Formally, for any state $s \in \Lambda_n$, 
  $c \reach y$ with $y(s) > 0$ implies $c' \reach y'$ with $y'(s) > 0$.
\end{replemma}
\noindent Now we can prove the lower bounds on majority and leader election.
\begin{corollary}
\label{crl:boundle}
Any monotonic population protocol with $|\Lambda_n| \leq 1/2 \log{\log{n}}$ states 
  for all sufficiently large number of agents $n$ that stably elects at least one and at most $\ell(n)$ leaders, 
  must take $\Omega \left( \frac{n}{144^{|\Lambda_n|} \cdot |\Lambda_n|^6 \cdot \ell(n)^2} \right)$ 
  expected parallel time to stabilize.
\end{corollary}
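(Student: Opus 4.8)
The plan is to derive the corollary by combining Theorem~\ref{thm:surgery} with Lemma~\ref{lem:stability}, and then engineering a contradiction with correctness of leader election. First I would suppose, for contradiction, that there is a monotonic population protocol $\mathcal{P}$ with $|\Lambda_n|\le \frac12\log\log n$ that stably elects between $1$ and $\ell(n)$ leaders but stabilizes in $o\!\left(\frac{n}{144^{|\Lambda_n|}\cdot|\Lambda_n|^6\cdot\ell(n)^2}\right)$ expected parallel time. Note $144^{|\Lambda_n|} = (6^{|\Lambda_n|})^2\cdot 4^{|\Lambda_n|}$, and $4^{|\Lambda_n|}\ge |\Lambda_n|^2$ for reasonable ranges, so I would choose the threshold function $g$ in Theorem~\ref{thm:surgery} to be something like $g(n)=2^{|\Lambda_n|}\cdot\ell(n)$ (or a small polynomial multiple thereof), and check that this choice satisfies the two hypotheses $g(n)\ge 2^{|\Lambda_n|}$ and $6^{|\Lambda_n|}\cdot|\Lambda_n|^2\cdot g(n)=o(n^{0.99})$ — both hold because $|\Lambda_n|\le\frac12\log\log n$ makes $6^{|\Lambda_n|}$ only polylogarithmic and $\ell(n)$ is at most the (polylogarithmic) number of states or can be assumed so in the relevant regime. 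The assumed stabilization time is then $o\!\left(\frac{n}{(6^{|\Lambda_n|}\cdot|\Lambda_n|^3\cdot g(n))^2}\right)$, matching the hypothesis of Theorem~\ref{thm:surgery} since leader election has a (trivially fully dense, all agents in $A_n$) family of initial configurations.

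Next, I would apply Theorem~\ref{thm:surgery} to obtain, for infinitely many $m$ with $|\Lambda_m|=\dots=|\Lambda_{3m}|$, an initial configuration $i\in I_{2m}$ and a stable output configuration $y$ of $m$ agents such that $i+u\reach z$ with $z_{>0}\subseteq\Gamma_g(y)$ for every $u$ satisfying the boundedness predicate $\mathcal{B}(m,y)$. The key observation is that $y$ is a stable \emph{correct} configuration for leader election of $m$ agents, so it contains at least one and at most $\ell(m)$ agents in \id{Win} states, and every such leader state $s$ has $y(s)\le\ell(m)\le g(m)$, hence lies in $\Delta_g(y)$ — i.e. \emph{all leader states are low-count states}. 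The idea now is to pick $u$ to be a configuration that, together with $y$, still respects the per-state cap $g(m)$ on $\Delta_g(y)$ states (e.g. $u=\mathbf{0}$, or a small configuration adding a few agents in $\Delta_g$ states without exceeding the cap; in fact $u=\mathbf{0}$ already satisfies $\mathcal{B}(m,y)$). Then $i\reach z$ with all agents of $z$ in $\Gamma_g(y)$, and in particular \emph{no agent of $z$ is in a leader state}.

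The final step is to turn this into a contradiction via stability. Since $i+i_m\dots$ — more precisely, $i=i_m+i_m$ has $2m$ agents and $z$ is reachable from it with $z_{>0}\subseteq\Gamma_g(y)$, I want to argue that from $z$ one can never produce a configuration with a leader, contradicting that the protocol stably (and correctly) elects at least one leader from $i\in I_{2m}$. Here Lemma~\ref{lem:stability} is the tool: I would set up $c$ to be a configuration with $c_{>0}\subseteq\Gamma_g(y)$ whose counts I can freely inflate (the lemma is about \emph{producibility} of states, which is monotone under adding agents), take $h=g$ so $\Gamma_h(c)=\Gamma_g(y)$, and note $\Gamma_g(y)\subseteq\Gamma_\xi(c')$ when $c'$ is a configuration with enough agents in each $\Gamma_g(y)$-state (at least $\xi(m)=2^{|\Lambda_m|}$ of each), which is arranged because on the $y$-side those states already have count $>g(m)\ge 2^{|\Lambda_m|}$ and reachability from $i$ gives us room. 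The conclusion is that any state producible from $z$ is producible from $y$; since $y$ is a stable correct leader configuration, no \emph{additional} leader state (beyond the single one in $y$) is producible from $y$, and in a more careful accounting $z$ and $y$ have the same set of producible states — but $z$ has \emph{no} leader, so $z$ is a leaderless configuration from which no leader can ever appear, stabilized. Yet $z$ is reachable from the valid initial configuration $i\in I_{2m}$, contradicting stable leader election.

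The main obstacle I anticipate is the bookkeeping in the last step: making the set-inclusion chain $z_{>0}\subseteq\Gamma_g(y)\subseteq\Gamma_g(c)\subseteq\Gamma_\xi(c')$ (or whatever precise instantiation is needed) hold simultaneously with $c$ reachable from $i$ and $c'$ being (essentially) $y$, so that Lemma~\ref{lem:stability} genuinely transfers non-producibility of leaders from the stable $y$ back to $z$. One has to be careful that $\Gamma_g(y)$ is computed with respect to $y$'s counts while $z$'s own counts may differ, and that the "$u$" slack is used up consistently — this is exactly the role of the boundedness predicate $\mathcal{B}(m,y)$, and getting the quantifiers right (which configuration is $c$, which is $c'$, and why the count hypotheses of Lemma~\ref{lem:stability} hold) is the delicate part. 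The arithmetic verifying that the exponent $144^{|\Lambda_n|}\cdot|\Lambda_n|^6$ in the corollary absorbs the $(6^{|\Lambda_n|}\cdot|\Lambda_n|^3\cdot g(n))^2$ from Theorem~\ref{thm:surgery} with the chosen $g$ (up to the $\ell(n)^2$ factor) is routine by comparison.
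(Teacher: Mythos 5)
Your setup is exactly the paper's: the same choice $g(n)=2^{|\Lambda_n|}\cdot\ell(n)$, the same application of Theorem~\ref{thm:surgery} with $u=\mathbf{0}$, the same observation that every leader state has count at most $\ell(m)\le g(m)$ in $y$ and hence lies in $\Delta_g(y)$, so that $z$ is leaderless. The arithmetic matching $(6^{|\Lambda_n|}|\Lambda_n|^3 g(n))^2$ to $144^{|\Lambda_n|}|\Lambda_n|^6\ell(n)^2$ is also right.

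The gap is in your final step, and it is not just bookkeeping. You propose to conclude via Lemma~\ref{lem:stability} that ``any state producible from $z$ is producible from $y$,'' and then to derive a contradiction from the claim that ``no additional leader state (beyond the single one in $y$) is producible from $y$.'' But Lemma~\ref{lem:stability} speaks only about \emph{whether some agent can ever occupy a state}, not about counts, and a leader state \emph{is} trivially producible from $y$ ($y$ already contains one), so transferring producibility of a leader from $z$ back to (a configuration derived from) $y$ yields no contradiction by itself. The paper's actual argument is a counting argument that you are missing: if a leader were producible from $z$, then by Lemma~\ref{lem:stability} a leader is producible from a configuration $c'$ consisting of just $2^{|\Lambda_m|}$ agents in each state of $\Gamma_g(y)$, via a transition sequence touching only those agents. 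Because every state of $\Gamma_g(y)$ has count at least $2^{|\Lambda_m|}\cdot\ell(m)$ in $y$, one can embed $\ell(m)$ \emph{pairwise disjoint} copies of $c'$ inside $y$, all consisting of non-leader agents, and run the leader-producing sequence independently on each copy. Together with the leader already present in $y$ (which lies outside all the copies), this reaches a configuration with $\ell(m)+1$ leaders from $y$, contradicting that $y$ is stable with at most $\ell(m)$ leaders. This multiplicity trick is the entire reason the factor $\ell(n)$ appears in the definition of $g$; in your version of the last step that factor does no work, which is a sign the argument as you stated it cannot close.
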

\begin{proof}
We set $g(n) = 2^{|\Lambda_n|} \cdot \ell(n)$.
In the initial configurations for leader election all agents are in the same starting state.
Hence, all initial configurations are fully dense, 
  and all monotonic population protocols for leader election have to be input-additive. 

Assume for contradiction that the protocol stabilizes in parallel time 
  $o \left( \frac{n}{144^{|\Lambda_n|} \cdot |\Lambda_n|^6 \cdot \ell(n)^2} \right)$.
For all $n$, $I_n$ contains the only initial fully dense configuration with $n$ agents 
  in the same initial state.
Using~\theoremref{thm:surgery} and setting $u$ to be a configuration of zero agents, 
  we can find infinitely many configurations $i$ and $z$ of $2m$ agents, such that 
  (1) $i \reach z$,
  (2) $i \in I_{2m}$,
  (3) $|\Lambda_m| = |\Lambda_{2m}|$ and by monotonicity, 
  the same protocol is used for all number of agents between $m$ and $2m$, 
  (4) $z_{>0} \subseteq \Gamma_g(y)$, i.e. all agents in $z$ are in states that each have counts 
    of at least $2^{|\Lambda_m|} \cdot \ell(m)$ in some stable output configuration $y$ (of $|y| = m$ agents).

Because $y$ is a stable output configuration of a protocol that elects at most $\ell(m)$ leaders,
  none of these states in $\Gamma_g(y)$ that are present in strictly larger counts 
  ($2^{|\Lambda_m|} \cdot \ell(m) > \ell(m)$) in $y$ and $z$ can be leader states
  (i.e. $\gamma_m = \gamma_{2m}$ maps these states to output $\id{Lose}$).
Therefore, the configuration $z$ does not contain a leader.
This is not sufficient for a contradiction, because a leader election protocol may well pass through a 
  leaderless configuration before stabilizing to a configuration with at most $\ell(m)$ leaders.
We prove below that any configuration reachable from $z$ must also have zero leaders.
This implies an infinite time on stabilization from a valid initial configuration $i$ (as $i \reach z$) 
  and completes the proof by contradiction.

If we could reach a configuration from $z$ with an agent in a leader state, then by~\lemmaref{lem:stability},
  from a configuration $c'$ that consists of $2^{|\Lambda_m|}$ agents in each of the states in $\Gamma_g(y)$,
  it is also possible to reach a configuration with a leader, let us say through transition sequence $q$.
Recall that the configuration $y$ contains at least $2^{|\Lambda_m|} \cdot \ell(m)$ 
  agents in each of these states in $\Gamma_g(y)$, hence there exist disjoint configurations
  $c_1' \subseteq y$, $c_2' \subseteq y$, etc, $\ldots, c_{\ell(m)}' \subseteq y$ contained in $y$ and 
  corresponding transition sequences $q_1, q_2, \ldots, q_{\ell(m)}$, 
  such that $q_j$ only affects agents in $c_j'$ and leads one of the agents in $c_j'$ to become a leader.
Configuration $y$ is a output configuration so it contains at least one leader agent already,
  which does not belong to any $c_j'$ because as mentioned above, 
  all agents in $c_j'$ are in non-leader states.
Therefore, it is possible to reach a configuration from $y$ with $\ell(m) + 1$ leaders
  via a transition sequence $q_1$ on the $c_1'$ component of $y$, 
  followed by $q_2$ on $c_2'$, etc, $q_{\ell(m)}$ on $c_{\ell(m)}'$, 
  contradicting that $y$ is a stable output configuration.
\end{proof}
\noindent The proof of the majority lower bound follows similarly, and is deferred to the Appendix.
\begin{repcorollary}{crl:boundmaj}
Any monotonic population protocol with $|\Lambda_n| \leq 1/2 \log{\log{n}}$ states 
  for all sufficiently large number of agents $n$ that stably computes correct majority decision 
  for initial configurations with majority advantage $\epsilon n$, must take 
  $\Omega \left( \frac{n}{36^{|\Lambda_n|} \cdot |\Lambda_n|^6 \cdot \max(2^{|\Lambda_n|},\epsilon n)^2} \right)$ 
  expected parallel time to stabilize.
\end{repcorollary}
%!TEX root = main.tex
\section{Synthetic Coin Flips}
\label{sec:synthetic}
The state transition rules in population protocols are deterministic,
  i.e. the interacting nodes do not have access to random coin flips.
In this section, we introduce a general technique that extracts randomness from the schedule 
  and after only constant parallel time, allows the interactions to rely on close-to-uniform synthetic coin flips.  
This turns out to be an useful gadget for designing efficient protocols.
%For instance, we show how to easily estimate the number of nodes in the system,
  %and more importantly, our leader election algorithm also relies on synthetic coins.

Suppose that every node in the system has a boolean parameter \id{coin},
  initialized with zero.
This extra parameter can be maintained independently of the rest of the protocol, 
  and only doubles the state space. 
When agents $x$ and $y$ interact, they both \emph{flip} the values of their coins. 
Formally, $x'.\id{coin} \gets 1-x.\id{coin}$ and $y'.\id{coin} \gets 1 - y.\id{coin}$,
  and the update rule is fully symmetric.

The nodes can use the \id{coin} value of the interaction partner as 
  a random bit in a randomized algorithm.
Clearly, these bits are not independent or uniform.
However, we prove that with high probability
  the distribution of \id{coin} quickly becomes close to uniform and remains that way.
  We use the concentration properties of random walks on the hypercube, 
   analyzed previously in various other contexts, e.g.~\cite{AR16}. 
   We also note that a similar algorithm is used by Laurenti et al.~\cite{LCK16} to generate randomness in chemical reaction networks, although they do not prove convergence bounds. 
\begin{theorem}
\label{thm:coin}
For any $i \geq 0$, let $X_i$ be the number of coin values equal to one in the system after $i$ interactions. 
Fix interaction index $k \geq \alpha n$ for a fixed constant $\alpha \geq 2$. 
For all sufficiently large $n$, we have that 
  $\Pr[|X_k - n/2| \geq n/2^{4\alpha} ] \leq 2 \exp(-\alpha \sqrt{n} /8)$.
\end{theorem}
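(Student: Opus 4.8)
The plan is to recast the coin dynamics so that the $n$ per-node contributions to $X_k$ are driven by essentially independent sources of randomness, compute the mean exactly, and then control the mild dependence between nodes with a Doob martingale whose increments decay geometrically.

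First I would use the following elementary observation: node $v$ flips its coin \emph{exactly} when it participates in an interaction, and all coins start at $0$, so the coin of $v$ after $k$ interactions equals the parity of the number of the first $k$ interactions in which $v$ participates. Hence $X_k=\sum_v B_v$, where $B_v$ is the indicator that $v$'s participation count is odd. Writing $W_1,\dots,W_k$ for the interactions, which are i.i.d.\ uniform over the $\binom n2$ unordered pairs, each $W_j$ contains $v$ with probability $2/n$, so $v$'s participation count is $\mathrm{Binomial}(k,2/n)$; using $\mathbb{E}[(-1)^{\mathrm{Binomial}(m,q)}]=(1-2q)^m$ we get $\Pr[B_v=1]=\tfrac12\bigl(1-(1-4/n)^k\bigr)$, and therefore
\[
\mathbb{E}[X_k]=\frac n2\Bigl(1-(1-4/n)^k\Bigr),\qquad \Bigl|\mathbb{E}[X_k]-\tfrac n2\Bigr|=\frac n2(1-4/n)^k\le \frac n2 e^{-4\alpha}\le \frac{n}{2^{4\alpha+1}},
\]
the last two steps using $k\ge\alpha n$ and $e\ge 2$. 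So it remains to show that $X_k$ concentrates about its mean to within $n/2^{4\alpha+1}$.

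For the concentration I would run the Doob martingale $Z_j=\mathbb{E}[X_k\mid W_1,\dots,W_j]$, so $Z_0=\mathbb{E}[X_k]$ and $Z_k=X_k$. Letting $P_v^{(j)}$ be the parity of $v$'s participation count among the first $j$ interactions, the same identity gives $\mathbb{E}[B_v\mid W_{\le j}]=\tfrac12-\tfrac12(-1)^{P_v^{(j)}}(1-4/n)^{k-j}$. Subtracting the analogous expression at step $j-1$, and using $(-1)^{P_v^{(j)}}=(-1)^{P_v^{(j-1)}}\bigl(1-2\,\mathbf 1[v\in W_j]\bigr)$ together with $\mathbb{E}\bigl[1-2\,\mathbf 1[v\in W_j]\bigr]=1-4/n$, a short computation yields
\[
Z_j-Z_{j-1}=(1-4/n)^{k-j}\sum_{v}(-1)^{P_v^{(j-1)}}\bigl(\mathbf 1[v\in W_j]-\tfrac 2n\bigr).
\]
Since $W_j$ touches exactly two nodes, the inner sum is a difference of two terms of magnitude at most $2$, so the martingale increments are bounded by $c_j=4(1-4/n)^{k-j}$. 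The crucial point is that these bounds decay geometrically towards the present, so $\sum_{j\le k}c_j^2\le 16\sum_{i\ge 0}(1-4/n)^{2i}=\tfrac{16}{1-(1-4/n)^2}\le 4n$ for all large $n$ — \emph{independently of $k$}. Azuma--Hoeffding then gives $\Pr\bigl[\,|X_k-\mathbb{E}[X_k]|\ge t\,\bigr]\le 2\exp\bigl(-t^2/(8n)\bigr)$.

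Finally I would take $t=n/2^{4\alpha+1}$ and combine with the mean estimate by the triangle inequality:
\[
\Pr\bigl[\,|X_k-\tfrac n2|\ge \tfrac{n}{2^{4\alpha}}\,\bigr]\le \Pr\bigl[\,|X_k-\mathbb{E}[X_k]|\ge \tfrac{n}{2^{4\alpha+1}}\,\bigr]\le 2\exp\!\Bigl(-\frac{n}{2^{8\alpha+5}}\Bigr)\le 2\exp\!\Bigl(-\frac{\alpha\sqrt n}{8}\Bigr),
\]
where the last inequality holds once $n$ is large enough that $n/2^{8\alpha+5}\ge \alpha\sqrt n/8$ (which is why the statement is only for sufficiently large $n$). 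The one delicate step is the martingale-increment bound: one must recognize that a single interaction influences a node's coin $k-j$ steps later only through a factor $(1-4/n)^{k-j}$, since a crude bounded-difference constant of $4$ would give $\sum_j c_j^2=\Theta(k)$, which is useless when $k\gg n$. Equivalently, one could invoke the contraction/mixing estimates for the lazy random walk on the hypercube underlying this process (cf.~\cite{AR16}); the martingale computation above makes the argument self-contained.
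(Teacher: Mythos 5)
Your proof is correct, but it takes a genuinely different route from the paper's. The paper conditions on the coin vector $\alpha n$ interactions before time $k$, applies McDiarmid's inequality to the last $\alpha n$ interactions only (with the crude Lipschitz constant $4$ per interaction, giving $\sum c_j^2 = 16\alpha n$ and deviation window $\alpha n^{3/4}$), and separately derives the conditional mean $\EX[X_k \mid X_{k-\alpha n}=x] = n/2 + (1-4/n)^{\alpha n}(x-n/2)$ via a one-step recursion; the truncation to the last $\alpha n$ steps is exactly what saves it from the $\Theta(k)$ variance proxy you correctly identify as fatal. You instead exploit the clean structural fact that each coin is the parity of that node's participation count, which gives the \emph{unconditional} mean $\EX[X_k]=\tfrac n2(1-(1-4/n)^k)$ in closed form, and you replace the truncation trick by a Doob martingale over all $k$ interactions whose increment bounds $c_j = 4(1-4/n)^{k-j}$ decay geometrically, so that $\sum_j c_j^2 \le 4n$ uniformly in $k$. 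Your increment computation checks out (the two touched nodes contribute at most $2(1-2/n)$ and the remaining $n-2$ nodes at most $2/n$ each), as do the constants: $t=n/2^{4\alpha+1}$ yields exponent $n/2^{8\alpha+5}$, which dominates $\alpha\sqrt n/8$ for large $n$ at fixed $\alpha$. In fact your concentration bound, $2\exp(-\Theta(n))$, is asymptotically stronger than the paper's $2\exp(-\alpha\sqrt n/8)$; the price is the more delicate bookkeeping of the decaying increments, whereas the paper's conditioning argument needs only the trivial bounded-difference constant. Both proofs rely on the same underlying contraction factor $(1-4/n)$ per interaction; you just deploy it inside the martingale rather than in a separate expectation lemma.
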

\begin{proof}
We label the nodes from $1$ to $n$, 
  and represent their coin values by a binary vector of size $n$. 
  Let $k_0 = k - \alpha n$, and fix the  vector $v_{0}$ representing the coin values of the nodes 
  after the interaction of index $k_0$.
For example, if $k_0 = 0$, we know $v$ is a zero vector, 
  because of the way the algorithm is initialized.
  
For $1 \leq t \leq \alpha n$, denote by $Y_t$ the pair of nodes that are flipped 
  during interaction $k_0 + t$.
Then, given $v_{0}$ and $Y_1, \ldots, Y_{\alpha n}$, $X_k$ can be computed deterministically.
Moreover, it is important to note that $Y_j$ are \emph{independent} random variables and that   
  changing any one $Y_j$ can only change the value of $X_k$ by at most $4$.
Hence, we can apply McDiarmid's inequality~\cite{mcdiarmid}, stated below.
\begin{claim}[McDiarmid's inequality]
Let $Y_1, \ldots, Y_m$ be independent random variables and let $X$ be a function
  $X = f(Y_1, \ldots, Y_m)$, such that changing variable $Y_j$ only changes 
  the function value by at most $c_j$.
Then, we have that 
  $$\Pr[|X - \EX[X]| \geq \epsilon] \leq 2 \cdot \exp\left(-\frac{2\epsilon^2}{\sum_{j = 1}^m c_j^2}\right).$$
\end{claim}
\noindent 
Returning to our argument, assume that the sum of coin values after interaction $k-\alpha n$ is fixed and represented by the vector $v_0$. 
In the above inequality, we set $X_k = f_{v_0}(Y_1, \ldots, Y_{\alpha n})$, $\epsilon = \alpha n^{3/4}$ 
and $c_j = 4$, for all $j$ from $1$ to $\alpha n$. We get that 
  $$\Pr[|X_{k} - \EX[X_{k}|] \geq \alpha n^{3/4}] \leq 2 \cdot \exp(-\alpha^2 \sqrt{n}/8).$$ 
Fixing $v_0$ also fixes the number of ones among coin values in the system at that moment,
  which we will denote by $x$, i.e. 
  $$x := \sum_{j=1}^n v_j (k_0)= X_{k-\alpha n}.$$
We then notice that the following claim holds, whose proof is deferred to the Appendix.

\begin{repclaim}{clm:coinexp}
$\EX[X_{i+m} \mid X_i = x] = n / 2  + ( 1- 4 / n )^m \cdot ( x - n / 2)$.
\end{repclaim}

\noindent By~\claimref{clm:coinexp} we have 
  $$\EX[X_k \mid X_{k-\alpha n} = x] = n/2 + (1 - 4/n)^{\alpha n} \cdot (x - n/2).$$

  \noindent Since $0 \leq x \leq n$ and $(1-4/n)^{\alpha n} \leq \exp(-4 \alpha)$,
  we have that 
  $$n/2 - n/2^{4\alpha + 1} \leq \EX[X_k \mid X_{k-\alpha n} = x] \leq n/2 + n/2^{4 \alpha + 1}.$$
  
\noindent For any fixed $v$, we can apply McDiarmid's inequality as above, and get an upper bound 
  on the probability that $X_k$ (given fixed $v_0$), 
  diverges from the expectation by at most $\alpha n^{3/4}$.  
But, as we just established, for any $v_0$, the expectation we get in the bound 
  will be at most $n/2^{4\alpha + 1}$ away from $n/2$.
Combining these and using that $n/2^{4\alpha + 1} \geq \alpha n^{3/4}$ for all sufficiently large $n$
  gives the desired bound.
\end{proof}

\paragraph{Approximate Counting} 
Synthetic coins can be used to estimate the number of agents in the system, as follows. 
Each node executes the coin-flipping protocol, and counts the number of consecutive $1$ flips it observes, until the first $0$. 
Each agent records the number of consecutive $1$ coin flips as its \emph{estimates}. The agents then exchange their estimates, always adopting the maximum estimate. It is easy to prove that the nodes will eventually stabilize to a number which is a constant-factor approximation of $\log n$, with high probability. 
This property is made precise in the proof of~\lemmaref{lem:stage1}.

%!TEX root = main.tex
\section{The Lottery Leader Election Algorithm}
\paragraph{Overview} 
We now present a leader election population protocol using $O( \log^2 n )$ states. 
The protocol is split conceptually into two stages: 
in the first \emph{lottery} stage, each agent generates an individual 
  \emph{payoff} by flipping synthetic coins.
In the second \emph{competition} stage, each agent starts as a potential leader, and proceeds 
  to compare a \emph{value} associated with local state against that of each interaction partner. 
  % using a function of the payoff we call the \emph{value} as a tie breaker. 
Of any two interacting agents, the one with ``larger" value wins, and remains in contention, 
  while the other drops out and becomes a \emph{minion}. 
Importantly, a minion agent can no longer be a leader, 
  but carries the value of its victor in future interactions. 
Thus, if an agent still in contention meets a minion with higher value, 
  it will drop out and adopt the higher value. 
Similarly, a minion will always adopt the highest leader value it encounters. 
This minion-based propagation mechanism has been used previously, 
  where the value corresponded to interaction counts~\cite{AG15}; 
here, we employ a more complex value function, which requires less states to store.

We now present the algorithm in detail. 
Initially, all nodes start in the same state, which is determined by seven parameters: 
  $\id{coin}$, $\id{mode}$, $\id{payoff}$, $\id{level}$, $\id{counter}$, $\id{phase}$ and $\id{ones}$.
The $\id{coin}$ parameter stores the synthetic coin value; it is binary, initially $0$.
The agent can be in one of four modes: 
  $\lit{seeding}$ (preparing the random coin), 
  $\lit{lottery}$ (generating the payoff value), 
  $\lit{tournament}$ (competing), or 
  $\lit{minion}$ (out of contention). 

We fix a parameter $m$ such that $m \geq (10 \log{n})^2$.
 % \leq m \leq \frac{\exp(n/2)}{n^8}
The protocol will use $O( m )$ states per node.

\paragraph{Seeding Mode} 
All agents start in $\lit{seeding}$ mode, with $\id{payoff}$ and $\id{level}$ values $0$, 
  and $\id{counter}$ value $4$. 
The goal of the four-interaction seeding mode is for the synthetic coin implemented 
  by the $\id{coin}$ parameter to mix well, 
  generating values which are close to uniform random. 
In the first four interactions, each agent simply decreases its $\id{counter}$ value. 
Once this $\id{counter}$ reaches $0$, the agent moves to $\lit{lottery}$ mode. 
By the properties of synthetic coins, when agents finish $\lit{seeding}$, 
  they hold $0$ or $1$ values in roughly equal proportion. 

\paragraph{Lottery Mode}
In $\lit{lottery}$ mode, an agent starts counting in its own $\id{payoff}$ the number 
  of consecutive interactions until observing $0$ as the $\id{coin}$ value of an interaction partner,
  by incrementing $\id{payoff}$ upon observing $1$ coins. 
When the agent first meets a $0$, or if the agent reaches the maximum possible value that 
  $\id{payoff}$ can hold, set to $\sqrt{m}$, the agent $x$ finalizes its $\id{payoff}$, 
  and changes its $\id{mode}$ to $\lit{tournament}$.

\paragraph{Tournament Mode}
The goal of tournament mode is two-fold: to force agents to compete (by comparing states), 
  and to generate additional tie-breaking random values (via the $\id{level}$ variable). 

Agents start with $\id{level} = 0$ and repeatedly attempt to increase their level. 
Each agent $x$ keeps track of \emph{phases}, 
  consisting of $\Theta(\log{\id{payoff}})$ consecutive interactions. 
In each phase, if all coin values of interaction partners are $1$, 
  then the $x.\id{level}$ is incremented; otherwise, it stays the same.  
An agent which reaches the maximum possible $\id{level}$, set at $\sqrt{m} / \log m$, 
  remains in $\lit{tournament}$ mode, but stops increasing its level.

Phases can be implemented by the $\id{phase}$ parameter used as a counter, 
  and a boolean $\id{ones}$ parameter.
$\id{phase}$ starts with $0$, and is incremented every interaction
  until it is reset to $0$ when the phase ends.
$\id{ones}$ is set to $\lit{true}$ at the beginning of each phase
  and becomes $\lit{false}$ is coin value $0$ is encountered. 
As mentioned above, the phase consists of $\Theta(\log{\id{payoff}})$
  interactions, and the exact function will be provided later. 
   
When two agents $x$ and $y$ in $\lit{tournament}$ mode meet, they compare 
  $(x.\id{payoff}, x.\id{level}, x.\id{coin})$ and $(y.\id{payoff}, y.\id{level}, y.\id{coin})$. 
If the former is larger, then agent $x$ eliminates agent $y$ from the tournament, and vice versa. 
Practically, agent $y$ sets its $\id{mode}$ to $\lit{minion}$, and adopts the 
  payoff and level values of the other agent.  
Note that agents with higher lottery payoff always have priority; 
  if both $\id{payoff}$ and $\id{level}$ are equal, the $\id{coin}$ value is used as a tie-breaker.  

\paragraph{Minion Mode}
An agent in $\lit{minion}$ mode keeps a record of the maximum $.\id{payoff}, .\id{level}$ pair 
  ever encountered in any interaction in its own $\id{payoff}$ and $\id{level}$ parameters.
If $x.\id{mode} = \lit{minion}$ and $y.\id{mode} = \lit{tournament}$, and 
  $(x.\id{payoff}, x.\id{level}) > (y.\id{payoff}, y.\id{level})$, 
  then the agent in state $y$ will be eliminated from contention, and turned into a minion.
Intuitively, minions help leaders with high payoffs and levels
  to eliminate other contenders by spreading information.
Importantly, minions do not use the coin value as a tie-breaker (as this could lead 
  to a leader eliminating itself via a cycle of interactions). 

\paragraph{Analysis Overview}
The main intuition is that one agent with the highest lottery payoff eventually becomes the leader. 
This is an agent that manages to reach a high level, and turns other competitors into its minions, 
  that further propagate the information about the highest payoff and level through the system.

Only nodes with $\id{mode} = \lit{minion}$ are non-leaders, 
  and once a node becomes a minion it remains a minion.
Therefore, we first prove in~\lemmaref{lem:lecorrect} that not all nodes can become minions, 
  and if there are $n-1$ minions in the system, then there is a stable leader.
The proofs are provided in the appendix.
% \begin{replemma}{lem:lecorrect}
% In any reachable configuration, at least one node is not a minion.
% Furthermore, a configuration with $n-1$ minions must have a stable leader, meaning that
  % the non-minion node will never become a minion, while all minions will remain minions.
% \end{replemma}

The number of possible states of an agent can be determined by multiplying the maximum different 
  values of state parameters, giving 
  $O(1) \cdot \sqrt{m} \cdot \frac{\sqrt{m}}{\log{m}} \cdot O(\log{m}) = O(m)$ as desired.
  
Next, we prove in~\lemmaref{lem:stage1} that with probability at least $1-O(1)/n^3$,
  after $O(n \log{n})$ interactions, all agents will be in the competition stage,
  that is, either in $\lit{tournament}$ or $\lit{minion}$ mode, 
  with maximum $\id{payoff}$ at least $\log{n} / 2$ and at most $9 \log{n}$,
  and at most $5 \log{n}$ agents will have this maximum $\id{payoff}$.

We set the phase size of an agent with $\id{payoff} = p$ to $4.2 (\log{p} + 1)$.
We then show in~\lemmaref{lem:stage2} that with probability at least $1-O(1)/n^3$,
  only one contender reaches level $\ell = 3 \log{n} / \log \log n$,
  and it takes at most $O(n \log^{5.3} \log \log n)$ interactions for a new level to be reached up to $\ell$. 

The above claims imply that with high probability (at least $1-O(1)/n^3$),
  the protocol elects a single leader within 
  $O(n \log n) + O(n \log^{5.3} \log \log n) \cdot \ell = O(n \log^{6.3}{n})$ interactions,
  that is, $O(\log^{6.3} n)$ parallel time.
Finally,~\lemmaref{lem:electexp} gives a $O(\log^{5.3}{n} \log\log n)$ bound 
  on expected parallel time until stabilization. 
%!TEX root = main.tex
\section{The Split-Join Majority Algorithm}

\paragraph{Overview} 
In this section, we present an algorithm for exact majority using $O(\log^2 n)$ states per node. 
The main idea behind the algorithm is that each node state corresponds to an \emph{integer value}:  
the \emph{sign} of the value corresponds to the node's opinion about the majority state---by convention, $A$ is positive, and $B$ is negative. 
To minimize the state space, we will devise a special representation for the integer values, where not all integers will be representable. 
Whenever two nodes meet, they modify their respective values following a sequence of simple operations. 
The intuition is that, on each interaction, nodes \emph{average} their corresponding values. Averaging ensures that the sign of the sum over all values in the system never changes, while, initially, this sign corresponds to the sign of the majority value. Thus, the crux of the analysis will be to show that all nodes stabilize to values of the same sign, and do so quickly. 
 
\paragraph{States}
We now describe the algorithm in detail. 
Each node state corresponds to a pair of integers $x$ and $y$, represented by $\langle x,y \rangle$, 
  whereby both integers are powers of two from 
% a set of size $\lceil \log{n} \rceil + 1$: 
  $x,y \in \{0, 1, 2, 2^2, \ldots, 2^{\lceil \log{n} \rceil}\}$, where $n$ is the number of nodes. 
  The \emph{value} of a state corresponding to a pair $\langle x,y \rangle$ is
  $\id{value}( \langle x,y \rangle ) = x - y$.

Nodes start in one of two special states. 
By convention, nodes starting in state $A$ have the initial pair 
  $\langle 2^{\lceil \log{n} \rceil}, 0 \rangle$, 
  and the nodes starting in state $B$ have the initial pair $\langle 0, 2^{\lceil \log{n} \rceil} \rangle$. (Here, we assume that nodes have an estimate of $n$.)
  We distinguish between two types of states: \emph{strong} states represent non-zero values, and will always satisfy 
  $x \neq y$ and $2 \min(x, y) \neq \max(x, y)$. 
   The \emph{weak} states are represented as pairs $\langle 0,0 \rangle^+$ and $\langle 0,0 \rangle^-$, corresponding to value $0$, leaning towards $A$ or $B$, respectively. 
We will refer to states and their corresponding value pairs interchangeably.
The output function $\gamma$ maps each state to the output based the sign of its value
  (treating $\langle 0,0 \rangle^+$ as positive and $\langle 0,0 \rangle^-$ as negative).

\paragraph{Interactions}
The algorithm, specified in~\figureref{fig:logmajo}, 
  consists of a set of simple deterministic update rules for the node state, to be applied on every interaction. 
In the pseudocode, we make the distinction between pairs $\langle x, y \rangle$, which correspond to states,
  and pairs $[x, y]$ corresponding to tuples of integer values.
%Hence, the \emph{Reduce} helper function takes two values and returns two values. 
The interaction rule between the states $\langle x_1, y_1 \rangle$ and $\langle x_2, y_2 \rangle$ 
  of two interacting nodes is described by the function \emph{interact}.
The states after the interaction are $\langle x_1', y_1' \rangle$ and $\langle x_2', y_2' \rangle$.
All nodes start in the designated initial states and 
  continue to interact according to the \lit{interact} rule. 
If both interacting states are weak, nothing changes (line~\ref{line:zeros}).
Otherwise, three elementary reactions, \emph{cancel}, \emph{join}, and \emph{split} are applied,
  in this order.
Each reaction takes four values $x_1, y_1, x_2, y_2$ and 
  returns (possibly updated) values $x_1', y_1', x_2', y_2'$.
  
  The \emph{cancel} reaction matches positive and negative powers of $2$ from the two 
  interaction partners. 
  %: for example $\langle 2^k, 0 \rangle + \langle 0, 2^{ k - 1} \rangle$ simplifies to 
  %$\langle 2^{k - 1}, 0 \rangle + \langle 0, 0 \rangle^{+}$. 
  The \emph{join} operation matches values of the same sign, attempting to create higher powers of two. 
  % for example $\langle 2^{k}, 0 \rangle + \langle 2^{k}, 0 \rangle$ becomes 
  %$\langle 2^{k + 1}, 0 \rangle + \langle 0, 0 \rangle^{+}$. 
  The \emph{split} reaction does the opposite, by breaking powers of two into smaller powers. 
  %$\langle 2^{k}, 0 \rangle + \langle 0, 0 \rangle \rightarrow \langle 2^{k - 1}, 0 \rangle + \langle 2^{k - 1}, 0 \rangle $. 
  Please see Figure~\ref{fig:interact} for an illustration. 
  Before returning, the \lit{interact} procedure normalizes the two states to satisfy some simple well-formedness conditions. 
    Notice that all these operations preserve the sum of values corresponding to their inputs.

\paragraph{Correctness and Stabilization} 
The first observation is that the sum of values in the system is constant throughout the execution. 
By construction, the initial sum is of the majority sign; since the sum stays constant, the algorithm may not reach a state in which all nodes have an opinion corresponding to the initial minority. This guarantees correctness.  
The stabilization bound follows by carefully tracking the maximum value in the system, 
and showing that minority values get cancelled out and switch sign quickly.

\begin{reptheorem}{thm:majcorrect}
The Split-Join algorithm will never stabilize to the minority decision, 
  and is guaranteed to stabilize to the majority decision within $O( \log^3 n )$ parallel time, 
  both in expectation and w.h.p. 
\end{reptheorem}

\section{Conclusion}

We have studied the trade-off between time and space complexity in population protocols, 
and showed that a super-constant state space is necessary to obtain fast, poly-logarithmic stabilization time for both leader election and exact majority. On the positive side, we gave algorithms which achieve poly-logarithmic expected stabilization time using $O( \log^2 n )$ states per node for both tasks. 
Our findings are not great news for practitioners, as even small constant state counts are currently difficult to implement~\cite{CD13}. It is interesting to note how nature appears to have overcome this impossibility~\cite{CCN12}: 
algorithms solving majority at the cell level do so \emph{approximately}, 
allowing for a positive probability of error, using small constant states per node and stabilizing in poly-logarithmic time~\cite{AAE08}. 

We open several avenues for future work. The first  is to characterize the time-space trade-off between $\log \log n$ and $\log^2 n$ states. This question will likely require the development of analytic techniques parametrized by the number of states. 
A second direction is exploring the space-time trade-offs for \emph{approximately correct} algorithms. 
 
\begin{figure}[h]
\hrule
\DontPrintSemicolon
{\centering
{\scriptsize
\begin{algorithm}[H]
\SetKwInput{KwState}{State Space}
\KwState{\;
%\Indp 
$\id{Strong States}=\{\langle x,y\rangle |x,y\in \{0, 1, 2, 2^2, \ldots, 2^{\lceil \log{n} \rceil}\},x \neq y,2 \cdot \min(x, y) \neq max(x, y) \}$, \;
$\id{Weak States} = \{ \langle 0, 0 \rangle^+, \langle 0, 0 \rangle^-\}$\;
%\Indm 
}

\KwIn{States of two nodes, $\langle x_1, y_1 \rangle$ and $\langle x_2, y_2 \rangle$}
\KwOut{Updated states $\langle x_1', y_1' \rangle$ and $\langle x_2', y_2' \rangle$}

\SetKwInput{KwState}{Auxiliary Procedures}

% \tcc*[l]{Helper functions}

$\id{Reduce}( u, v ) = \left\{ 
 \begin{array}{lll} 
  \textnormal{$[0,0]$} & \textnormal{ if $u = v$ } \\
  \textnormal{$[u-v,0]$} & \textnormal{ if $u = 2v$ } \\
  \textnormal{$[0,v-u]$} & \textnormal{ if $2u = v$ } \\
  \textnormal{$[u,v]$} & \textnormal{ otherwise.} \\
  \end{array} 
  \right. $
  
\textbf{procedure} $\lit{cancel}(x_1, y_1, x_2, y_2)$\;
{ 
\Indp
  $[x_1', y_2'] \gets \id{Reduce}( x_1, y_2 )$\;

  $[x_2', y_1'] \gets \id{Reduce}( x_2, y_1 )$\;
\Indm
}

\textbf{procedure} $\lit{join}(x_1, y_1, x_2, y_2)$\;
{ 
\Indp
  \lIf{($x_1 - y_1 > 0$ and $x_2 - y_2 > 0$ and $y_1 = y_2$)} 
      {  
        $y_1' \gets y_1 + y_2$ and
	$y_2' \gets 0$
      }
  \lElse 
      {
        $y_1' \gets y_1$ and
	$y_2' \gets y_2$
      }
  \lIf{($x_1 - y_1 < 0$ and $x_2 - y_2 < 0$ and $x_1 = x_2$)} 
      {  
        $x_1' \gets x_1 + x_2$ and
	$x_2' \gets 0$
      }
  \lElse 
      {
        $x_1' \gets x_1$ and
	$x_2' \gets x_2$
      }
\Indm
}

\textbf{procedure} $\lit{split}(\langle x_1, y_1 \rangle, \langle x_2, y_2 \rangle)$\;
{ 
\Indp
  \If{($x_1 - y_1 > 0$ or $x_2 - y_2 > 0$) \textbf{and} $\max(x_1, x_2) > 1$ \textbf{and} $\min(x_1, x_2) = 0$} 
      {  
        $x_1' \gets \max(x_1, x_2) / 2$ and
	$x_2' \gets \max(x_1, x_2) / 2$
      }
  \lElse 
      {
        $x_1' \gets x_1$ and
	$x_2' \gets x_2$
      }
  \If{($x_1 - y_1 < 0$ or $x_2 - y_2 < 0$) \textbf{and} $\max(y_1, y_2) > 1$ \textbf{and} $\min(y_1, y_2) = 0$} 
      {  
        $y_1' \gets \max(y_1, y_2) / 2$ and
	$y_2' \gets \max(y_1, y_2) / 2$
      }
  \lElse 
      {
        $y_1' \gets y_1$ and
	$y_2' \gets y_2$
      }
\Indm
}

\textbf{procedure} $\lit{normalize}(x, y, v)$\;
{
 \Indp
 $[\hat{x}, \hat{y}] \gets \id{Reduce}( x, y )$\;
 \If{$x = 0$ and $y = 0$} 
 { 
   \lIf{$v \geq 0$} {
     $\langle x', y' \rangle \gets \langle 0, 0 \rangle^+$
   }\lElse {
     $\langle x', y' \rangle \gets \langle 0, 0 \rangle^-$
   }
 }\lElse{
   $\langle x', y' \rangle \gets \langle x, y \rangle$
 }
 \Indm
}

% \tcc*[l]{The main interaction rule}

\textbf{procedure} $\lit{interact}(\langle x_1, y_1 \rangle, \langle x_2, y_2 \rangle)$\;
{ 
\Indp
  \lIf{$x_1 = y_1 = x_2 = y_2 = 0$} {
    $[\langle x_1', y_1' \rangle, \langle x_2', y_2' \rangle] 
       \gets [\langle x_1, y_1 \rangle, \langle x_2, y_2 \rangle]$\nllabel{line:zeros}
  }\Else {
    $[\hat{x_1}, \hat{y_1}, \hat{x_2}, \hat{y_2}] \gets
    \lit{split}(\lit{join}(\lit{cancel}(x_1, y_1, x_2, y_2)))$\;
    $\langle x_1', y_1' \rangle \gets \lit{normalize}(\hat{x_1}, \hat{y_1}, \hat{x_2} - \hat{y_2})$\;
    $\langle x_2', y_2' \rangle \gets \lit{normalize}(\hat{x_2}, \hat{y_2}, \hat{x_1} - \hat{y_1})$\;
  }
\Indm
}
\end{algorithm}}}
\hrule
\caption{The state update rules for the Split-Join majority algorithm.}
\label{fig:logmajo}
\end{figure}

\begin{figure}
\centering
\includegraphics[scale = 0.65]{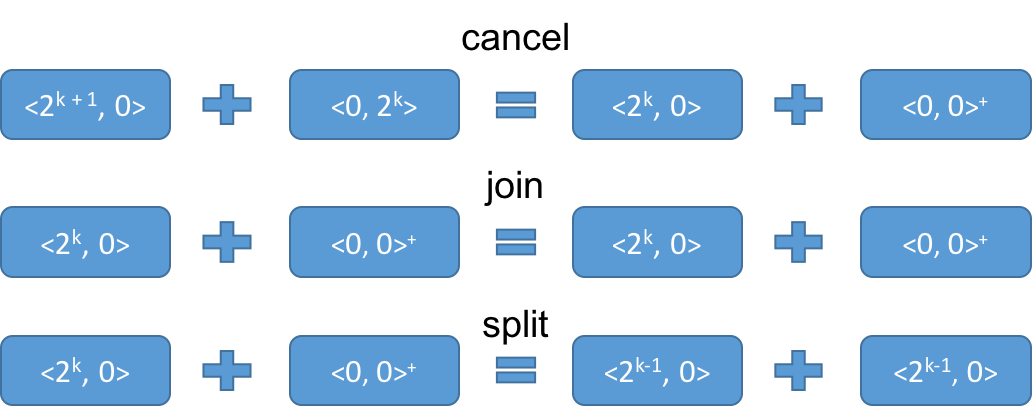}
\caption{Example of the outcome of the \emph{interaction} function. We apply the cancel, join, and split steps in sequence. In this case, the join step and the normalize step (not shown) are trivial.}
\label{fig:interact}
\end{figure}

\bibliographystyle{alpha}
%\footnotesize
\bibliography{biblio}

\appendix
%!TEX root = main.tex
% \section{Deferred Proofs}
%!TEX root = main.tex
\section{Lower Bound}
\begin{lemma}[Density Lemma]
\label{lem:density}
  For all population protocols $A$ using $|\Lambda_n| \leq 1/2 \log \log n$ states
  and starting in a fully dense initial configuration, with probability $\geq 1 - (1 / n)^{0.99}$, 
  there exists an integer $j$ such that the configuration reached after $j$ steps is $n^{0.99}$-rich with respect to $\Lambda_n$.
\end{lemma}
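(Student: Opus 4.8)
The plan is to follow the structure of Doty's density argument \cite{Doty14}, but track the dependence on the (now super-constant) state-space size $|\Lambda_n| \leq \tfrac12 \log\log n$ carefully so that the failure probability can be driven below $(1/n)^{0.99}$. Recall $S_0 = S_0 \subseteq S_1 \subseteq \cdots \subseteq S_{|\Lambda_n|-1} = \Lambda_n$, where $S_k$ is the set of states producible within $k$ steps. The idea is to argue inductively over $k$: starting from a configuration in which every state of $S_{k-1}$ appears with count $\geq X_{k-1}$ (for an appropriate threshold $X_{k-1}$, with $X_0 = n/M$ from full density), the protocol will, with high probability, reach a configuration in which every state of $S_k$ appears with count $\geq X_k$, for a somewhat smaller threshold $X_k$. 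Chaining these $|\Lambda_n|$ implications together, and noting that each step loses only a bounded polynomial factor in the exponent, the final threshold $X_{|\Lambda_n|}$ is still $n^{\Omega(1)}$ — and here is where $|\Lambda_n| \leq \tfrac12\log\log n$ is used: it guarantees that the product of the per-step degradations (something like a fixed power per level, so $n^{c^{|\Lambda_n|}}$ type behavior, or an iterated-root behavior) keeps the exponent above $0.99$ for large $n$.

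The core of one inductive step is a bounded-differences / martingale concentration argument. Fix a state $s \in S_k \setminus S_{k-1}$; by definition there is a transition $(r_1, r_2) \to (s, \cdot)$ with $r_1, r_2 \in S_{k-1}$, both of which have count $\geq X_{k-1}$ by the inductive hypothesis. Over a window of $\Theta(n)$ interactions, the expected number of times this transition fires is $\Omega(X_{k-1}^2/n)$, because at each of the $\Theta(n)$ steps the probability that the selected ordered pair is $(r_1,r_2)$ is $\Omega(X_{k-1}^2/n^2)$ as long as the counts of $r_1,r_2$ have not yet dropped — and one argues they cannot drop much over the window, or alternatively that either they stay large or $s$ already got produced in bulk. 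The key technical improvement over \cite{Doty14} is to replace whatever concentration tool was used there with McDiarmid's inequality (the method of bounded differences), exactly as in the proof of \theoremref{thm:coin}: the outcome is a deterministic function of the $\Theta(n)$ independent choices of interacting pairs, and changing one choice changes the relevant count by $O(1)$, so we get a failure probability of $\exp(-\Omega(\text{poly}(n)))$ per state per level. A union bound over the at most $|\Lambda_n| \leq \tfrac12\log\log n$ levels and at most $|\Lambda_n|$ new states per level — a tiny, sub-polynomial multiplicative loss — keeps the total failure probability at $\exp(-n^{\Omega(1)}) \ll (1/n)^{0.99}$.

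Concretely I would: (1) set up the thresholds $X_k$ by a recurrence of the form $X_k = c \cdot X_{k-1}^2/n$ (or, after normalizing $X_k = n^{1-\epsilon_k}$, a recurrence $\epsilon_k \leq 2\epsilon_{k-1} + o(1)$) and verify $X_{|\Lambda_n|} \geq n^{0.99}$ using $|\Lambda_n| \leq \tfrac12\log\log n$ so that $2^{|\Lambda_n|} \leq \sqrt{\log n}$, which keeps $\epsilon_{|\Lambda_n|}$ small; (2) prove the single-step lemma: conditioned on a configuration that is $X_{k-1}$-rich w.r.t. $S_{k-1}$, after $\Theta(n)$ further interactions the configuration is, with probability $\geq 1 - \exp(-n^{\Omega(1)})$, $X_k$-rich w.r.t. $S_k$ — using the bounded-differences estimate above, together with a "counts don't collapse" sub-argument showing that no state of $S_{k-1}$ whose count is currently $\geq X_{k-1}$ can fall below $X_k$ within the window unless it is being converted into other states of $S_{k-1} \subseteq S_k$ that thereby stay rich (here one needs a careful bookkeeping that total count is conserved, plus the absorbing/equilibrating behavior exploited by Doty); (3) union-bound over levels and conclude. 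The main obstacle I anticipate is exactly step (2)'s "counts don't collapse" part: unlike in the constant-state regime, one cannot afford constant-factor slack at every level, so one must argue that the degradation $X_{k-1} \to X_k$ is genuinely only a squaring (times a constant), and in particular handle the case where a producible state $s$ has its \emph{only} production transitions consuming states that are themselves being rapidly depleted — showing that even then $s$ accumulates to count $\geq X_k$ before its sources run out, which requires quantifying the race between production of $s$ and depletion of its sources and is where the $\Theta(n)$-window length and the quadratic-in-count firing rate must be balanced precisely.
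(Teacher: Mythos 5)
Your proposal follows essentially the same route as the paper's proof: the same induction over the producible-state sets $S_k$ with the squaring recurrence $f(k+1)=\Theta\bigl(f(k)^2/n\bigr)$ (the paper takes $f(k)=n\cdot 51^{-2^k+1}$, so $|\Lambda_n|\le \tfrac12\log\log n$ keeps $f(|\Lambda_n|)\ge n^{0.99}$), the same $\Theta(n)$-interaction phases, the same bounded-differences concentration, and the same union bound over the $O((\log\log n)^2)$ events. The obstacle you flag in step (2) is resolved in the paper not by tracking depletion dynamics but by counting only agents guaranteed to end the phase in the desired state: for $q\in S_k$ already present, the agents in state $q$ that interact \emph{zero} times during the phase (a constant fraction survives in expectation, then McDiarmid), and for a new state $q=\delta(q_i,q_r)$, the agents in state $q_r$ whose \emph{only} interaction in the phase is with a non-otherwise-interacting agent in state $q_i$ — which sidesteps the production-versus-depletion race entirely.
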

\begin{proof}
Recall that by definition, from a fully dense initial configuration
  every state in $\Lambda_n$ is producible.

We begin by defining, for integers $k \geq 0$, the function 

	$$f(k) = n 51^{-2^k + 1}.$$
	
\noindent Alternatively, we have that $f(k)^2 = f (k + 1) n / 51$. 

\noindent Let $c = 1 / 2$. Given the above, we notice that, with this choice it holds that, 
  for sufficiently large $n \geq 2$, 
\begin{itemize}
	\item $3 (c \log \log n)^2 / n \leq (1 / n)^{0.99},$ and
	
	\item for $0\leq k \leq c \log \log n$, we have that $f ( k ) \geq \max ( n^{0.99}, 50 \sqrt{n \log n}).$ 

\end{itemize}

We divide the execution into phases of index $k \geq 0$, each containing $n / 2$ consecutive interactions. 
For each $0 \leq k \leq |\Lambda_n| - 1$, we denote by $C_k$ the system configuration at the beginning of phase $k$. 

\paragraph{Inductive Claim.} We use probabilistic induction to prove the following claim: assuming that configuration $C_{k}$ is $f( k )$-rich with respect to the set of states $S_{k}$, with probability $1 - 3 |\Lambda_n| / n$, the configuration $C_{k + 1}$ is $f(k + 1)$-rich with respect to $S_{k + 1}$. 

For general $k \geq 0$, let us fix the interactions up to the beginning of phase $k$, and assume that configuration $C_k$ is $f(k)$-rich with respect to the set of states $S_k$.  Further, consider a state $q \in S_{k + 1}$. 
We will aim to prove that, with probability $1 - O( 1 / n)$, the configuration $C_{k + 1}$ contains state $q$ with count $\geq f(k+1)$. 

First, we define the following auxiliary notation. 
For any node $r$ and set of nodes $I$, count the number of interactions between $r$ and nodes in the set $I$, i.e. 
$$ \mathsf{intcount}( I, r ) = | \{ \textnormal{ interaction $j$ in phase $k$} : \textnormal{there exists $i \in I$ such that $e_j = (i, r)$ } \} |.$$

Next, we define the set of nodes in a state $s$ at the beginning of phase $k$ as 
$$ W(s) = \{ v : v \in V \textnormal{ and } C_k( v ) = s \}.$$
Finally, we isolate the set of nodes in state $s$ at the beginning of phase $k$ which \emph{did not interact} during phase $k$ as 
$$ W'(s) = \{ v : v \in W(s) \emph{ and } \mathsf{intcount}( V, v )  = 0 \}.$$ 

Returning to the proof, there are two possibilities for the state $q$. 
The first is when $q \in S_k$, that is, the state is already present at the beginning of phase $k$. 
But then, by assumption, state $q$ has count $\geq f( k )$ at the beginning of phase $k$. To lower bound its count at the end of phase $k$, it is sufficient to examine the size of the set $W'(q)$. For a node $v \in W(q)$, the probability that $v \in W'(q)$ is 
$$ \left(1 - \frac{1} {n}\right)^{n / 2} \geq 1 / 2,$$ 
\noindent by Bernoulli's inequality. Therefore the expected size of $W'(q)$ is at least $| W(q) | / 2$. Changing any interaction during phase $k$ may change $| W'(q) |$ by at most $1$, and therefore we can apply the method of bounded differences to obtain that 
$$ \Pr \left[ | W'(q) | < \frac{|W (q)|}{2} - \sqrt{n \log n} \right] \leq \exp\left( - \frac{ n \log n }{n}\right)  = \frac{1}{n}. $$ 

Since, by assumption, $| W(q) | \geq f( k ) \geq 10 \sqrt{n \log n}$, it follows that 
$$ \Pr \left[ |W'(q)| < \frac{2}{5} f(k) \right] \leq \frac{1}{n}.$$

Since $2 f(k) / 5 \geq f( k + 1)$, we have that $\Pr[ \# q( C_{k + 1}) \geq f(k + 1) ] \geq 1 - 1 / n,$ which concludes the proof of this case.  

 It remains to consider the case when $q \in S_{k + 1} \setminus S_k$. 
 Here, we know that there must exist states $q_i$ and $q_r$ in $S_{k}$ such that $\delta( q_i, q_r ) = q$. We wish to lower bound the number of interactions between nodes in state $q_i$ and nodes in state $q_r$ throughout phase $k$. To this end, we 
 isolate the set $R$ of nodes which are in state $q_r$ at the beginning of phase $k$, and only interact once during the phase, i.e. 
 $$R = \{ v: v \in W(q_r) \textnormal{ and } \mathsf{intcount}( V, v ) = 1   \},$$
 
and the set of nodes $R'$, which are in $R$, and only interacted once during phase $k$, with a node in the set $W'(q_i)$, i.e. 
 $$R' = \{ v: v \in R \textnormal{ and } \mathsf{intcount}( W'(q_i), v ) = 1 \}.$$

Notice that any node in the set $R'$ is necessarily in state $q$ at the end of phase $k + 1$. In the following, we lower bound the size of this set. 

First, a simple probabilistic argument yields that $E [ | R | ] \geq | W( q_r ) | / 4$. 
Since each interaction in this phase affects the size of $R$ by at most $2$ (since it changes the count of both interaction partners), 
we can again apply the method of bounded differences to obtain that 
$$ \Pr \left[ |R| < \frac{| W( q_r ) }{4} - 2 \sqrt{n \log n} \right] \leq \frac{1} {n}, $$ 

 \noindent implying that  
 $$ \Pr \left[ |R| < \frac{1}{20} f(k) \right]  \leq \frac{1}{n}.$$ 
 
To lower bound the size of $R'$, we apply again the method of bounded differences. 
We have that $|W'(q)| \geq (2 / 5) f(k)$, and that $|R| \geq (1 / 20) f(k)$, we have that 

 $$ \Pr\left[ |R'| \leq \frac{1}{50} \left( \frac{f(k - 1)^2}{n} \right) - \sqrt{n \log n}\right] \leq \frac{1}{n} .$$ 
 
 At the same time, we have that 
 $$ \frac{1}{50} \left(  \frac{f(k)^2 }{n}  \right) - \sqrt{n \log n} \geq \frac{51}{50} f(k + 1) - \frac{1}{50} f(k + 1) = f(k + 1),$$
\noindent which concludes the claim in this case as well.

\paragraph{Final Argument.} 
According to the lemma statement, we are considering an initial configuration in which all  initial states have count $\geq n / M$, for some constant $M \geq 0$. 
Let $k_0$ be the first positive integer such that $n / M \geq f(k_0)$. We have that  the initial configuration is $f(k_0)$-rich with respect to the set of initial states $S_0$. By a variant of the previous inductive claim, we obtain that, for any integer $0 \leq \ell \leq |\Lambda_n|$ satisfying $f( k_0 + \ell ) \geq \max( n^{0.99}, 10 \sqrt{ n \log n})$, at the beginning of phase $\ell$, configuration $C_\ell$ is $f( k_0 + \ell )$-rich with respect to $S_\ell$. 

It therefore follows that, with probability at least 
$$ (1 - 3|\Lambda_n|/n)^{|\Lambda_n|}\geq 1 - 3 (c\log \log n)^2 / n \geq 1 - 1 / n^{0.99},$$ there exists an integer $j$ such that the configuration reached after $j$ steps is $n^{0.99}$-rich with respect to $\Lambda_n$. 
\end{proof}

Given a protocol $\mathcal{P}_n$, for a configuration $c$ and a set of configurations $Y$, 
  let us define $T[c \reach Y]$ as the expected parallel time it takes from $c$
  to reach some configuration in $Y$ for the first time.
$\Pr[c \reach Y]$ stands for the probability of reaching a configuration in $Y$ from $c$. 
\begin{claim}
\label{clm:bottletime}
In a system of $n$ nodes, let $\gamma > 0, f : \mathbb{N} \to \mathbb{R}^{+}, c : \confspacen$, and 
  $X, Y$ be sets of configurations, such that $\Pr[c \reach X] \geq \gamma$, and every 
  transition sequence from every $x \in X$ to some $y \in Y$ has an $f$-bottleneck.
Then $T[c \reach Y] \geq \gamma \frac{n-1}{2 f(n) |\Lambda_n|^2}$.
\end{claim}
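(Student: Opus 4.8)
The plan is to bound the expected hitting time to $Y$ from below by conditioning on first reaching the set $X$, and then showing that each step of any execution starting from $X$ makes only tiny progress toward $Y$ because every transition sequence from $X$ to $Y$ must traverse a bottleneck transition. First I would write $T[c \reach Y] \geq \Pr[c \reach X] \cdot \min_{x \in X} T[x \reach Y] \geq \gamma \cdot \min_{x \in X} T[x \reach Y]$, using that any path from $c$ to $Y$ either already meets $X$ or not, and discarding the contributions we do not need; more carefully, I would argue that to reach $Y$ from $c$ one may pass through $X$ with probability at least $\gamma$, and from that point the remaining expected time is at least $\min_{x\in X} T[x \reach Y]$, so the claimed product lower bound holds. (One should be slightly careful here: the clean statement is that with probability $\geq \gamma$ the execution visits $X$ before $Y$, hence incurs at least the residual cost; if $X$ and $Y$ can overlap or the hypothesis only guarantees \emph{eventually} reaching $X$, a short argument fixes this, but I will assume the intended reading that $X$ is hit en route.)

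The core estimate is therefore $T[x \reach Y] \geq \frac{n-1}{2 f(n) |\Lambda_n|^2}$ for every $x \in X$. Fix $x \in X$. By hypothesis, every transition sequence from $x$ to any configuration in $Y$ contains an $f$-bottleneck, i.e.\ at some configuration $c'$ along the way there is a scheduled transition $(r_1,r_2)\to(z_1,z_2)$ with $c'(r_1)\cdot c'(r_2) \le f(n)$. The next step is to bound, for a single interaction step from an arbitrary configuration $c'$, the probability that the chosen ordered (or unordered) pair of agents triggers a \emph{fixed} bottleneck transition of this kind: the number of unordered pairs $(a,b)$ with $a$ in state $r_1$ and $b$ in state $r_2$ is at most $c'(r_1)\cdot c'(r_2) \le f(n)$, and the scheduler picks one of $\binom{n}{2}$ pairs uniformly, so the probability of selecting such a pair is at most $\frac{f(n)}{\binom{n}{2}} = \frac{2f(n)}{n(n-1)}$. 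Summing over the at most $|\Lambda_n|^2$ possible state-pairs $(r_1,r_2)$ that could be bottlenecks, the probability that \emph{any} bottleneck transition fires at a given step is at most $\frac{2 f(n) |\Lambda_n|^2}{n(n-1)} \cdot \tfrac{n}{2}$... — more cleanly, at most $\frac{f(n)|\Lambda_n|^2}{\binom{n}{2}} = \frac{2 f(n)|\Lambda_n|^2}{n(n-1)}$.

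Since reaching $Y$ from $x$ requires at least one bottleneck transition to fire, the number of interactions needed stochastically dominates a geometric random variable with success probability $p \le \frac{2 f(n)|\Lambda_n|^2}{n(n-1)}$, so the expected number of interactions is at least $1/p \ge \frac{n(n-1)}{2 f(n)|\Lambda_n|^2}$. Dividing by $n$ to convert interaction count to parallel time gives $T[x \reach Y] \ge \frac{n-1}{2 f(n)|\Lambda_n|^2}$, and combining with the factor $\gamma$ from the first paragraph yields $T[c \reach Y] \ge \gamma\,\frac{n-1}{2 f(n)|\Lambda_n|^2}$, as claimed. I expect the main subtlety to be the first step — making precise the sense in which passing through $X$ is ``forced'' with probability $\gamma$ and that the residual expected cost adds — together with getting the counting of bottleneck-triggering pairs and the geometric-domination argument stated cleanly (in particular handling the union over state-pairs and the possibility that the same step is counted under several labels). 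The bottleneck-probability computation itself is routine once the definitions are unwound.
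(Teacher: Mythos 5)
Your proposal is correct and follows essentially the same route as the paper: reduce to showing $T[x \reach Y] \geq \frac{n-1}{2f(n)|\Lambda_n|^2}$ for each $x \in X$ (picking up the factor $\gamma$ from $\Pr[c \reach X]$), then bound the per-step probability of firing any $f$-bottleneck transition by $\frac{2f(n)|\Lambda_n|^2}{n(n-1)}$ via a union bound over state pairs, dominate by a geometric variable, and divide by $n$. The subtlety you flag about passing through $X$ is glossed over in the paper as well, and your self-corrected pair-counting lands on the same bound.
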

\begin{proof}
We will prove that for any $x \in X$, $T[x \reach Y] \geq \frac{n-1}{2 f(n) |\Lambda_n|^2}$ 
  holds, which implies the desired claim.
By definition, every transition sequence from $x$ to a configuration $y \in Y$ contains 
  an $f$-bottleneck, so it is sufficient to lower bound the expected time for the first 
  $f$-bottleneck transition to occur from $x$ before reaching $Y$.
In any configuration $c$ reachable from $x$, for any pair of states $r_1, r_2 \in \Lambda_n$ such that
  $r_1, r_2 \to p_1, p_2$ is a $f$-bottleneck transition in $c$, 
  the definition implies that $c(r_1) \cdot c(r_2) \leq f(n)$.
Thus the probability that the next pair of agents selected to interact are in states $r_1$ and $r_2$,
  is at most $\frac{2f(n)}{n(n-1)}$.
Taking an union bound over all $|\Lambda_n|^2$ possible such transitions, the probability that the 
  next transition is $f$-bottleneck is at most $|\Lambda_n|^2\frac{2f(n)}{n(n-1)}$.
Bounding by a geometric variable with success probability $\frac{2f(n)|\Lambda_n|^2}{n(n-1)}$,
  the expected number of interactions until the first $f$-bottleneck transition is at least 
  $\frac{n(n-1)}{2 f(n) |\Lambda_n|^2}$.
The expected parallel time is this quantity divided by $n$, completing the argument.
\end{proof}
\begin{lemma}
\label{lem:bottlefree}
Let $\mathcal{P}$ be a population protocol with $|\Lambda_n| \leq 1/2\log{\log{n}}$ states,
  and let $D_n \subseteq I_n$ be a non-empty set of fully dense initial configurations. 
Fix a function $f$. 
Assume that for sufficiently large $n$, $\mathcal{P}$ stabilizes in expected time 
  $o\left(\frac{n}{f(n)|\Lambda_n|^2}\right)$ from all $i_n \in D_n$.
Then, for all sufficiently large $m \in \mathbb{N}$ there is a configuration $x_m$ with $m$ agents, 
  reachable from some $i \in D_m$ and a transition sequence $p_m$, such that: 
\begin{enumerate}
 \item $x_m(s) \geq m^{0.99}$ for all $s \in \Lambda_{m}$,
 \item $x_m \reach_{p_m} y_m$, where $y_m$ is a stable output configuration, and
 \item $p_m$ has no $f$-bottleneck.
\end{enumerate}
\end{lemma}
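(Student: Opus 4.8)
The plan is to combine the Density Lemma (\lemmaref{lem:density}) with the bottleneck-time bound of \claimref{clm:bottletime}, applied contrapositively. First I would fix a large $m$ and look at the set $R_m$ of configurations of $m$ agents that are $m^{0.99}$-rich with respect to $\Lambda_m$ and reachable from some $i\in D_m$. By the Density Lemma, starting from any fully dense initial configuration $i\in D_m$, with probability at least $1-(1/m)^{0.99}$ the execution passes through some configuration in $R_m$; in particular $R_m$ is non-empty, and more importantly $\Pr[i\reach R_m]\geq 1-(1/m)^{0.99}\geq 1/2$ for all sufficiently large $m$. This is the ``sufficiently rich configuration is reached with constant probability'' ingredient we need to feed into \claimref{clm:bottletime}.

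Next I would argue by contradiction: suppose that for infinitely many $m$, \emph{every} configuration $x\in R_m$ has the property that \emph{every} transition sequence from $x$ to a stable output configuration contains an $f$-bottleneck. Let $Y_m$ be the set of stable output configurations of $m$ agents. Then the hypotheses of \claimref{clm:bottletime} are met with $c=i$, $X=R_m$, $Y=Y_m$, and $\gamma=1/2$: we have $\Pr[i\reach R_m]\geq 1/2$, and by assumption every transition sequence from a configuration in $R_m$ to $Y_m$ has an $f$-bottleneck. The claim then yields
\begin{equation*}
T[i\reach Y_m]\geq \frac{1}{2}\cdot\frac{m-1}{2f(m)|\Lambda_m|^2}=\Omega\!\left(\frac{m}{f(m)|\Lambda_m|^2}\right).
\end{equation*}
But $T[i\reach Y_m]$ is a lower bound on the expected stabilization time from $i\in D_m\subseteq I_m$, which contradicts the assumption that $\mathcal{P}$ stabilizes in expected time $o\!\left(\frac{n}{f(n)|\Lambda_n|^2}\right)$ from all fully dense initial configurations (this bound must hold for the infinitely many $m$ in question, in particular for sufficiently large such $m$). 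Hence for all sufficiently large $m$ there must exist some $x_m\in R_m$ and some transition sequence $p_m$ from $x_m$ to a stable output configuration $y_m$ with no $f$-bottleneck; since $x_m\in R_m$ satisfies $x_m(s)\geq m^{0.99}$ for all $s\in\Lambda_m$ and is reachable from some $i\in D_m$, this $x_m$, $p_m$, $y_m$ are exactly the objects claimed.

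One technical point to handle carefully: a stable output configuration must actually be reachable from $x_m$ at all. This follows because $x_m$ is reachable from $i\in D_m\subseteq I_n$, the protocol stabilizes (finite expected time) from $i$, and every configuration on a stabilizing trajectory can itself reach a stable output configuration — so in particular $x_m$ can. A second point is the quantifier bookkeeping between ``infinitely many $m$'' in the negated statement and ``sufficiently large $n$'' in the stabilization hypothesis: since the negated statement gives infinitely many bad $m$, infinitely many of them are ``sufficiently large,'' which suffices to derive the contradiction. I expect the main (minor) obstacle to be precisely this interplay of quantifiers together with making sure the probability bound from the Density Lemma is correctly routed as the parameter $\gamma$ into \claimref{clm:bottletime}; the rest is a direct assembly of the two cited results.
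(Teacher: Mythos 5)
Your proposal is correct and follows essentially the same route as the paper's proof: invoke the Density Lemma to get $\Pr[i \reach X_m] \geq 1/2$ for the set of $m^{0.99}$-rich reachable configurations, then feed $\gamma = 1/2$ into \claimref{clm:bottletime} to derive a contradiction with the assumed $o\left(\frac{n}{f(n)|\Lambda_n|^2}\right)$ stabilization time if every such configuration required a bottleneck to stabilize. The two technical points you flag (reachability of a stable output configuration from $x_m$ via finiteness of expected stabilization time, and the quantifier bookkeeping) are handled the same way in the paper.
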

\begin{proof}
$D_n$ is a set of some legal initial configurations for $n$ agents, 
  which are all given to be fully dense.
We know that the expected time to reach a stable output configuration from 
  these initial configurations is finite.
Hence if $i \reach x_m$ for $i \in D_m$, then a stable output configuration $y_m$ must be reachable 
  from $x_m$ through some transition sequence $p_m$, 
  but we also need $x_m$ and $p_m$ to satisfy the first and third requirements.

We know $|\Lambda_n| \leq 1/2 \log{\log{n}}$ for all large enough $n$.
Hence, by~\lemmaref{lem:density}, starting in any fully dense configuration $i_n \in D_n$, 
  with probability at least $1-(1/n)^{0.99}$, an $n^{0.99}$-rich configuration is reachable.
So for $n > 2$, we get that $\Pr[i_n \reach X_n] \geq 1/2$ where
  $X_n = \{x \mid i_n \reach x$ and $(\forall s \in \Lambda_n) x(s) \geq n^{0.99}\}$. 

Let $Y_n$ be a set of all stable output configurations with $n$ agents.
Suppose that every transition sequence from every configuration $x \in X_n$ to some $y \in Y_n$ 
  has an $f$-bottleneck.
Then, using~\claimref{clm:bottletime}, the expected time to stabilize from $i \in D_n$ is 
  $T[i_n \reach Y_n] \geq \frac{1}{2} \cdot \frac{n-1}{2 f(n) |\Lambda_n|^2} = \Theta(\frac{n}{f(n)|\Lambda_n|^2})$.
But we know that the protocol stabilizes from $i \in D_n$ in time $o(\frac{n}{f(n)|\Lambda_n|^2})$,
  implying that for all sufficiently large $m$, we can find $x_m \in X_m$ from which it is possible 
  to reach a stable output configuration in $Y_m$ without an $f$-bottleneck.
First requirement is satisfied by the definition of $X_m$,
  and we let $p_m$ be the transition sequence from $x_m$ to some $y_m \in Y_m$ without an $f$-bottleneck.
\end{proof}
\begin{lemma}
\label{lem:ordering}
Fix $b \in \mathbb{N}$, and let $B = |\Lambda_n|^2 \cdot b + |\Lambda_n| \cdot b$. 
Let $x, y : \confspacen$ be configurations of $n$ agents,
  such that for all states $s \in \Lambda_n$ we have $ x(s) \geq B^2$ and $x \reach_q y$ via 
  a transition sequence $q$ without a $B^2$-bottleneck.
Define
\begin{equation*}
\Delta = \{d \in \Lambda_n \mid y(d) \leq b\}
\end{equation*}
to be the set of states whose count in configuration $y$ is at most $b$. 
Then there is an order $\Delta = \{d_1, d_2,\ldots, d_k\}$, such that,
  for all $j \in \{1, \ldots, k\}$, there is a transition $\alpha_j$ of the form 
  $(d_j, s_j) \rightarrow (o_j, o_j')$ with $s_j, o_j, o_j' \not\in \{d_1, \ldots, d_j\}$,
  and $\alpha_j$ occurs at least $b$ times in $q$.
\end{lemma}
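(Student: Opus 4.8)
The plan is to build the order from the back, $d_k, d_{k-1}, \dots, d_1$, by repeatedly ``peeling off'' one state. I maintain a residual set $R \subseteq \Delta$, initialized to $R = \Delta$; at each stage I find a state $d \in R$ together with a transition $(d,s)\to(o,o')$ whose partner and both products lie in $\Lambda_n \setminus R$ and which is used at least $b$ times in $q$, then set $d_{|R|} := d$, remove $d$ from $R$, and repeat. Since at the stage when $d$ is placed at position $|R|$ we will have $\{d_1,\dots,d_{|R|}\} = R$ once the construction terminates, a transition avoiding $R$ is exactly a transition avoiding $\{d_1, \dots, d_j\}$, as the lemma demands. So the whole lemma reduces to the claim: \emph{for every nonempty $R\subseteq\Delta$ there are $d\in R$ and $s,o,o'\in\Lambda_n\setminus R$ with $(d,s)\to(o,o')$ a transition of the protocol that occurs at least $b$ times in $q$.}

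To prove the claim I track the potential $\Phi_t = \sum_{d\in R} c_t(d)$ along $x = c_0 \to \cdots \to c_{|q|} = y$. One step changes $\Phi$ by (number of outputs in $R$) minus (number of inputs in $R$), so $|\Phi_{t+1}-\Phi_t|\le 2$; moreover a step that does \emph{not} have both its inputs in $R$ can decrease $\Phi$ only if it has the ``exit'' shape $(d,s)\to(o,o')$ with $d\in R$ and $s,o,o'\notin R$, in which case $\Phi$ drops by exactly $1$ (a one-line case check over the possible input/output patterns). Now $\Phi_0 \ge |R|\,B^2 \ge B^2$ while $\Phi_{|q|} \le |R|\,b \le |\Lambda_n|\,b < B$, using $B = |\Lambda_n|^2 b + |\Lambda_n| b$. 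Let $\tau$ be the last index with $\Phi_\tau \ge B$; it exists and $\tau<|q|$. For every $t>\tau$ we have $\Phi_t<B$, hence \emph{every} state of $R$ has count $<B$ in $c_t$; since $q$ has no $B^2$-bottleneck, any step applied to a configuration $c_t$ with $t>\tau$ cannot have both inputs in $R$, because the product of the two input counts would then be $<B^2$.

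Restricting attention to the suffix of $q$ that transforms $c_{\tau+1}$ into $y$ (we discard the prefix up to $c_{\tau+1}$, in particular the single boundary step $c_\tau\to c_{\tau+1}$ that might consume two $R$-states): throughout this suffix no step consumes two $R$-states, so every $\Phi$-decreasing step there is an exit step and lowers $\Phi$ by exactly $1$. The net drop of $\Phi$ over the suffix is at least $\Phi_{\tau+1}-\Phi_{|q|} \ge (\Phi_\tau-2)-|\Lambda_n|b \ge B-2-|\Lambda_n|b = |\Lambda_n|^2 b - 2$, so at least $|\Lambda_n|^2 b - 2$ exit steps occur in $q$. There are at most $|\Lambda_n|^2$ distinct transitions in total, so by pigeonhole some exit transition $(d,s)\to(o,o')$ is used more than $b-1$, hence at least $b$, times; this $d,s,o,o'$ prove the claim (for $|\Lambda_n|\ge 2$, with the remaining case trivial). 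Running the peeling loop then produces the full ordering $d_1,\dots,d_k$ with the required transitions $\alpha_j$.

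The step I expect to be the main obstacle is getting the interaction with the no-bottleneck hypothesis exactly right: the hypothesis controls the product of counts \emph{in the configuration immediately before} each applied step, so the clean statement ``no step consumes two residual states'' is only valid strictly past the threshold index $\tau$, and one must discard the single boundary step $c_\tau\to c_{\tau+1}$ and make sure the resulting $O(1)$ loss is absorbed by the slack that the additive $|\Lambda_n|b$ term in $B$ — together with integrality in the pigeonhole — provides. The rest, namely the case analysis identifying exit steps as the only source of $\Phi$-decrease once two-$R$-input steps are excluded and the arithmetic $B-2-|\Lambda_n|b = |\Lambda_n|^2 b - 2$, is routine.
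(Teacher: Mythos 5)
Your proof is correct and follows essentially the same route as the paper's: build the order in reverse via the potential $\Phi = \sum_{d \in R} c(d)$, locate the last configuration where $\Phi \geq B$, use the no-$B^2$-bottleneck hypothesis to show every subsequent $\Phi$-decreasing step is an ``exit'' transition dropping $\Phi$ by exactly one, and pigeonhole over the at most $|\Lambda_n|^2$ transition types. Your explicit discarding of the boundary step $c_\tau \to c_{\tau+1}$ and the resulting $-2$ absorbed by integrality is a slightly more careful treatment of a point the paper glosses over, but it does not change the argument.
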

\begin{proof}
This part of the argument is identical to~\cite{CCDS14, DS15} and is described below for the sake of completeness.

Let $k = |\Delta|$ and define $\Delta_k = \Delta$.
We will construct the ordering in reverse, i.e. we will determine $d_j$ 
  for $j = k, k-1, \ldots, 1$ in this order.
At each step, we will define the next $\Delta_{j-1}$ as $\Delta_j \setminus \{d_j\}$.

We start by setting $j = k$.
For all $j$ we define $\Phi_j : (\confspacen) \to \mathbb{N}$ based on $\Delta_j$ as 
  $\Phi_j(c) = \sum_{d \in \Delta_j} c(d)$, i.e. the number of agents in states from $\Delta_j$ 
  in configuration $c$.
Notice that once $\Delta_j$ is well-defined, so is $\Phi_j$.

The following works for all $j \geq 1$ and lets us construct the ordering.
Because $y(d) \leq b$ for all states in $\Delta$, it follows that 
  $\Phi_j(y) \leq j \cdot b \leq |\Lambda_n| \cdot b$.
On the other hand, we know that $x(d) \geq B$ for all $d \in \Delta_j$, 
  hence $\Phi_j(x) \geq B \geq |\Lambda_n| \cdot b \geq \Phi_j(y)$.
Let $c'$ be the last configuration along $q$ from $x$ to $y$ where $\Phi_j(c') \geq B$,
  and $r$ be the suffix of $q$ after $c'$.
Then, $r$ must contain a subsequence of transitions $u$ each of which strictly decreases $\Phi_j$, 
  with the total decrease over all of $u$ being at least 
  $\Phi_j(c') - \Phi_j(y) \geq B - |\Lambda_n| \cdot b \geq |\Lambda_n|^2 \cdot b$.

Let $\alpha: r_1, r_2 \to p_1, p_2$ be any transition in $u$.
$\alpha$ is in $u$ so it strictly decreases $\Phi_j$, and without loss of generality $r_1 \in \Delta_j$.
Transition $\alpha$ is not a $B^2$-bottleneck, since $u$ (and $q$) do not contain such bottlenecks,
  and all configurations $c$ along $u$ have $c(d) < B$ for all $d \in \Delta_j$ by definition of $r$.
Hence, we must have $c(r_2) > B$ meaning $r_2 \not\in \Delta_j$. 
Exactly one state in $\Delta_j$ decreases its count in transition $\alpha$, 
  but $\alpha$ strictly decreases $\Phi_j$,
  so it must be that both $p_1 \not\in \Delta_j$ and $p_2 \not\in \Delta_j$.
We take $d_j = r_1, s_j = r_2, o_j = p_1$ and $o_j' = p_2$.

There are $|\Lambda_n|^2$ different types of transitions.
As each transition in $u$ decreases $\Phi_j$ by exactly one and there are at least 
  $|\Lambda_n|^2 \cdot b$ such instances, at least one transition type must repeat 
  in $u$ at least $b$ times, completing the proof.
\end{proof}
\begin{claim}
\label{clm:surgery1}
There exist configurations $e : \confspacem$ 
  and $z'$ with $z'_{>0}  \subseteq \Gamma_g$, such that $e + u + x_m \reach z'$.
Moreover, we have an upper bound on the counts of states in $e$:
  $\forall s \in \Lambda_m: e(s) \leq 2^{|\Lambda_m|} \cdot g(m)$.
\end{claim}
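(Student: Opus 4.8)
The plan is to unwind the ordering guarantee from \lemmaref{lem:ordering} into an explicit sequence of surgeries on the transition sequence $p_m$, processing the states of $\Delta_b(y)$ in the order $d_1, d_2, \ldots, d_k$ furnished by that lemma. Recall that each $\alpha_j : (d_j, s_j) \to (o_j, o_j')$ satisfies $s_j, o_j, o_j' \notin \{d_1, \ldots, d_j\}$ and occurs at least $b(m) = (6^{|\Lambda_m|} + 2^{|\Lambda_m|}) \cdot g(m)$ times in $p_m$. Starting from $x_m$, the idea is to add a small number of ``helper'' agents in state $d_j$, run the prefix of $p_m$ up to a fresh occurrence of $\alpha_j$, fire $\alpha_j$ once to consume one such $d_j$-agent while producing only agents in states of index $> j$ (hence never in $\{d_1, \ldots, d_j\}$), and repeat. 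The monotonicity of the state-index ordering is exactly what keeps earlier surgeries from being disturbed: once we have ``drained'' all agents out of $d_1, \ldots, d_{j-1}$, firing $\alpha_j$ cannot put anything back into those states.

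Concretely I would proceed by (downward) induction on an index $j$ running from $k$ to $1$, maintaining the invariant that we have a configuration reachable from $e_j + u + x_m$ — for some small $e_j$ supported on $\{d_j, d_{j+1}, \ldots\}$ with each count $\le 2^{|\Lambda_m|} g(m)$ — in which all agents in states $d_{j}, \ldots, d_k$ have been pushed into $\Gamma_g$. At stage $j$ we have at most $g(m)$ agents in state $d_j$ coming from $x_m$ (since $d_j \in \Delta_b$, in fact $\le b(m)$, but the relevant bound after incorporating $u$ is $y(d_j)+u(d_j) \le g(m)$) plus whatever the previous surgeries dumped into $d_j$ via the $o$-outputs; a careful accounting shows this stays below $2^{|\Lambda_m|} g(m)$, which is why we have budgeted at least $6^{|\Lambda_m|} g(m) \le $ occurrences of $\alpha_j$ to consume them — and the factor $6^{|\Lambda_m|}$ (rather than $2^{|\Lambda_m|}$) gives slack to also absorb the $u$-agents sitting in $\Delta_g$. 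Each time we invoke $\alpha_j$ we ``bill'' one of its guaranteed occurrences in $p_m$; since $\alpha_j$ appears $\ge b(m)$ times and we invoke it at most $\sim 2^{|\Lambda_m|}g(m) \le b(m)$ times, there is always a fresh copy available. The configuration $e$ in the claim statement is then $e_1$ (together with the accumulated extra agents), and $z'$ is the final configuration in which every agent lies in $\Gamma_g$.

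The main obstacle, and the part that needs the most care, is the bookkeeping of \emph{how many} helper agents in each state must be injected and verifying that the stated bound $e(s) \le 2^{|\Lambda_m|} \cdot g(m)$ genuinely holds. The subtlety is that surgery at stage $j$ produces agents in states $o_j, o_j'$ of index $>j$, which may lie in $\Delta_g$ and thus inflate the population that stages $j{+}1, \ldots, k$ must later handle; one has to show this inflation does not compound badly. The natural way to control it is a geometric-style estimate: at each of the $\le |\Lambda_m|$ stages the number of agents we must move in any single state at most doubles (each old agent is replaced by at most two new ones, only one of which can land in a not-yet-processed $\Delta_g$ state, but a clean worst-case bound of $2$ per stage suffices), so starting from the $\le g(m)$ agents in each $\Delta_g$-state of $y$ plus $u$, after $|\Lambda_m|$ stages every count is at most $2^{|\Lambda_m|} \cdot g(m)$, matching the claimed bound. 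Besides this, one must double-check the side condition that all the intermediate configurations still have enough agents for the prefixes of $p_m$ to be legally executable — but since we only \emph{add} agents to $x_m$ and $x_m$ already dominates the counts needed to run $p_m$, this is immediate. The rest of the argument is the routine "splice $\alpha_j$ into $p_m$" manipulation, exactly as in \cite{DS15}, with the new wrinkle being the presence of $u$, which is handled uniformly because $u$'s agents live in $\Delta_g$ and so are simply treated as extra agents to be absorbed by the same surgeries.
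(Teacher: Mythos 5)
Your overall approach is the paper's: use the ordering from \lemmaref{lem:ordering} to append extra firings of the transitions $\alpha_j$ that drain the low-count states, adding helper agents to $e$ whenever a reactant $s_j$ is unavailable, and control the blow-up by a geometric estimate over at most $|\Lambda_m|$ stages. However, as written the argument has two concrete problems.

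First, the induction is set up in the wrong direction and contradicts your own (correct) opening observation. Since $\alpha_j : (d_j,s_j)\to(o_j,o_j')$ only guarantees $o_j,o_j'\notin\{d_1,\ldots,d_j\}$, firing $\alpha_j$ may dump agents into $d_{j+1},\ldots,d_k$. Hence the states must be emptied in \emph{increasing} index order, and the invariant after handling $d_j$ is that $d_1,\ldots,d_j$ (more precisely, their intersection with $\Delta_g$) are empty. Your stated invariant --- that after stage $j$ all of $d_j,\ldots,d_k$ have been pushed into $\Gamma_g$, with stages running from $k$ down to $1$ --- cannot be maintained: the surgery for $d_j$ can refill the already-drained $d_{j+1},\ldots,d_k$. (It is also inconsistent with your remark that ``previous surgeries dumped into $d_j$ via the $o$-outputs,'' which only happens under forward processing.) The sentence in your first paragraph about monotonicity keeping earlier surgeries undisturbed is the right idea; the formalization in the second paragraph does not implement it.

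Second, you drain all of $\Delta_b=\{d_1,\ldots,d_k\}$ and assert that at each stage the initial count to be consumed is at most $g(m)$ because $y(d_j)+u(d_j)\le g(m)$. But the boundedness predicate $\mathcal{B}(m,y)$ only guarantees $y(s)+u(s)\le g(m)$ for $s\in\Delta_g(y)$; for $d_j\in\Delta_b\setminus\Delta_g$ the count in $y$ can be as large as $b(m)=(6^{|\Lambda_m|}+2^{|\Lambda_m|})\cdot g(m)$, and draining such states would require far more than $2^{|\Lambda_m|}\cdot g(m)$ helper agents, breaking the claimed bound on $e$. The fix is to run the surgery only over the subsequence $\Delta_g=\{d_{j_1},\ldots,d_{j_l}\}\subseteq\Delta_b$ (in increasing order of the $j_r$): this suffices for $z'_{>0}\subseteq\Gamma_g$ because $\Delta_b\setminus\Delta_g\subseteq\Gamma_g$, and it is exactly for these states that the $g(m)$ starting bound holds, so the per-stage counts grow to at most $(2^r-1)\cdot g(m)$ and the total added to $e$ stays below $2^{l+1}\cdot g(m)\le 2^{|\Lambda_m|}\cdot g(m)$ (using $l<|\Lambda_m|$ since $\Gamma_g\neq\emptyset$). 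Finally, note that the extra firings should simply be appended after $p_m$ terminates in $y+u+e$, rather than spliced into $p_m$ by ``billing'' its existing occurrences of $\alpha_j$; reusing mid-sequence occurrences risks consuming $s_j$-agents that the remainder of $p_m$ needs, and the occurrence count from \lemmaref{lem:ordering} is not actually needed for this claim (it is needed for \claimref{clm:surgery2}, where transitions are removed).
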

\begin{proof}
The proof is analogous to~\cite{DS15}, but we consider a subsequence of the ordered transitions
  $\Delta_b = \{d_1, \ldots, d_k\}$ obtained earlier by~\lemmaref{lem:ordering}.
Since $b(m) \geq g(m)$, we can represent $\Delta_g = \{d_{j_1}, \ldots, d_{j_l}\}$,
  with $j_1 \leq \ldots \leq j_l$.
We iteratively add groups of transitions at the end of transition sequence $p_m$,
  ($p_m$ is the transition sequence from $x_m$ to $y$),
  such that, after the first iteration, the resulting configuration does not contain any agent in $d_{j_1}$.
Next, we add group of transitions and the resulting configuration will not contain any agent agent 
  in $d_{j_1}$ or $d_{j_2}$, and we repeat this $l$ times.
In the end, no agents will be in states from $\Delta_g$.

The transition ordering lemma provides us with the transitions to add.
Initially, there are at most $g(m)$ agents in state $d_{j_1}$ in the system
  (because of the requirement in~\theoremref{thm:surgery} on counts in $u + y$).
So, in the first iteration, we add the same amount (at most $g(m)$)
  of transitions $d_{j_1}, s_{j_1} \to o_{j_1}, o_{j_1}'$, after which, as 
  $s_{j_1}, o_{j_1}, o_{j_1}' \not \in \{d_1, \ldots d_{j_1}\}$,
  the resulting configuration will not contain any agent in configuration $d_{i_1}$.
If there are not enough agents in the system in state $s_{j_1}$ already to add all these transitions,
  then we add the remaining agents in state in $s_{j_1}$ to $e$.
For the first iteration, we may need to add at most $g(m)$ agents.

For the second iteration, we add transitions of type $d_{j_2}, s_{j_2} \to o_{j_2}, o_{j_2}'$ 
  to the resulting transition sequence.
Therefore, the number of agents in $d_{j_2}$ that we may need to consume is at most $3 \cdot g(m)$,
  $g(m)$ of them could have been there in $y + u$, and we may have added $2 \cdot g(m)$ in the 
  previous iteration, if for instance both $o_{j_1}$ and $o_{j_1}'$ were $d_{j_2}$.
In the end, we may need to add $3 \cdot g(m)$ extra agents to $e$.

If we repeat these iterations for all remaining $r=3, \ldots, l$, in the end we will end up in a 
  configuration $z$ that contains all agents in states in $\Gamma_g$ as desired, because of 
  the property of transition ordering lemma that $s_{j_r}, o_{j_r}, o_{j_r}' \not\in \{d_1, \ldots, d_{j_r}\}$.
For any $r$, the maximum total number of agents we may need to add to $e$ 
  at iteration $r$ is $(2^{r} - 1) \cdot g(m)$.
The worst case is when $o_{j_1}$ and $o_{j_1}'$ are both $d_{j_2}$, 
  and $o_{j_2}, o_{j_2}'$ are both $d_{j_3}$, etc.

Finally, it must hold that $l < |\Lambda_m|$, because the final configuration contains $m$ agents 
  in states in $\Gamma_g$ and none in $\{d_{j_1}, \ldots, d_{j_l}\}$, so $\Gamma_g$ cannot be empty.
Therefore, the total number of agents added to $e$ is 
  $g(m) \cdot \sum_{r=1}^{l} (2^r - 1) < 2^{l+1} \cdot g(m) \leq 2^{|\Lambda_m|} \cdot g(m)$.
This completes the proof because $e(s)$ for any state $s$ can be at most the number of agents in $e$,
  which is at most $2^{|\Lambda_m|} \cdot g(m)$.
\end{proof}
\begin{claim}
\label{clm:surgery2}
Let $e$ be any configuration satisfying $\forall s \in \Lambda_m: e(s) \leq 2^{|\Lambda_m|} \cdot g(m)$.
There exist configurations $p$ and $w$, such that 
  $p_{>0} \subseteq \Delta_b$, $w_{>0} \subseteq \Gamma_g$ and $p + x_m \reach p + w + e^{\Delta_g}$.
Moreover, for counts in $p$, we have that $\forall s \in \Lambda_m: p(s) \leq b(m)$
  and for counts in $w^{\Gamma_g}$, we have $\forall s \in \Gamma_g: w(s) \geq 2^{|\Lambda_m|} \cdot g(m)$.
\end{claim}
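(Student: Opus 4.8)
The plan is to prove the claim by a surgery on the transition sequence $p_m$ that is dual to the one in Claim~\ref{clm:surgery1}: there we \emph{drained} the states of $\Delta_g$ to count zero by \emph{appending} copies of the ordered transitions to $p_m$; here we must \emph{adjust} every low-count state to a prescribed target — count $e(d)$ for $d\in\Delta_g$, and some value in $[2^{|\Lambda_m|}g(m),\,b(m)]$ for the states of $\Delta_b\setminus\Delta_g$ (all of which lie in $\Gamma_g$) — so that both appending and \emph{withholding} copies are used. First I would invoke Lemma~\ref{lem:ordering} with $x_m$, $y$, $p_m$ and $b=b(m)$, exactly as in Claim~\ref{clm:surgery1}, to obtain the order $\Delta_b=\{d_1,\ldots,d_k\}$ with transitions $\alpha_j:(d_j,s_j)\to(o_j,o_j')$, $s_j,o_j,o_j'\notin\{d_1,\ldots,d_j\}$, each occurring at least $b$ times in $p_m$. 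Running $p_m$ from $x_m$ and never touching $p$ already yields $p+x_m\reach p+y$, in which the states of $\Gamma_b$ (count $>b$ in $y$) already exceed $2^{|\Lambda_m|}g(m)$; the work is to edit $p_m$ so that the $\Delta_b$-states land on their targets, after which $w$ is read off as the $\Gamma_g$-part of the outcome and $e^{\Delta_g}$ as the $\Delta_g$-part.

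The edits are applied to $\Delta_b$ in the order $d_1,\ldots,d_k$, so that each edit perturbs only states occurring later (which are corrected afterwards). If $d_j$ currently sits above its target, I append copies of $\alpha_j$ as in Claim~\ref{clm:surgery1}; if below, I delete the last of its $\geq b$ occurrences of $\alpha_j$ in $p_m$, at most $2^{|\Lambda_m|}g(m)\leq b$ of them, which leaves exactly the missing number of $d_j$ (and $s_j$) agents unconsumed while opening a matching deficit of $o_j,o_j'$ over the following suffix. Since $p_m$ has no $b_2(m)^2$-bottleneck, the two reactants of any transition it lists are never simultaneously below $b_2(m)$; together with the $b_2(m)$-richness of $x_m$ and the large $\Gamma_b$-counts of $y$, this makes every append find its $s_j$ partner and keeps the deletions from starving a later transition, \emph{except} when the auxiliary states $s_j,o_j,o_j'$ themselves lie in $\Delta_b$ — those transient shortfalls are covered by the reservoir of size up to $b(m)$ that the catalyst $p$ carries in each $\Delta_b$-state (which is exactly why the hypothesis requires $p(s)\leq b(m)$), and the edits are sequenced so that the reservoir is consumed and regenerated in equal measure and $p$ returns to its initial profile. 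Finally, since $y$ keeps all but $O(|\Lambda_m|\cdot b)$ of its $m$ agents on states of $\Gamma_b\subseteq\Gamma_g$ and the edits relocate only $O(|\Lambda_m|\,2^{|\Lambda_m|}g(m))=o(m)$ agents, every state of $\Gamma_g$ ends with at least $2^{|\Lambda_m|}g(m)$ agents; this is $w$, the $\Delta_g$-states carry exactly $e(d)$ so they form $e^{\Delta_g}$, and $p$ is intact.

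I expect the bookkeeping in the second paragraph to be the main obstacle: making one forward pass over $\Delta_b$ hit every target count while keeping the edited sequence executable and $p$ truly catalytic. The delicate point, exactly as in Claim~\ref{clm:surgery1}, is the cascade — an edit for $d_j$ disturbs the later states $s_j,o_j,o_j'$, which a subsequent edit (or a reservoir draw) must compensate, the accumulated perturbations growing geometrically but staying below $2^{|\Lambda_m|}g(m)$ — and one must check this geometric bound alongside the facts that $p$'s reservoir of size $b(m)$ per state both suffices for the transient shortfalls and is exactly restored, which is where the $6^{|\Lambda_m|}$ slack built into $b(m)$ is used.
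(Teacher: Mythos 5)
Your proposal is correct and follows essentially the same route as the paper's proof: invoke the transition ordering lemma, process the states of $\Delta_b$ in the given order, append or delete instances of the transitions $\alpha_j$ to hit the target counts ($e(d)$ on $\Delta_g$, at least $2^{|\Lambda_m|}g(m)$ on $\Delta_b\setminus\Delta_g$), use $p$ as a temporarily-borrowable buffer that is returned at the end, and control the geometrically growing perturbations (the paper's invariant bounds them by $(3^r-1)\cdot 2^{|\Lambda_m|}g(m)$, which is where the $6^{|\Lambda_m|}$ slack in $b(m)$ is spent). The only slip is the claim that at most $2^{|\Lambda_m|}g(m)$ occurrences of $\alpha_j$ ever need to be deleted for a given $d_j$ — the accumulated cascade can force up to $3^{r-1}\cdot 2^{|\Lambda_m|}g(m)$ edits at step $r$ — but your final paragraph already identifies and resolves exactly this point, so the argument is sound.
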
 
\begin{proof}
% The following proof has significant differences from its counterpart in~\cite{DS15}.
As in the proof of~\claimref{clm:surgery1}, we define a subsequence ($j_1 \leq j_l$),
  $\Delta_g = \{d_{j_1}, \ldots, d_{j_l}\}$ of $\Delta_b = \{d_1, \ldots, d_k\}$ 
  obtained using~\lemmaref{lem:ordering}.
We start by the transition sequence $p_m$ from configuration $x_m$ to $y$,
  and perform iterations for $r=1, \ldots k$.
At each iteration, we modify the transition sequence, possibly add some agents to configuration $p$,
  which we will define shortly, 
  and consider the counts of all agents not in $p$ in the resulting configuration.
Configuration $p$ acts as a buffer of agents in certain states that we can temporarily borrow.
For example, if we need $5$ agents in a certain state with count $0$ to complete some iteration $r$,
  we will temporarily let the count to $-5$ (add $5$ agents to $p$), and then we will fix the count 
  of the state to its target value, which will also return the ``borrowed'' agents 
  (so $p$ will also appear in the resulting configuration).
As in~\cite{DS15}, this allows us let the counts of certain states temporarily drop below $0$.

We will maintain the following invariants on the count of agents, excluding the agents in $p$,
  in the resulting configuration after iteration $r$:
\begin{itemize}
\item[1)] The counts of all states (not in $p$) in $\Delta_g \cap \{d_1, \ldots, d_r\}$ 
  match to the desired counts in $e^{\Delta_g}$.
\item[2)] The counts of all states in $\{d_1, \ldots d_r\} \setminus \Delta_g$ 
  are at least $2^{|\Lambda_m|} \cdot g(m)$.
\item[3)] The counts in any state diverged by at most 
  $(3^r-1) \cdot 2^{|\Lambda_m|} \cdot g(m)$ from the respective counts in $y$.
\end{itemize}

These invariants guarantee that we get all the desired properties after the last iteration.
Let us consider the final configuration after iteration $k$.
Due to the first invariant, the set of all agents (not in $p$) in states $\Delta_g$ is exactly $e^{\Delta_g}$.
All the remaining agents (also excluding agents in $p$) are in $w$, and thus, 
  by definition, the counts of states in $\Delta_g$ in configuration $w$ will be zero, as desired.
The counts of agents in states $\Delta_b - \Delta_g = \{d_1, \ldots d_k\} - \Delta_g$ that belong to $w$ 
  will be at least $2^{|\Lambda_m|} \cdot g(m)$, due to the second invariant.
Finally, the counts of agents in $\Gamma_b$ that belong to $w$ will also be at least
  $b(m) - 3^{|\Lambda_m|} \cdot 2^{|\Lambda_m|} \cdot g(m) \geq 2^{|\Lambda_m|} \cdot g(m)$,
  due to the third invariant and the fact that the states in $\Gamma_b$ had counts at least $b(m)$ in $y$.
Finally, the third invariant also implies the upper bound on counts in $p$.
The configuration $p$ will only contain the agents in states $\Delta_b$, 
  because the agents in $\Gamma_b$ have large enough starting counts in $y$ borrowing is never necessary.

In iteration $d_r$, we fix the count of state $d_r$.
Let us first consider the case when $d_r$ belongs to $\Delta_g$.
Then, the target count is the count of the state $d_r$ in $e^{\Delta_g}$, 
  which we are given is at most $2^{|\Lambda_m|} \cdot g(m)$.
Combined with the third invariant, the maximum amount of fixing required may be
  is $3^{r-1} \cdot 2^{|\Lambda_m|} \cdot g(m)$.
If we have to reduce the number of $d_r$,
  then we add new transitions $d_r, s_r \to o_r, o_r'$, similar to~\claimref{clm:surgery1}
  (as discussed above, not worrying about the count of $s_r$ possibly turning negative).
However, in the current case, we may want to increase the count of $d_r$.
In this case, we remove instances of transition $d_r, s_r \to o_r, o_r'$ from the transition sequence.
The transition ordering lemma tells us that there are at least $b(m)$ of these transitions
  to start with, so by the third invariant, we will always have enough transitions to remove.
We matched the count of $d_r$ to the count in $e^{\Delta_g}$, so the first invariant still holds.
The second invariant holds as we assumed $d_r \in \Delta_g$
  and since by~\lemmaref{lem:ordering}, $s_r, o_r, o_r' \not\in \{d_1, \ldots, d_r\}$.
The third invariant also holds, because we performed at most $3^{r-1} \cdot 2^{|\Lambda_m|} \cdot g(m)$
  transition additions or removals, each affecting the count of any other given state by at most $2$, 
  and hence the total count differ by at most
$$(3^{r-1}-1) \cdot 2^{|\Lambda_m|} \cdot g(m) + 2 \cdot 3^{r-1} \cdot 2^{|\Lambda_m|} \cdot g(m) = 
 (3^{r}-1) \cdot 2^{|\Lambda_m|} \cdot g(m).$$

Now assume that $d_r$ belongs to $\Delta_b - \Delta_g$.
If the count of $d_r$ is already larger than $2^{|\Lambda_m|} \cdot g(m)$,
  than we do nothing and move to the next iteration, and all the invariants will hold.
If the count is smaller than $2^{|\Lambda_m|} \cdot g(m)$, 
  then we set the target count to $2^{|\Lambda_m|} \cdot g(m)$ and add or remove transitions 
  as in the previous case, and the first two invariants will hold after the iteration.
The only case when the count might require fixing by more than 
  $(3^{r-1}-1) \cdot 2^{|\Lambda_m|} \cdot g(m)$ is when it originally 
  was between $g(m)$ and $2^{|\Lambda_m|} \cdot g(m)$ and decreased.
Then, as in the previous case, the maximum amount of fixing required is at most   
  $3^{r-1} \cdot 2^{|\Lambda_m|} \cdot g(m)$ and considering the maximum effect on counts, 
  the new differences can be at most $3^r \cdot 2^{|\Lambda_m|} \cdot g(m)$.
As before, we also have enough transitions to remove and 
  the third invariant holds.
\end{proof}
\begin{lemma}
\label{lem:stability}
Consider a population protocol in a system with any fixed number of agents $n$, 
  and an arbitrary fixed function $h:\mathbb{N} \to \mathbb{N}^{+}$ such that $h(n) \geq 2^{|\Lambda_n|}$.
Let $\xi(n) = 2^{|\Lambda_n|}$.
For all configurations $c, c' : \confspacen$, such that 
  $c_{>0} \subseteq \Gamma_h(c) \subseteq \Gamma_{\xi}(c')$, 
  any state producible from $c$ is also producible from $c'$. 
Formally, for any state $s \in \Lambda_n$, 
  $c \reach y$ with $y(s) > 0$ implies $c' \reach y'$ with $y'(s) > 0$.
\end{lemma}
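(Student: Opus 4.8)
The plan is to show that whenever $c$ can produce state $s$, we can ``simulate'' the producing transition sequence starting from $c'$ instead. The key observation is that $c'$ may not literally contain all the agents of $c$ — it only guarantees, for each state in the support of $c$, a supply of at least $\xi(n) = 2^{|\Lambda_n|}$ agents, which is much smaller than the counts $h(n)$ guaranteed in $c$. So a direct embedding of the transition sequence does not work; instead I would argue that a short prefix of reactions from $c'$ can ``inflate'' the counts of every state in the support of $c$ back up to whatever finite level is needed, after which the remainder of the $c$-execution can be replayed verbatim.

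Concretely, let $p$ be a transition sequence with $c \reach_p y$ and $y(s) > 0$. First I would bound the number $N$ of agents in any single state that the execution $p$ ever needs to ``touch'' — this is at most $|c| = n$, or more carefully, at most the number of transitions in $p$ plus the initial counts, a finite quantity $N(p)$. The core step is then a \emph{doubling / amplification} claim: starting from a configuration in which every state of some set $T$ has count at least $2$, and in which $T$ is closed in the sense that $\delta_n(T,T) \subseteq T$ (which holds here with $T = c_{>0}$, since $c_{>0} \subseteq \Gamma_h(c)$ means every state reachable from $c$ stays inside a set we can take to be $T$ — more precisely I would take $T$ to be the set of states producible from $c$, which is closed by definition), one can reach a configuration in which every state of $T$ has count at least $2N(p)$. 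The reason is that any state $t \in T$ with count $\geq 1$ that can participate in a transition producing fresh agents can be ``pumped'': using two agents in state $t$ and the transition $(t, t') \to \cdots$ that generates members of $T$, we can repeatedly grow the population inside $T$ while never letting any count in $T$ drop to $0$. Since we only have $|\Lambda_n|$ states and we start with $\xi(n) = 2^{|\Lambda_n|}$ copies of each, a careful inductive pumping — doubling one state at a time — produces as many agents as we like in every state of $T$ without ever exhausting a state; this is exactly the kind of surgery argument used in Claims~\ref{clm:surgery1} and~\ref{clm:surgery2}, run ``in reverse'' to build counts up rather than tear them down.

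Once we have a configuration $c''$ reachable from $c'$ with $c''(t) \geq N(p)$ for every $t \in T$ (in particular for every state in $c_{>0}$), we have $c'' \geq c$ restricted to $T$ — i.e. $c''$ dominates $c$ on the support of $c$, with spare agents parked in states of $T$. We can then replay the transition sequence $p$ starting from $c''$: every reaction of $p$ operates only on states in $c_{>0} \subseteq T$ and produces states in $T$, and since $c''$ has at least as many agents in each such state as $c$ does, no reaction of $p$ is ever blocked. This drives $c''$ to a configuration $y''$ with $y'' \geq y$ on the support of $y$, hence $y''(s) \geq y(s) > 0$, as required.

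\medskip
\noindent\textbf{Main obstacle.} The delicate point is the amplification step: one must verify that the pumping reactions inside $T$ genuinely never drive any count in $T$ to zero, and that $T$ really is closed under $\delta_n$ when defined as the set of producible states from $c$. The inclusion hypothesis $c_{>0} \subseteq \Gamma_h(c)$ is what makes this work — it says the ``starting'' states of $c$ are among the high-count (hence, by the structure of $\Gamma$, stably producible) states — but translating this cleanly into ``$T$ closed and every $t \in T$ pumpable'' requires being careful about whether $\Gamma_h(c)$ versus the producible set of $c$ coincide, and about the exact gap $h(n) \ge \xi(n) = 2^{|\Lambda_n|}$ being enough slack to carry out $|\Lambda_n|$ successive doublings. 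I expect the bookkeeping of these counts — mirroring the $2^{|\Lambda_m|}$ and $3^r$ factors appearing in Claims~\ref{clm:surgery1}--\ref{clm:surgery2} — to be the real work, while the high-level replay argument is routine.
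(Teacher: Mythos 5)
Your plan --- inflate the counts in $c'$ until it dominates $c$ on the support of $c$, then replay the witness sequence $p$ verbatim --- has a gap that I do not think can be repaired, and it is not the route the paper takes.

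The fatal problem is the amplification step. In a population protocol every transition is two-agents-in, two-agents-out, so the total number of agents is conserved: from $c'$ you can only ever redistribute the $|c'|$ agents you start with. The lemma does not assume $|c'| = |c|$, and in the way it is actually invoked (e.g.\ in \corollaryref{crl:boundle}) $c'$ is a tiny sub-configuration consisting of $2^{|\Lambda_m|}$ agents per state of $\Gamma_g(y)$, while $c = z$ has $2m$ agents. So no configuration reachable from $c'$ can dominate $c$ on its support, and the ``replay $p$ from $c''$'' step can never be set up. Even setting the cardinality issue aside, the pumping claim itself is false: membership of a state $t$ in the producible set only means \emph{some} reachable configuration has $c(t) > 0$; it does not mean the count of $t$ can be driven up to an arbitrary target $N(p)$. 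A state that appears only as a reactant (never as a product) of the transitions available inside $T$ has a monotonically non-increasing count, so ``doubling one state at a time'' simply cannot be carried out. Your ``Main obstacle'' paragraph correctly flags this step as the delicate one, but treats it as bookkeeping when it is in fact the point where the approach breaks.

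The paper's proof inverts the quantifiers: rather than making $c'$ rich enough to simulate \emph{all} of $p$, it shows that for any single producible state $s$ there is \emph{some} witness sequence from $c$ that touches at most $2^{|\Lambda_n|}$ agents of each initial type --- which is exactly what $\Gamma_h(c) \subseteq \Gamma_{\xi}(c')$ with $\xi(n) = 2^{|\Lambda_n|}$ guarantees $c'$ can supply. Concretely, one builds $S_0 \subset S_1 \subset \cdots \subset S_k$ with $S_0 = \Gamma_h(c)$ and, at each step, adds one new state $s$ produced by a transition $(r_1,r_2) \to (s,p)$ with $r_1, r_2 \in S_j$; the witness for $s$ runs the witness for $r_1$ and the witness for $r_2$ on two \emph{disjoint} halves of a budget of $2^{j+1}$ agents per type and then applies the transition once. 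Since the $S_j$ grow strictly, $k \le |\Lambda_n|$ and the budget never exceeds $2^{|\Lambda_n|}$ agents per type. This is why the threshold $\xi$ is exactly $2^{|\Lambda_n|}$ --- it is the cost of $|\Lambda_n|$ doublings of the witness, not slack for inflating counts.
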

\begin{proof}
Since $h(n) \geq 2^{|\Lambda_n|}$, for any state from $\Gamma_h(c)$, its count in $c$ 
  is at least $2^{|\Lambda_n|}$.
As $\Gamma_h(c) \subseteq \Gamma_{\xi}(c')$, the count of each of these states in $c'$ 
  is also at least $\xi(n) = 2^{|\Lambda_n|}$.
% The only states present in $c$ are from $\Gamma_g(c)$.
We say two agents have the same type if they are in the same state in $c$.
We will prove by induction that any state that can be produced by some transition sequence from $c$, 
  can also be produced by a transition sequence in which at most $2^{|\Lambda_n|}$ agents
  of the same type participate (ever interact).
Configuration $c$ only has agents with types (states) in $\Gamma_h(c)$, 
  and configuration $c'$ also has at least $2^{|\Lambda_n|}$ agents for each of those types, 
  the same transition sequence can be performed from $c'$ to produce the same state as from $c$, 
  proving the desired statement.

The inductive statement is the following. 
There is a $k \leq |\Lambda_n|$, such that for each $i = 0, 1, \ldots, k$ we can find sets 
  $S_0 \subset S_1 \subset \ldots \subset S_k$ where $S_k$ contains all the states that are 
  producible from $c$, and all sets $S_j$ satisfy the following property.
Let $A_j$ be a set consisting of $2^j$ agents of each type in $\Gamma_h(c)$,
  out of all the agents in configuration $c$ (we could also use $c'$), 
  for the total of $2^j \cdot |\Gamma_h(c)|$ agents.
There are enough agents of these types in $c$ (and in $c'$) as $j \leq k \leq |\Lambda_n|$.
Then, for each $0 \leq j \leq k$ and each state $s \in S_j$, 
  there exists a transition sequence from $c$ in which only the agents in $A_j$ ever interact and 
  in the resulting configuration, one of these agents from $A_j$ ends up in state $s$.

We do induction on $j$ and for the base case $j=0$ we take $S_0 = \Gamma_h(c)$.
The set $A_0$ as defined contains one ($2^0$) agent of each type in $\Gamma_h(c) = S_0$\footnote{In $c$, all the agents are in one of the states of $\Gamma_h(c)$, so as long as $n>0$ there must be at least one agent per state (type). So, if $\Gamma_h(c) = \emptyset$, then $n$ must necessarily be $0$, so nothing is producible $A_0 = \emptyset$, $k=0$ and we are done}.
All states in $S_0$ are immediately producible by agents in $A_0$ via an empty transition sequence 
  (without any interactions).

Let us now assume inductive hypothesis for some $j \geq 0$.
If $S_j$ contains all the producible states from configuration $c$, 
  then $k=j$ and we are done.
We will have $k \leq |\Lambda_n|$, because $S_0 \neq \emptyset$ and $S_0 \subset S_1 \subset \ldots S_k$
  imply that $S_k$ contains at least $k$ different states, and there are $|\Lambda_n|$ total.
Otherwise, there must be some state $s \not \in S_j$ that can be produced after an interaction between
  two agents both in states in $S_j$, let us say by a transition 
  $\alpha: r_1, r_2 \to s, p$ with $r_1, r_2 \in S_j$ (or there is no state that cannot already be produced).
Also, as $S_j$ contains at least $j$ states out of $|\Lambda_n|$ total, 
  and there is the state $s \not \in S_j$, $j < |\Lambda_n|$ holds and the set $A_{j+1}$ is well-defined. 
Let us partition $A_{j+1}$ into two disjoint sets $B_1$ and $B_2$ where each contain 
  $2^j$ agents from $c$ for each type.
Then, by induction hypothesis, there exists a transition sequence where only the agents in $B_1$ 
  ever interact and in the end, one of the agents $b_1 \in B_1$ ends up in the state $r_1$.
Analogously, there is a transition sequence for agents in $B_2$, after which an agent $b_2 \in B_2$
  ends up in state $r_2$.
Combining these two transition and adding one instance of transition $\alpha$ in the end 
  between agents $b_1$ and $b_2$ (in states $r_1$ and $r_2$ respectively) leads to a configuration
  where one of the agents from $A_{j+1}$ is in state $s$.
Also, all the transitions are between agents in $A_{j+1}$.
Hence, setting $S_{j+1} = S_j \cup \{s\}$ completes the inductive step.
\end{proof}
\begin{corollary}
\label{crl:boundmaj}
Any monotonic population protocol with $|\Lambda_n| \leq 1/2 \log{\log{n}}$ states 
  for all sufficiently large number of agents $n$ that stably computes correct majority decision 
  for initial configurations with majority advantage $\epsilon n$, must take 
  $\Omega \left( \frac{n}{36^{|\Lambda_n|} \cdot |\Lambda_n|^6 \cdot \max(2^{|\Lambda_n|},\epsilon n)^2} \right)$ 
  expected parallel time to stabilize.
\end{corollary}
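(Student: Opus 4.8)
The plan is to follow the proof of \corollaryref{crl:boundle} almost verbatim, with a single new ingredient: rather than taking the auxiliary configuration $u$ of \theoremref{thm:surgery} to be empty, I take $u$ to be a small population of agents in the \emph{minority} initial state, just large enough that adjoining it to the doubled initial configuration $i = i_m + i_m$ flips which of $A, B$ is the majority. Accordingly I set $g(n) = \max\left(2^{|\Lambda_n|},\, 3\epsilon n + 1\right)$, so that $g(n) \geq 2^{|\Lambda_n|}$ and, as long as $\epsilon$ stays bounded away from $1/3$ (outside which the claimed bound is sub-constant, hence vacuous), also $6^{|\Lambda_n|}|\Lambda_n|^2 g(n) = o(n^{0.99})$; the hypotheses of \theoremref{thm:surgery} on $g$ thus hold, and since $g(n)^2 = \Theta\!\left(\max(2^{|\Lambda_n|},\epsilon n)^2\right)$, a contradiction with fast stabilization will produce exactly the stated bound.

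The remaining hypotheses are easy to discharge. Monotonicity is assumed; input-additivity is automatic for majority, since $I_n$ contains every configuration with all agents in $A$ or $B$ and hence $i_n + i_n \in I_{2n}$; and for fully dense initial configurations I take $i_n$ with $i_n(A) = \lceil (1+\epsilon)n/2 \rceil$ and $i_n(B) = n - i_n(A)$, which for $\epsilon$ bounded away from $1$ has $\Theta(n)$ agents in each state and majority advantage $\epsilon n$ toward $A$. Let $D_n$ be the set of such configurations; assuming for contradiction that $\mathcal{P}$ stabilizes in $o\!\left(n/(6^{|\Lambda_n|}|\Lambda_n|^3 g(n))^2\right)$ expected time from every configuration in $D_n$, I apply \theoremref{thm:surgery} (instantiated via \lemmaref{lem:bottlefree} with this $D_n$, as in the theorem's own proof) to obtain, for infinitely many $m$ on which the protocol is fixed over a range $[m, cm]$ for a suitable constant $c$, an initial configuration $i = i_m + i_m \in I_{2m}$ with $A$-advantage $2\epsilon m$ together with a stable output configuration $y$ of $m$ agents. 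Since $y$ is stable for $i_m$ and $i_m$ has $A$-majority, every agent of $y$ is in a state mapping to $\id{Win}_A$; in particular $y(B) = 0$, so $B \in \Delta_g(y)$. Now let $u$ consist of $t := 2\epsilon m + 1$ agents in state $B$. Then $u$ satisfies the predicate $\mathcal{B}(m, y)$: it has between $0$ and $m$ agents since $t \leq g(m) \leq m$; $u_{>0} = \{B\} \subseteq \Delta_g(y)$; and $y(B) + u(B) = t \leq g(m)$, while $y(s) + u(s) = y(s) \leq g(m)$ for every other $s \in \Delta_g(y)$ by definition. Hence \theoremref{thm:surgery} gives $i + u \reach z$ with $z_{>0} \subseteq \Gamma_g(y)$.

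Every state of $\Gamma_g(y)$ has $y$-count exceeding $g(m) \geq 1$, hence occurs in $y$ and maps to $\id{Win}_A$; therefore $z$ outputs $\id{Win}_A$. But $i + u$ is a legal initial configuration in $I_{2m+t}$ with $(i+u)(A) = (1+\epsilon)m$ and $(i+u)(B) = (1-\epsilon)m + t = (1+\epsilon)m + 1$ (up to rounding), so its true majority is $B$, and a correct protocol must stabilize from $i+u$ to $\id{Win}_B$. (If one insists that the protocol need only be correct on inputs whose advantage equals $\epsilon$ times their size, replace $t$ by $\lceil 4\epsilon m/(1-\epsilon)\rceil = \Theta(\epsilon m) \leq g(m)$, making the advantage of $i+u$ exactly $\epsilon|i+u|$.) To get the contradiction I must upgrade ``$z$ outputs $\id{Win}_A$'' to ``every configuration reachable from $i+u$ outputs $\id{Win}_A$'', which is done exactly as in \corollaryref{crl:boundle}: if some $z^{*}$ reachable from $z$ had an agent in a $\id{Win}_B$-state, let $z^{+} \geq z$ be $z$ with every present state topped up to count $2^{|\Lambda_m|}+1$ (so $z^{+}_{>0} = z_{>0} \subseteq \Gamma_g(y)$ and $z^{+}$ still reaches a $\id{Win}_B$-agent); applying \lemmaref{lem:stability} to $z^{+}$ and to the configuration $c'$ holding $2^{|\Lambda_m|}+1$ agents in each state of $\Gamma_g(y)$ shows a $\id{Win}_B$-state is producible from $c'$; but $y$ has more than $g(m) \geq 2^{|\Lambda_m|}$ agents in each state of $\Gamma_g(y)$, so one copy of $c'$ embeds into $y$ and the same transition sequence, replayed on it, produces a $\id{Win}_B$-agent in a configuration reachable from $y$, contradicting that $y$ is a stable correct-$A$ configuration. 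Hence $i+u$ never produces a $\id{Win}_B$-state, contradicting correctness on $i+u$, and the stated lower bound follows.

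The main obstacle is the choice of $g$: the swaying population $u$ must at once fit inside the boundedness predicate $\mathcal{B}(m,y)$, which caps $u(B)$ by $g(m)$, and be heavy enough to out-vote the $2\epsilon m$-agent $A$-advantage of $i_m + i_m$, which forces $u(B) > 2\epsilon m$. Reconciling these two demands is what pins $g(n)$ at $\Theta(\max(2^{|\Lambda_n|},\epsilon n))$, and hence is the sole place where the majority argument genuinely deviates from the leader-election one; in the latter the corresponding slack instead came from needing $\ell(m)$ disjoint embedded copies in the final step rather than from any auxiliary $u$.
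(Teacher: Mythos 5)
Your overall strategy is the paper's: same choice of $g(n) = \Theta(\max(2^{|\Lambda_n|}, \epsilon n))$, same swaying population $u$ of roughly $2\epsilon m$ minority agents fed into \theoremref{thm:surgery}, and the same use of \lemmaref{lem:stability} to upgrade ``$z$ outputs $\id{Win}_A$'' to stability of the wrong answer. But there is one genuine gap, and it sits exactly at the step that distinguishes the majority corollary from the leader-election one. You justify $B_m \in \Delta_g(y)$ by asserting $y(B_m) = 0$, deduced from ``every agent of $y$ is in a state mapping to $\id{Win}_A$.'' That deduction requires $\gamma_m(B_m) = \id{Win}_B$, which the problem definition does not impose: $B_m$ is just a designated initial state, and nothing prevents $\gamma_m(B_m) = \id{Win}_A$, in which case a stable $\id{Win}_A$-configuration $y$ could a priori contain many agents in state $B_m$, breaking both $u_{>0} \subseteq \Delta_g(y)$ and the cap $y(B_m) + u(B_m) \leq g(m)$. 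The paper closes this hole with a separate application of \lemmaref{lem:stability}: the all-$B_m$ initial configuration must eventually produce a $\id{Win}_B$-state, so any configuration with at least $2^{|\Lambda_m|}$ agents in $B_m$ can also produce one; since $y$ is a stable $\id{Win}_A$-configuration, this forces $y(B_m) < 2^{|\Lambda_m|}$. With that bound in hand the constants in $g$ and in $|u|$ must be re-tuned slightly (the paper uses $g(n) = \max(2^{|\Lambda_n|+1}, 4\epsilon n)$ and $|u| = g(m)/2 + 1$) so that $y(B_m) + u(B_m) \leq g(m)$ still holds; your $g(n) = \max(2^{|\Lambda_n|}, 3\epsilon n + 1)$ with $|u| = 2\epsilon m + 1$ does not leave room for a nonzero $y(B_m)$ when $2^{|\Lambda_m|}$ and $\epsilon m$ are comparable.

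A secondary, quantitative slip: ``$\epsilon$ bounded away from $1/3$'' is not the right regime restriction. For any constant $\epsilon > 0$ you get $g(n) = \Theta(n)$, and the hypothesis $6^{|\Lambda_n|} |\Lambda_n|^2 g(n) = o(n^{0.99})$ of \theoremref{thm:surgery} fails. The correct observation (made in the paper) is that the claimed bound is sub-constant, hence vacuous, unless $\epsilon n = O(\sqrt{n})$, and in that regime $g(n) = O(\sqrt{n})$ so the hypothesis on $g$ does hold. On the positive side, your ``top-up'' of $z$ to $z^+$ before invoking \lemmaref{lem:stability}, and your remark about making the advantage of $i+u$ exactly $\epsilon|i+u|$, are both more careful than the paper's own treatment of those points.
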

\begin{proof}
We set $g(n) = \max(2^{|\Lambda_m|+1}, 4 \epsilon n)$.
For majority computation, initial configurations consist of agents in one of two states,
  with the majority state holding an $\epsilon n$ advantage in the counts.
Therefore, the sum of two initial configurations of the same protocol is also a valid initial configuration, 
  and thus monotonic populations protocols for majority computation must be input-additive.
The bound is nontrivial only in a regime $\epsilon n \in o(\sqrt{n})$, 
  which we will henceforth assume without loss of generality.
The initial configurations we consider from $I_n$ will all have advantage $\epsilon n$,
  and are all be fully dense.

Let us prove that for all sufficiently large $m$, in any final stable configuration $y$, 
  strictly less than $2^{|\Lambda_m|} \leq g(m) / 2$ agents will be in the initial minority state $B_m$.
The reason is that if $c$ is the initial configuration of all $m$ agents in state $B_m$, the protocol 
  must stabilize from $c$ to a final configuration where the states correspond to decision $\id{Win}_B$.
By~\lemmaref{lem:stability}, from any configuration that contains at least $2^{|\Lambda_m|}$ agents in $B_m$
  it would also be possible to reach a configuration where some agent supports decision $B_m$.
Therefore, all stable final configuration $y$ have at most $g(m)/2 - 1$ agents in initial minority state $B_m$.
This allows us to let $u$ be a configuration of $g(m)/2 + 1 \geq 2 \epsilon m + 1$ agents in state $B_m$.

Assume, to the contrary, that the protocol stabilizes in parallel time 
  $o \left( \frac{n}{36^{|\Lambda_n|} \cdot |\Lambda_n|^6 \cdot \max(2^{|\Lambda_n|}, \epsilon n)^2} \right)$.
% Recall that $I$ is the sequence of sets of dense initial configurations $I_n$.
We only consider initial configurations that are fully dense and contain $\frac{(1+\epsilon)n}{2}$ agents 
  in state $A_n$ and $\frac{(1-\epsilon)n}{2}$ agents in state $B_n$, i.e. 
  having majority state $A_n$ with advantage $\epsilon n$.
Let $I'_n \subseteq I_n$ contain only this fully dense initial configuration for each $n$ and  
  using~\theoremref{thm:surgery} with $I'_n$ instead of $I_n$, 
  we can find infinitely many configurations $i$ and $z$ of at most $3m$ agents, such that 
  (1) $i + u \reach z$,
  (2) $i \in I'_{2m}$, i.e. it is an initial configuration of $2m$ agents 
  with majority state $A_{2m} = A_m$ and advantage $2 \epsilon m$.
  (3) $|\Lambda_m| = |\Lambda_{2m}| = |\Lambda_{2m + |u|}| = |\Lambda_{3m}|$ and by monotonicity,
  the same protocol is used for all number of agents between $m$ and $3m$, 
  (4) $z_{>0} \subseteq \Gamma_g(y)$, i.e all agents in $z$ are in states 
  that have counts at least $g(m)$ in some stable output configuration $y$ of $m$ agents.
  % reachable from an initial configuration in $I_m$.

To get the desired contradiction we will prove two things.
First, $z$ is actually a stable output configuration for decision $\id{Win}_A$ (majority opinion in $i$),
  and second, $i + u$ is a valid initial configuration for the majority problem,
  but with majority state $B_{m} = B_{2m+|u|}$.
This will imply that the protocol stabilize to a wrong outcome, and complete the proof by contradiction.

If we could reach a configuration from $z$ with any agent in a state $s$ that maps to output $\id{Win}_B$
  ($\gamma_m(s) = \id{Win}_B$), then by~\lemmaref{lem:stability}, from a configuration $y$ (which contains 
  $2^{|\Lambda_m|}$ agents in each of the states in $\Gamma_g(y)$) we can also reach a 
  configuration with an agent in a state $s$ that maps to output $\id{Win}_B$.
However, configuration $y$ is a final stable configuration for an initial 
  configuration in $I_m$ with a majority $A_m$.

Configuration $i \in I_{2m}$ contains $2 \epsilon m$ more agents in state $A_m$ states than in state $B_m$.
Configuration $u$ consists of at least $2 \epsilon m + 1$ agents all in state $B_m$.
Hence, $i + u$ which is a legal initial configuration from $I_{2m + |u|}$ has a majority of agents in state $B_m$.
\end{proof}

\section{Analysis of the Majority Algorithm}
The update rules in~\figureref{fig:logmajo} are chained, 
  i.e. a cancel is followed by a join and a split.
This is an optimization, applying as many possible reactions as possible.
However, for the analysis we consider a slight modification,
  where we only apply split only if both join and cancel were unsuccessful. 

For presentation purposes, we assume that $n$ is a power of two,
  and when necessary, we assume that it is sufficiently large.
Throughout this proof, we denote the set of nodes executing the protocol by $V$. 
We measure execution time in discrete steps (rounds), 
  where each time step $t$ corresponds to an interaction. 
The \emph{configuration} at a given time $t$ is a function $c : V \rightarrow Q$, 
  where $c(v)$ is the state of the node $v$ at time $t$. 
  (We omit the explicit time $t$ when clear from the context.)
  
Recall that a value of a state $\langle x, y \rangle$ is defined as $x-y$ and 
  we will also refer to $\max(x, y)$ as the \emph{level} of this node.  
We call $\langle x, y \rangle$ a \emph{mixed} state if both $x$ and $y$ are non-zero, 
  and a \emph{pure} state otherwise. 
A mixed or pure node is a node in a mixed or a pure state, respectively.

The rest of this section is focused on proving the following result.
\begin{theorem}
\label{thm:majcorrect}
The Split-Join algorithm will never stabilize to the minority decision, 
  and is guaranteed to stabilize to the majority decision within $O( \log^3 n )$ parallel time, 
  both in expectation and w.h.p. 
\end{theorem}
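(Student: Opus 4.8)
First I would establish correctness. Define the potential $\Phi = \sum_{v \in V} \mathrm{value}(c(v))$, the sum over all nodes of the value of each node's state. The key invariant is that every reaction---\texttt{cancel}, \texttt{join}, \texttt{split}, and \texttt{normalize}---preserves the sum of the values of the (at most four) integers it touches; this is essentially a line-by-line check of Figure~\ref{fig:logmajo}, using that $\mathrm{Reduce}(u,v)$ always returns a pair summing to $u-v$, that \texttt{join} moves a summand from one node to another, and that \texttt{split} replaces $(2^k, 0)$ by $(2^{k-1}, 2^{k-1})$. Hence $\Phi$ is constant. Initially $\Phi = 2^{\lceil \log n\rceil}(a-b)$ where $a,b$ are the counts of $A,B$; thus $\Phi > 0$ when $A$ is the majority and $\Phi<0$ when $B$ is. In any configuration where all nodes output $\mathrm{Win}_B$, every node has value $\le 0$ with at least one value $<0$ (a weak $\langle 0,0\rangle^-$ still contributes value $0$ but is counted as negative), so---more carefully---if every node's value is $\le 0$ then $\Phi \le 0$, contradicting $\Phi>0$; so not all nodes can output $\mathrm{Win}_B$, and symmetrically. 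This shows the algorithm never stabilizes to the minority decision.

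Next, the stabilization bound. The natural ``clock'' is the maximum level $L_t = \max_v \max(x_v, y_v)$ in the system, which is a power of two between $1$ and $2^{\lceil \log n\rceil}$, i.e.\ there are only $O(\log n)$ distinct values. I would argue in two interleaved parts. (i) The maximum level is non-increasing in $t$: \texttt{cancel} and \texttt{split} only ever decrease levels, and \texttt{join} can create level $2^{k+1}$ only from two nodes already at level $2^k$, so it cannot exceed the current max unless two nodes are \emph{at} the current max---and one shows, via the normalization condition $2\min(x,y)\neq\max(x,y)$ together with the order cancel-then-join, that the total ``mass'' $\sum_v \max(x_v,y_v)$ at the top level is itself non-increasing, so the top level is eventually vacated. (ii) While the maximum level is some value $\ell$, I claim that after $O(n\log^2 n)$ expected additional interactions the number of nodes at level $\ell$ strictly decreases (or all become pure and get split/cancelled down), so within $O(n\log^2 n)$ interactions per level-drop and $O(\log n)$ level-drops we get $O(n\log^3 n)$ interactions, i.e.\ $O(\log^3 n)$ parallel time. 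The w.h.p.\ version follows from the same argument with a $\log n$ tail factor absorbed into one of the logs, or by a Chernoff bound on the number of ``useful'' interactions in a window of length $\Theta(n\log^3 n)$.

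The crux of step (ii)---and the main obstacle---is controlling how nodes at the top level $\ell$ interact. A top-level node is in a pure state $\langle \ell, 0\rangle$ or $\langle 0,\ell\rangle$ (it cannot be mixed, since a mixed state at level $\ell$ would have the smaller coordinate equal to some $\ell' \le \ell/2$, and the well-formedness condition forbids $\ell' = \ell/2$, while $\ell' \le \ell/4$ combined with \texttt{split}/\texttt{cancel} behavior can be ruled out---this case analysis needs care). When two top-level pure nodes of opposite sign meet, \texttt{cancel} reduces both; when two of the same sign meet, \texttt{join} would create level $2\ell > \ell$, contradicting maximality unless there are none such, meaning all top-level nodes have the same sign; when a top-level node meets a node with a zero in the matching coordinate, \texttt{split} halves it. The argument must show that in each regime the count of top-level nodes decreases at a rate of at least $\Omega(1/(n\log n))$ per interaction---the $\log n$ loss coming from the fact that a top-level node might first have to ``wait'' to meet a suitable partner among possibly few candidates, which is where I would invoke a standard bound on the expected time for a designated small set of nodes to interact, summed over the $O(\log n)$ levels and the potentially many nodes per level, and then argue the per-level work telescopes. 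Making the bound on ``suitable partners'' tight enough to get $O(\log^3 n)$ rather than, say, $O(\log^4 n)$ is the delicate quantitative point; I would handle it by tracking $\sum_v \max(x_v,y_v)$ as a secondary potential that drops by a constant factor every $O(n\log^2 n)$ interactions.
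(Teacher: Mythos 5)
Your correctness argument is the same as the paper's (Invariant~\ref{sum-invariant}: the value sum is conserved, so the minority sign can never take over every node). The stabilization argument, however, has a genuine gap, and it is conceptual rather than just quantitative. Your ``clock'' is the \emph{global} maximum level $L_t=\max_v\max(x_v,y_v)$, and your plan is that it drops through its $O(\log n)$ possible values until the system stabilizes. But the algorithm does not stabilize by driving $L_t$ to the bottom: in a typical stable configuration the majority-sign nodes retain large levels forever (e.g.\ with $a-b=\Theta(n)$ the conserved sum forces an average value of $\Theta(n)$ per node, so most nodes must permanently hold values near $2^{\lceil\log n\rceil}$). Once the top-level nodes all carry the majority sign and no partner with a zero in the matching coordinate remains, neither \texttt{cancel} nor \texttt{split} fires on them and the count of top-level nodes stops decreasing --- yet the system may still be far from stabilized because low-level minority-sign nodes persist. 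The quantity that must be driven down is the maximum level \emph{restricted to each sign separately}; the paper's \claimref{clm:elimchf} halves the maximum level among the negative-sign (resp.\ positive-sign) nodes during stretches of ``positive-rounds'' (resp.\ ``negative-rounds''), and \claimref{clm:thrcases} shows one of the two signed maxima is forced down to $1$, after which a separate finishing argument (\claimref{clm:4verge}) eliminates the residual $\{-1,0\}$ minority values. Your proposal never makes this sign-separated distinction, so the telescoping over ``level-drops'' does not terminate at stabilization.

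The second missing ingredient is the supply of ``suitable partners.'' Your rate claim --- one level-drop per $O(n\log^2 n)$ interactions --- implicitly needs every top-level minority node to meet, at per-interaction probability $\Omega(1/(n\log n))$, a pure node of value $0$ or of the opposite sign. Nothing in your outline guarantees such nodes exist in quantity: if almost all nodes become mixed, the eliminating reactions stall. The paper spends two nontrivial claims on exactly this: \claimref{clm:cauchy} (a bucket/Cauchy--Schwarz argument showing mixed nodes annihilate each other quickly) and \claimref{clm:randomwalk} (a concentration argument showing at least $n/(2^{11}\log n)$ pure nodes survive throughout the first $\poly(n)$ rounds). This is where one of the three $\log$ factors comes from, and it cannot be waved away as ``a standard bound on the expected time for a designated small set to interact.'' Finally, your remark that the expectation bound follows by ``absorbing a Chernoff tail'' is insufficient: to convert the w.h.p.\ bound into an expectation bound you need a finite worst-case expected stabilization time from an \emph{arbitrary} reachable configuration to cover the failure event, which is the role of \lemmaref{lem:lowp} (an $O(n^4)$ parallel-time bound multiplied by the $O(1/n^5)$ failure probability).
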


\paragraph{Correctness}
We first prove that nodes never stabilize to the sign of the initial minority (safety), 
  and that they eventually stabilize to the sign of the initial majority (termination).

The first statement follows since given the interaction rules of the algorithm, 
  the sum of the encoded values stays constant as the algorithm progresses.
The proof follows by the structure of the algorithm.
\begin{invariant}
\label{sum-invariant}
The sum $\sum_{v \in V} \id{value}(c(v))$ never changes, 
  for all reachable configurations $c$ of the protocol.
\end{invariant}
\noindent This invariant implies that the algorithm may never stabilize to a wrong decision value.
For instance, if the initial sum is positive, then positive values must always exist in the system. 
Therefore we only need to show that the algorithm stabilizes to configurations where 
  all nodes have the same sign.
We do this via a rough bound, assuming an arbitrary starting configuration.
\begin{claim}
\label{clm:splitbd}
There are at most $2n^2$ split reactions in any execution.
\end{claim}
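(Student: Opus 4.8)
The plan is to produce a non‑negative integer potential $\Phi$ on configurations that starts below $2n^2$, never increases under the \lit{cancel}, \lit{join}, or \lit{normalize} substeps, and strictly decreases by at least $1$ at every interaction whose \lit{split} substep actually alters a state; summing the per‑interaction drops then bounds the number of split reactions by $\Phi$'s initial value. The potential I would use is
$\Phi(c) = \sum_{v\in V}\max(x_v,y_v) + w(c)$, where $\langle x_v,y_v\rangle = c(v)$ and $w(c)$ is the number of nodes in a weak state $\langle 0,0\rangle^{\pm}$. In the initial configuration every node is in state $A$ or $B$, so $\max(x_v,y_v)=2^{\lceil\log n\rceil}\le 2n$ and $w=0$, giving $\Phi\le 2n^2$ (indeed $\Phi=n^2$ under the power‑of‑two assumption); and $\Phi\ge 0$ always, since every component and $w$ are non‑negative integers.

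Next I would check monotonicity of $\Phi$ under the non‑split substeps. The key observation is that every output of \id{Reduce}$(u,v)$ has both coordinates bounded by the corresponding input, so a \lit{cancel} substep cannot raise $\max(x_v,y_v)$ for either participant; and if it sends a participant to $\langle 0,0\rangle$, that node's $\max$‑term drops from a positive integer to $0$, which pays for the unit increase of $w$. For a positive‑side \lit{join} with $y_1=y_2$: since the value $x_1-y_1$ is positive and every component is a power of two or $0$, we have $x_1\ge 2y_1$, so replacing $y_1$ by $2y_1$ leaves $\max(x_1,2y_1)=x_1$ unchanged — or, in the boundary case $x_1=2y_1$, \lit{normalize} then collapses the node to $\langle 0,0\rangle$, again trading a drop of at least $2$ in the $\max$‑term for a unit of $w$ — while the other participant's $y$‑component is zeroed, which cannot raise its $\max$. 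The negative‑side case is symmetric, and \lit{normalize} only applies \id{Reduce} and relabels $\langle 0,0\rangle$, so it is covered too. Hence no non‑split substep increases $\Phi$. (Note this analysis does not rely on whether \lit{split} is gated behind failed \lit{cancel}/\lit{join}.)

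The heart of the argument is the \lit{split} substep. Suppose a \lit{split} alters a state, say on the positive side; then after the preceding substeps the two participants' $x$‑components satisfy $\min(x_1,x_2)=0$ and $M:=\max(x_1,x_2)>1$. Call the participant with $x=M$ the \emph{source} and the one with $x=0$ the \emph{target}; the split sets both $x$‑components to $M/2$. Since the target's value is $\le 0$, the positivity condition forces the source's value $M-y_{\mathrm{source}}>0$, so $y_{\mathrm{source}}\le M/2$ and the source's $\max$‑term drops from at least $M$ to at most $M/2$ — a strict decrease of at least $M/2\ge 1$. The target's $\max$‑term rises by at most $\max(0,\,M/2-y_{\mathrm{target}})$; it becomes weak only if $M/2=y_{\mathrm{target}}$ (which precludes a rise), and if it was weak ($y_{\mathrm{target}}=0$) its $\max$‑term rises from $0$ to $M/2$ while $w$ falls by $1$. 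A short check then gives that the target's net contribution to $\Phi$ is at most the source's drop minus $1$ when $y_{\mathrm{target}}=0$ and strictly less than the source's drop when $0<y_{\mathrm{target}}<M/2$; either way $\Phi$ falls by at least $1$. Finally, if \lit{split} fires simultaneously on both sides, then (up to swapping) the participants are $\langle 0,y_1\rangle$ and $\langle x_2,0\rangle$ with $x_2,y_1>1$, both becoming $\langle x_2/2,y_1/2\rangle$, which lowers $\sum_v\max(x_v,y_v)$ by $\min(x_2,y_1)\ge 2$ and can only help $w$. Summing over the whole execution, the number of split reactions is at most $\Phi_{\mathrm{initial}}-\Phi_{\mathrm{final}}\le 2n^2$.

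The main obstacle is the \lit{split} case analysis: one must \emph{not} invoke the strong‑state invariant, because \lit{split} acts on the intermediate values left by \lit{cancel} and \lit{join}, which need not be well formed (only ``power of two or $0$'' survives); one must correctly attribute weak‑node creation that happens inside \lit{normalize}; and one must dispatch the simultaneous double‑split corner case. Verifying that \lit{join} cannot inflate the $\max$‑term — which is exactly why the term $w(c)$ has to be in $\Phi$ at all — is the other spot needing care, but once $\Phi$ is fixed each of these is a finite, mechanical verification.
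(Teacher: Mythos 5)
Your proof is correct and follows essentially the same route as the paper: a per-node potential based on the level $\max(x,y)$, augmented to give level-$0$ (weak) nodes an extra unit so that a split recruiting a weak node still pays, which starts at most $2n^2$, never increases, and drops by at least $1$ per split. The paper packages the same idea as $\phi(l)=2l$ for $l>0$ and $\phi(0)=1$ and omits the case analysis you carry out explicitly; your explicit verification (including the double-split corner case) is consistent with the update rules.
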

\begin{proof}
A level of a node in state $s = \langle x, y \rangle$ is defined as $\id{level}(s) = \max(x, y)$.
Consider a node in a state with level $l$.
Then, we say that the \emph{potential of the node} is $\phi(l) = 2l$ for $l > 0$ and $\phi(0) = 1$.
In any configuration $c$, the \emph{potential of the system} is $\Phi(c) = \sum_{i=1}^n \phi(\id{level}(s_i))$.

Then, the potential of the system in the initial configuration is $\sum (2n) = 2 n^2$, 
  and it can never fall below $\sum (1) = n$.
By the interaction rules of the algorithm, potential of the system never increases after an interaction,
  and it decreases by at least one after each successful \lit{split} interaction.
This implies the claim.
\end{proof} 
\begin{lemma}
\label{lem:lowp}
Let $c$ be an arbitrary starting configuration. 
Define $S := \sum_{v \in V} \id{value}(c(v)) \neq 0$.
With probability $1$, the algorithm will reach a configuration $\hat{c}$ 
  such that $\id{sgn}(\hat{c}(v)) = \sgn(S)$ for all nodes $v \in V$.
Moreover, in all later configurations $c_e$ reachable from $\hat{c}$, no node can ever have a different sign, 
  i.e. $\forall v \in V: \id{sgn}(c_e(v)) = \id{sgn}(\hat{c}(v))$.
For sufficiently large $n$, the stabilization time to $\hat{c}$ is at most $n^5$ expected communication rounds, 
  i.e. parallel time $n^4$.
\end{lemma}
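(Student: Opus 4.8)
The plan is to show, via deterministic potential arguments, that in \emph{any} execution only polynomially many interactions actually change the configuration, and then to classify the configurations from which no interaction can make a change as exactly the ``settled'' ones, in which every node is pure with the majority sign $\sigma := \sgn(S)$.

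I would first dispose of the ``moreover'' (closure) part. Call a configuration \emph{settled} if every node is in one of the states $\langle 2^k,0\rangle$, $\langle 0,0\rangle^+$ (when $\sigma={+}$), or the sign-flipped analogues, i.e.\ pure with sign $\sigma$; by Invariant~\ref{sum-invariant} a monochromatic configuration must have common sign $\sigma$. Running the \lit{interact} pipeline on two settled-state nodes, \lit{cancel} and \lit{join} are no-ops, \lit{split} only ever turns $(\langle 2^k,0\rangle,\langle 0,0\rangle^\sigma)$ into $(\langle 2^{k-1},0\rangle,\langle 2^{k-1},0\rangle)$ (and symmetrically), and \lit{normalize} only re-tags a zero pair with the correct sign; hence the set of settled configurations is closed under the transition relation, so once $\hat c$ is settled every later configuration is settled and monochromatic.

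Second, I would bound the number of \emph{productive} interactions (those that change the configuration). Using the level potential $\Phi$ of Claim~\ref{clm:splitbd} — at most $O(n^2)$ initially, never increasing, decreasing by at least $1$ each time \lit{split} fires — at most $O(n^2)$ interactions ever fire \lit{split}. For the productive interactions that do \emph{not} fire \lit{split}, I would track $\Psi(c)=\sum_v |\{z\in\{x_v,y_v\}:z\neq 0\}|$: a firing \lit{cancel} or \lit{join}, or a $\id{Reduce}$ step inside \lit{normalize}, strictly decreases $\Psi$, while \lit{split} raises it by at most $1$; the only remaining productive moves are re-taggings of weak nodes, which (once the $\langle 1,0\rangle/\langle 0,1\rangle$ population has been cancelled down, which itself takes finitely many $\Psi$-decreasing steps) occur only $O(n)$ further times in total. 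Since $0\le\Psi\le 2n$, every execution has $p(n)=O(n^2)$ productive interactions, so with probability $1$ it reaches a \emph{terminal} configuration from which no interaction is productive.

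Third — the crux — I would show any reachable terminal configuration is settled. If it held a node with a positive coordinate and one with a negative coordinate, pairing them fires \lit{cancel} (when the matched powers of two are within a factor $2$ under $\id{Reduce}$) or else \lit{split} (one of the two coordinates is then $0$ while the other is $\ge 2$ by the $2\min\ne\max$ well-formedness rule), contradicting terminality; so all nodes share sign $\sigma$. If it held a mixed node, pairing it with any pure same-sign node fires \lit{split}, and the finiteness of the state space together with the join rule (two same-sign nodes sharing their minor coordinate always \lit{join}) rules out a mixed-only population; so every node is pure. Thus the terminal configuration $\hat c$ reached is settled, and since $i\reach\hat c$ is a genuine path, Invariant~\ref{sum-invariant} also re-derives safety. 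For the time bound, from every non-terminal configuration some pair yields a productive interaction, selected with probability $\ge \tfrac{1}{n^2}$, so the expected wait for the next productive interaction is $O(n^2)$; multiplying by $p(n)=O(n^2)$ gives $O(n^4)$ expected interactions, i.e.\ $O(n^3)$ parallel time, comfortably below the claimed $n^5$ communication rounds ($n^4$ parallel time) for large $n$. The main obstacle is this third step: because the \lit{cancel}/\lit{join}/\lit{split}/\lit{normalize} pipeline has many special cases ($\id{Reduce}$'s power-of-two matching, the $2\min\ne\max$ constraint, weak-node re-tagging), excluding a stuck non-monochromatic configuration requires a genuine case analysis rather than one clean potential, and the $O(n)$ bound on stray re-taggings in the second step has to be handled with similar care.
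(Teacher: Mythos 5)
Your high-level strategy --- bound the number of configuration-changing interactions by a fixed polynomial $p(n)$ via potentials, then multiply by the $O(n^2)$ expected wait for the next productive pair --- is genuinely different from the paper's three-phase argument, but as written it has a real gap in step two. The claim that \emph{every} execution has $O(n^2)$ productive interactions is false: a re-tagging of a weak node ($\langle 0,0\rangle^{-}\leftrightarrow\langle 0,0\rangle^{+}$, triggered by meeting a $\langle 1,0\rangle$ or $\langle 0,1\rangle$ node) changes the configuration but leaves both $\Phi$ and $\Psi$ untouched, and as long as the population contains \emph{both} a $\langle 1,0\rangle$ and a $\langle 0,1\rangle$ node, a single weak node can oscillate between the two tags arbitrarily many times. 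Your parenthetical fix --- ``once the $\langle 1,0\rangle/\langle 0,1\rangle$ population has been cancelled down'' --- does not close this, because the scheduler is under no obligation to perform those cancellations before (or instead of) the re-taggings; there are executions with infinitely many productive interactions, so the ``$p(n)$ productive steps $\times\ O(n^2)$ each'' accounting collapses. This is exactly why the paper orders its phases the way it does: phase~1 charges time only against the $O(n^2)$ split reactions of Claim~\ref{clm:splitbd}, phase~2 eliminates all strictly negative values, and only \emph{then} does phase~3 count weak-node re-tags, which at that point are one-directional and hence monotone.

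There is a second, smaller problem in your step three: pairing a positive-valued node with a negative-valued node does \emph{not} always fire \lit{cancel} or \lit{split}. If both are mixed and neither coordinate pair matches under $\id{Reduce}$ --- e.g.\ $\langle 4,1\rangle$ (value $3$) against $\langle 16,128\rangle$ (value $-112$) --- then every branch of the pipeline is a no-op, since $\min(x_1,x_2)\neq 0$ and $\min(y_1,y_2)\neq 0$ block both splits. So ``terminal $\Rightarrow$ sign-monochromatic'' cannot be established by pairwise inspection of one positive and one negative node; you need the pigeonhole/Cauchy--Schwarz argument of Claim~\ref{clm:cauchy} (with $n$ mixed nodes in $\log n-1$ buckets keyed by $\min(x,y)$, two must share a bucket and then cancel or join), which is a statement about the whole population, not about a single pair, and which the paper invokes probabilistically inside phase~1 rather than as a terminality classification. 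Your approach could likely be repaired --- e.g.\ by defining terminality modulo re-tags, bounding the non-re-tag productive interactions by $\Phi$ and $\Psi$, and handling re-tags in a final monotone phase --- but that repair essentially reconstructs the paper's phase decomposition.
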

\begin{proof}
Assume without loss of generality that the sum $S$ is positive. 

We estimate the expected stabilization time by splitting the execution into three phases.
The first phase starts at the beginning of the execution, and lasts until either 
  i) no node encodes a strictly negative value \emph{or} 
  ii) each node encodes a value in $\{-1, 0, 1\}$, i.e. all nodes are in states 
  $\langle 1, 0 \rangle$, $\langle 0,0 \rangle^{+}$. $\langle 0,0 \rangle^{-}$ or $\langle 0, 1 \rangle$.

Due to~\invariantref{sum-invariant}, at least one node encodes a strictly positive value. 
Also, by definition, during the first phase there is always a node encoding a strictly negative value.
Moreover, there is a node in state $\langle x,y \rangle$ with $\max(x, y) > 1$.
Assume that $x > y$ for this node.
Then, if there is another node in state $\langle 0, y_2 \rangle$ for any $y_2$, 
  then with probability at least $1/n^2$ these two nodes interact in the next round resulting in a split reaction.
Otherwise, every node $\langle x_1, y_1 \rangle$ that encodes a strictly negative value 
  must have $\min(x_1,y_1) > 0$.
At least one such node exists and if there is another node in state $\langle x_2, 0 \rangle$ for any $x_2$, 
  then again with probability at least $1/n^2$ a split reaction occurs in the next round. 
The case of $x < y$ is analogous and we get that during the first phase, 
  if there is no pair whose interaction would result in a split reaction, 
  all nodes must be in states $\langle x,y \rangle$ with $\min(x,y) > 0$, i.e. in mixed states.
By~\claimref{clm:cauchy}, with probability at least $\frac{1}{2(\log{n}-1)}$ a pure node appears after 
  the next communication round and by the above argument, if the first phase has not been completed,
  in the subsequent round a split reaction will occur with probability at least $1/n^2$.
Therefore, during the first phase, the expected number of rounds until the next split reaction is 
  at most $4 n^2 (\log{n}-1)$.
By~\claimref{clm:splitbd}, there can be at most $2n^2$ split reactions in any execution, thus
  the expected number of communication rounds in the first phase is at most $8 n^4 (\log{n}-1)$.

The second phase starts immediately after the first, and ends when no node encodes a strictly negative value.
Note that if this was already true when the first phase ended, then the second phase is trivially empty. 
Consider the other case when all nodes encode values $-1$, $0$ and $1$ at the beginning of the second phase.
Under these circumstances, because of the update rules, 
  no node will ever be in a state $\langle x,y \rangle$ with $\max(x,y) > 1$ in any future configuration.
Also, the number of nodes encoding non-zero values can only decrease.  
In each round, with probability at least $1/n^2$, two nodes with values $1$ and $-1$ interact,
  becoming $\langle 0,0 \rangle^+$ and $\langle 0,0 \rangle^-$.
Since this can only happen $n/2$ times, 
  the expected number of communication rounds in the second phase is at most $n^3/2$.

The third phase lasts until the system stabilizes, 
  that is, until all nodes with value $0$ are in state $\langle 0,0 \rangle^{+}$.
By~\invariantref{sum-invariant}, $S > 0$ holds throughout the execution, 
  so there is at least one node with a positive sign and non-zero value. 
There are also at most $n-1$ \emph{conflicting} nodes with negative sign, all in state $\langle 0,0 \rangle^{-}$.
Thus, independently in each round, with probability at least $1/n^2$, 
  a conflicting node meets a node with strictly positive value and becomes $\langle 0,0 \rangle^{+}$,
  decreasing the number of conflicting nodes by one.
The number of conflicting nodes can never increase and when it becomes zero, 
  the system has stabilized to the desired configuration $\hat{c}$.
Therefore, the expected number of rounds in the third phase is at most $n^3$.

Combining the results and using the linearity of expectation,
  the total expected number of communication rounds before reaching $\hat{c}$ is 
  at most $n^3(8n (\log{n}-1) + 1/2 + 1) \leq n^5$ for sufficiently large $n$.
Finite expectation implies that the algorithm stabilizes with probability $1$.
Finally, when two nodes with positive sign meet, they both remain positive, 
  so any configuration $c_e$ reachable from $\hat{c}$ has the correct signs.
\end{proof}

\paragraph{Stabilization Time}
Next, we bound the time until all nodes stabilize to the correct sign.
\begin{claim}
\label{clm:cauchy}
Consider a configuration where $n \delta$ out of the $n$ nodes are in a mixed state, 
  for $\delta \geq \frac{2 (\log{n}-1)}{n}$. 
In the next interaction round, the number of mixed nodes strictly decreases 
  with probability at least $\frac{\delta^2}{2(\log{n} - 1)}$.   
\end{claim}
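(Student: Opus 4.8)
The plan is to exhibit a partition of the mixed (strong) states into at most $\log n-1$ classes with the property that whenever two mixed nodes from the same class interact, at least one of them becomes pure; since an interaction changes only the two participating nodes, the total number of mixed nodes then strictly decreases. Granting this, the claim follows by a convexity argument. Write $L:=\log n-1$ and let $n_1,\dots,n_L$ be the numbers of mixed nodes in the $L$ classes, so $\sum_i n_i=n\delta$. The number of unordered pairs of mixed nodes lying in a common class is $\sum_i\binom{n_i}{2}\ge L\binom{n\delta/L}{2}$ by convexity of $t\mapsto\binom{t}{2}$; since $\delta\ge 2L/n$ gives $n\delta/L\ge 2$ and hence $\binom{n\delta/L}{2}\ge\frac14(n\delta/L)^2$, this is at least $n^2\delta^2/(4L)$. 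Thus the next interaction is such a pair with probability at least $\frac{n^2\delta^2/(4L)}{\binom{n}{2}}\ge\frac{\delta^2}{2L}=\frac{\delta^2}{2(\log n-1)}$, and on that event the mixed count strictly drops.

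For the partition I would use the quantity $\mu(\langle x,y\rangle):=\min(x,y)$. For a strong \emph{mixed} state, $x$ and $y$ are distinct nonzero powers of two with $2\min(x,y)\ne\max(x,y)$, so $\max(x,y)/\mu\ge 4$ and therefore $\mu\le 2^{\lceil\log n\rceil}/4$; hence $\mu$ ranges over at most $\lceil\log n\rceil-1=\log n-1$ powers of two (using that $n$ is a power of two, as assumed in this section). It remains to verify the key property, working with the modified \lit{interact} rule in which \lit{split} is applied only when both \lit{cancel} and \lit{join} fail. Fix two mixed nodes $\langle x_1,y_1\rangle,\langle x_2,y_2\rangle$ with $\mu(\cdot)=\mu$. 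If both are positive-leaning, then $y_1=y_2=\mu$; \lit{cancel} is a no-op, since $\id{Reduce}(x_i,\mu)=[x_i,\mu]$ as $x_i\ge 4\mu$, so \lit{join} fires on the $y$-coordinates, setting one node's $y$ to $0$ and thus making it pure. The both-negative-leaning case is symmetric through the $x$-coordinates. If one node is positive-leaning and the other negative-leaning, then the matched pair passed to $\id{Reduce}$ inside \lit{cancel} is $(\mu,\mu)$, which reduces to $[0,0]$, zeroing one coordinate of \emph{each} node and making both pure. In all three cases at least one participating mixed node becomes pure, as needed, and one checks that the subsequent \lit{normalize} step (which re-applies \id{Reduce}) does not undo this.

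The main obstacle I expect is exactly this last verification: for each sub-case one must carefully track which of \lit{cancel}, \lit{join}, \lit{split} actually fires, and follow the coordinates through \lit{normalize}, to be certain that a participating node genuinely ends pure rather than re-emerging as a mixed state after reduction (in particular handling the boundary case $x_i=4\mu$, where the join output itself reduces further). Once the partition lemma is established, the counting and probability step is routine.
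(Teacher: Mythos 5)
Your proof is correct and follows essentially the same route as the paper's: bucket the mixed nodes by $\min(x,y)$ into $\log n-1$ classes, observe that a same-bucket interaction triggers a successful \lit{cancel} or \lit{join} (so at least one participant becomes pure), and then lower-bound the number of same-bucket pairs by convexity/Cauchy--Schwarz using $n\delta\ge 2(\log n-1)$. Your explicit case analysis of which reaction fires (including the $x_i=4\mu$ boundary case handled by \lit{normalize}) is a verification the paper asserts without detail, but the argument is the same.
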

\begin{proof}
Consider $\log{n} - 1$ buckets corresponding to values $1, 2, 4, \ldots, n/4$.
Let us assign mixed nodes to these buckets according to their states,
  where node in state $\langle x,y \rangle$ goes into bucket $\min(x, y)$.
All nodes fall into one of the $\log{n}-1$ buckets because of the definition of (mixed) states.

If two nodes in the same bucket interact, either cancel or join will be successful, 
  and since we consider the algorithm where split is not applied in this case, 
  and the number of mixed nodes will strictly decrease.
Thus, if there are $d_1, d_2, \ldots, d_{\log{n}-1}$ nodes in the buckets, 
  the number of possible interactions that decrease the number of mixed nodes 
  is at least $\sum_{i=1}^{\log{n}-1} \frac{d_i(d_i-1)}{2} = \frac{(\sum d_i^2) - n\delta}{2}$.    

By the Cauchy-Schwartz inequality, $\sum d_i^2 \geq \frac{n^2 \delta^2}{\log{n}-1}$.
Combining this with the above and using $n\delta \geq 2(\log{n}-1)$ we get that the 
  there are at least $\frac{n^2\delta^2}{4(\log{n}-1)}$ pairs of nodes whose interactions 
  decrease the number of mixed nodes.
The total number of pairs is $n(n-1)/2$, proving the desired probability bound.  
\end{proof}
\begin{claim}
\label{clm:randomwalk}
Suppose $f$ is a function such that $f(n) \in O(\poly(n))$.
For all sufficiently large $n$, 
  the probability of having less than $\frac{n}{2^{11} \log{n}}$ pure nodes in the system 
  at any time during the first $f(n)$ communication rounds is at most $1 - 1/n^5$.
\end{claim}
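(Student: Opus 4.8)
The plan is to track $P_t$, the number of \emph{pure} nodes after $t$ communication rounds, and its complement $M_t = n - P_t$, the number of \emph{mixed} nodes, and to show that $P_t$ behaves like a random walk with a strong upward drift whenever it is near the target $T := n/(2^{11}\log n)$. Since every node starts in one of the pure states $\langle 2^{\lceil\log n\rceil},0\rangle$, $\langle 0,2^{\lceil\log n\rceil}\rangle$, we begin at $P_0 = n \ge 2T$, so the walk starts well above $T$, and we want to bound the probability that it ever falls to $T$ within the first $f(n)$ rounds. As $f(n) = O(\poly(n))$ but $T = \Theta(n/\log n)$, an excursion/union bound will make this probability far smaller than $1/n^5$.

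First I would establish three one-round estimates valid in the \emph{danger zone} $P_t \le 2T$ (note this forces $M_t \ge n/2$, hence $\delta := M_t/n \ge 1/2 \ge 2(\log n-1)/n$ for large $n$). \emph{(i)} A round changes at most two nodes' states, so $|P_{t+1}-P_t|\le 2$. \emph{(ii)} Since $\{M_{t+1}<M_t\} = \{P_{t+1}>P_t\}$, \claimref{clm:cauchy} with this $\delta$ gives $\Pr[P_{t+1}>P_t] \ge \delta^2/(2(\log n-1)) \ge 1/(8\log n)$. \emph{(iii)} If $P_{t+1}<P_t$ then some pure node became mixed, so at least one of the (at most two) interacting nodes was pure; the chance a round touches a pure node is $1 - \binom{n-P_t}{2}/\binom{n}{2} \le 2P_t/(n-1) \le 4T/(n-1) \le 1/(256\log n)$ for large $n$. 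Hence in the danger zone the up-probability exceeds the down-probability by a factor at least $32$ --- and this factor is exactly what fixes the constant $2^{11}$.

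Then I would run a gambler's-ruin argument with the potential $\phi(P) = 4^{-P}$. Using \emph{(i)}--\emph{(iii)}, a short computation (bounding $\EX[4^{-(P_{t+1}-P_t)}\mid P_t] \le 1 - \tfrac34\cdot\tfrac1{8\log n} + 15\cdot\tfrac1{256\log n} < 1$, where the coefficients $4$ and $16$ come from the possible $\pm 2$ jumps) shows that $\phi(P_t)$ is a supermartingale on any time stretch during which $P$ stays in $[T,2T]$. Decompose the first $f(n)$ rounds into the at most $f(n)$ maximal \emph{excursions} with $P_t < 2T$; each begins at a value $\ge 2T-2$. Applying optional stopping to $\phi$ over one excursion, stopped when $P$ returns to $2T$ or first reaches $T$, yields that the excursion hits $T$ with probability at most $4^{-(2T-2)}/4^{-T} = 16\cdot 4^{-T}$. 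A union bound over the $\le f(n)$ excursions gives $\Pr[\exists\, t\le f(n):\ P_t < T] \le 16\,f(n)\,4^{-T} = O(\poly(n))\cdot 2^{-\Omega(n/\log n)} \le 1/n^5$ for all sufficiently large $n$, which is even stronger than the claimed bound.

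The main obstacle is the one-round analysis, specifically \emph{(iii)}: one has to argue cleanly that \emph{every} transition lowering $P_t$ must involve a pure node (so its probability really is $O(P_t/n)$), and then pin down the constants so the upward drift beats the downward drift by a large enough margin that the simple potential $4^{-P}$ --- rather than a finely tuned $c^{-P}$ --- is already a supermartingale in spite of the jumps of size two. Once that margin is in hand, the rest is the textbook biased-random-walk excursion estimate.
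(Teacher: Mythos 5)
Your proof is correct, and it establishes the bound the paper actually needs downstream, namely that the bad event has probability at most $1/n^5$ (the ``at most $1-1/n^5$'' in the claim statement is evidently a typo). Your route is genuinely different from the paper's, though both rest on the same two one-round facts: the upward drift $\Pr[P_{t+1}>P_t]\ge \delta^2/(2(\log n -1))$ from \claimref{clm:cauchy}, and the observation that $P$ can only decrease in a round that selects a pure node, whence $\Pr[P_{t+1}<P_t]\le 2P_t/(n-1)$. The paper works with two thresholds, $n/(2^9\log n)$ and $n/(2^{11}\log n)$: it looks at the stretch between the last visit above the upper threshold and the first passage below the lower one, uses the bounded jumps to argue this stretch lasts $\Omega(n/\log n)$ rounds, applies Chernoff bounds separately to the binomial count of up-rounds and to the count of rounds touching a pure node (each down-round costing at most $2$), and closes with a union bound over the $f(n)$ possible first-passage times. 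You instead use a single danger zone $[T,2T]$, convert the $32$-fold drift advantage into the supermartingale property of $4^{-P_t}$, and finish with optional stopping plus a union bound over excursions; this avoids having to lower-bound the excursion length or balance two Chernoff events against each other, and gives a per-excursion failure probability of $4^{-\Omega(n/\log n)}$, while the paper's version avoids martingale machinery entirely. Your numerical verification is sound: $1-\tfrac34\cdot\tfrac{1}{8\log n}+15\cdot\tfrac{1}{256\log n}=1-\tfrac{9}{256\log n}<1$. The only points deserving an explicit sentence in a full write-up are that the excursion's stopping time should be truncated at $f(n)$ before invoking optional stopping (harmless since $\phi\ge 0$), and that the one-step estimates are conditional on the current configuration---which is exactly the form \claimref{clm:cauchy} provides---so the supermartingale property holds with respect to the natural filtration. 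Neither is a gap.
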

\begin{proof}
Assume that this number became less than $\frac{n}{2^{11} \log{n}}$ 
  for the first time at time $T$ after some number of communication rounds.
Let $t$ be the last time when the number of pure nodes was at least $\frac{n}{2^9 \log{n}}$ 
  (such a time exists since the initial number of pure nodes is $n$)
  and let $\alpha$ be the number of communication rounds between $t$ and $T$.
The number of mixed nodes increases by at most two in each round, so $\alpha \geq \frac{n}{2^{11}\log{n}}$.

By definition of $t$ and $T$, at all times during the $\alpha$ communication rounds between $t$ and $T$,
  at least $\frac{n (2^9\log{n} - 1)}{2^9 \log{n}} \geq \frac{n}{2}$ nodes are mixed.
Thus, by~\claimref{clm:cauchy} in each of these communication rounds, 
  the number of mixed nodes decreases by at least one with probability at least $\frac{1}{8\log{n}}$.
Let us describe by a random variable $X \sim \mathrm{Bin}(\alpha, \frac{1}{8\log{n}})$
  at least how often the number of mixed nodes decreased.
Each node is pure or mixed, and by Chernoff Bound, the probability that the number of 
  pure nodes increased less than $\frac{\alpha}{16 \log{n}}$ times is
  $\Pr\left[X \leq \frac{\alpha}{16 \log{n}} \right] = \Pr\left[X \leq \frac{\alpha}{8 \log{n}} \left( 1 - 1/2 \right)\right] \leq \exp\left(- \frac{\alpha}{8 \log{n} \cdot 2^2 \cdot 2} \right) \leq \exp\left(- \frac{n}{2^{17} \log^2{n}} \right)$

On the other hand, in each of these $\alpha$ rounds, 
  the number of pure nodes can decrease only if one of the interacting nodes was in a pure state.
By definition of $t$ and $T$, the number of such pairs is at most 
  $\frac{n^2}{2^{18}\log^2{n}} + 2 \frac{n^2 (2^9 \log{n} - 1)}{2^{18} \log^2{n}} \leq \frac{2 n^2}{2^9 \log{n}}$.
This implies that in each round the probability that the number of pure nodes will decrease is 
  at most $\frac{1}{2^6 \log{n}}$.
Let us describe the (upper bound on the) number of such rounds by a random variable 
  $Y \sim \mathrm{Bin}(\alpha, \frac{1}{2^6\log{n}})$.
Since in each such round the number of pure nodes can decrease by at most $2$,
  using Chernoff bound the probability that the number of pure nodes decreases 
  by more than $\frac{\alpha}{16 \log{n}}$ during the $\alpha$ communication rounds is at most 
$\Pr\left[Y \geq \alpha/(32 \log{n}) \right] \leq \exp \left( -\frac{n}{2^{19} \log^2{n}}\right)$
% \Pr\left[Y \geq \frac{\alpha}{2^6 \log{n}} \left( 1 + 1 \right)\right]
% \leq \exp \left( -\frac{\alpha}{3\cdot 2^6 \log{n}}\right)

In order for the number of pure nodes to have decreased 
  from $\frac{n}{2^9 \log{n}}$ at time $t$ to $\frac{n}{2^{11} \log{n}}$ at time $T$,
  either the number of mixed nodes must have increased by at most $\frac{\alpha}{16 \log{n}}$, 
  or the number of pure nodes must have decreased by at least $\frac{\alpha}{16 \log{n}}$
  during the $\alpha$ communication rounds between $t$ and $T$.
Otherwise, the increase in mixed nodes would be more than the decrease in pure nodes.
However, by union bound, the probability of this is at most
  $\exp\left(- \frac{n}{2^{17} \log^2{n}} \right) + \exp \left( -\frac{n}{2^{19} \log^2{n}}\right)$.

We can now take union bound over the number of communication rounds 
  until the number of pure nodes drops below $\frac{n}{2^{11} \log{n}}$ (time $T$).
For at most $f(n) \in O(\poly(n))$ rounds, we get that the probability of the number of pure nodes 
  ever being less than $\frac{n}{2^{11} \log{n}}$ is at most $1/n^5$ for all large enough $n$.
\end{proof}
Consider the high probability case of the above claim,
  where a fraction of pure nodes are present in every configuration in the execution prefix.
We call a round a \emph{negative-round} if, in the configuration $c$ at the beginning of the round,
  there are at least $\frac{n}{2^{11} \log{n}}$ pure nodes and 
  at least \emph{half} of the pure nodes encode a non-positive value.
Analogously, we call a round a \emph{positive-round} 
  if there are at least $\frac{n}{2^{11} \log{n}}$ pure nodes,
  at least half of which encode a non-negative value.
A round can be simultaneously negative \emph{and} positive, 
  for instance when all pure nodes encode value $0$. 
Next claim establishes the speed at which the maximum level in the system decreases.
The proof follows by bounding the probability that a node with the maximum level meets a pure node 
  with value $0$ or a value of the opposite sign.
This results in a split (or cancel) reaction decreasing the level of the node, 
  and we use Chernoff and Union Bounds to bound the probability that the node 
  avoids such a meeting for significant time. 
\begin{claim}
\label{clm:elimchf}
Let $w > 1$ be the maximum level among the nodes with a negative (resp., positive) sign.
There is a constant $\beta$, such that after $\beta n \log^2{n}$ positive-rounds (resp., negative-rounds) 
  the maximum level among the nodes with a negative (resp., positive) sign 
  will be at most $\lfloor w/2 \rfloor$ with probability at least $1-\frac{1}{n^5}$.
\end{claim}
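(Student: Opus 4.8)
The plan is to fix the negative-sign case (the positive-sign case is symmetric, with ``positive-round'' replaced by ``negative-round''), write $T:=\beta n\log^2 n$ for the number of positive-rounds in the window, and let $w>1$ be the maximum level among negative-sign nodes at the start of the window. The first ingredient is a structural monotonicity fact, proved by a case analysis on the chained update $\lit{cancel}\to\lit{join}\to\lit{split}\to\lit{normalize}$: once a node $v$ is \emph{neutralized}, meaning it is not a negative-sign node at level $\ge w$, it stays neutralized forever. While $v$ keeps a negative sign, the $\id{Reduce}$ calls inside $\lit{cancel}$ and the halving inside $\lit{split}$ can only decrease $v$'s $y$-coordinate (hence its level), and $\lit{join}$ leaves it unchanged; the delicate point is to check that no single interaction turns a node that is positive/weak, or negative at level $<w$, into a negative-sign node at level $\ge w$ — the reason being that whenever $\lit{cancel}$ would drive $v$'s $x$-coordinate to $0$ while its $y$-coordinate is large, the partner is left with a large positive $x$-coordinate, so the subsequent $\lit{split}$ fires on the $x$-side and restores $v$ to a positive state. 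Since all levels are powers of two, it then suffices to show that with probability at least $1-1/n^5$ no negative-sign node has level exactly $w$ at the end of the window.

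The second ingredient is a one-step reaction lemma: if $v$ is a negative-sign node at level $w$ — so $v=\langle x_v,w\rangle$ with $x_v\le w/4$, using the well-formedness rule $2\min\ne\max$ — and $v$ interacts with a pure node of non-negative value (one of $\langle a,0\rangle$ with $a\ge 0$, or $\langle 0,0\rangle^{+}$, $\langle 0,0\rangle^{-}$), then after that single interaction $v$'s $y$-coordinate is at most $w/2$; in particular $v$ is neutralized. This is checked by a short case split on $a$: for $a\in\{w,2w\}$ the $\id{Reduce}$ inside $\lit{cancel}$ sends $v$'s $y$-coordinate to $0$; for $a=w/2$ it sends it to $w/2$; and for every other value of $a$ (including $a=0$) both $\lit{cancel}$ and $\lit{join}$ leave $v$ untouched, so $\lit{split}$ fires on the $y$-side and halves $v$'s $y$-coordinate to $w/2$. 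Thus meeting a pure non-negative partner neutralizes $v$, and by the structural fact this is permanent.

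Finally, combine these probabilistically. In a positive-round the configuration has at least $\tfrac{n}{2^{11}\log n}$ pure nodes, at least half of which — hence at least $\tfrac{n}{2^{12}\log n}$ — have non-negative value; conditioned on the configuration at the start of the round, the probability that the round's interacting pair is $v$ together with one of these ``good'' partners is at least $\tfrac{n/(2^{12}\log n)}{n(n-1)/2}\ge \tfrac{1}{2^{12}n\log n}=:p$. So, conditioned on $v$ not yet being neutralized at the start of a positive-round, $v$ becomes neutralized by the end of that round with probability at least $p$, regardless of the past. Since neutralization is absorbing, the probability that $v$ is still not neutralized after all $T$ positive-rounds is at most $(1-p)^T\le e^{-pT}=e^{-\beta\log n/2^{12}}$, which is at most $1/n^6$ once $\beta\ge 6\cdot 2^{12}$. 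A union bound over the at most $n$ nodes gives probability at most $1/n^5$ that some negative-sign node has level $w$ at the end of the window; together with the structural fact (which rules out any such node re-appearing), this is exactly the claim. One may equivalently phrase the per-node step as a Chernoff bound on the binomial count of positive-rounds in which $v$ meets a good partner, which is the route the proof sketch alludes to.

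The main obstacle is the reaction-rule bookkeeping behind the first two ingredients: verifying that $v$'s level cannot increase under the composed $\lit{interact}$ rule — the subtle case being $\lit{cancel}$ flipping a sign and the following $\lit{split}$ flipping it back — and that a pure non-negative partner always forces either a $\lit{cancel}$ or a $\lit{split}$ that halves $v$'s $y$-coordinate. Everything downstream (counting good partners in a positive-round, the geometric/Chernoff tail, and the union bound) is routine.
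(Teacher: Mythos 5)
Your proof is correct in substance but takes a genuinely different probabilistic route from the paper's. The paper argues \emph{collectively}: it lets $u$ be the number of ``target'' nodes (negative sign at level $w$), observes that this count never increases and drops by one in every round where a target meets a pure non-negative partner, and then repeatedly halves $u$ via Chernoff bounds on binomial counts of eliminating rounds --- with a two-regime analysis ($u\geq\log n$ and $u<\log n$, with stretches of length $\alpha n\log n$ and $\alpha n\log^2 n/u$ respectively) whose lengths sum to $O(n\log^2 n)$. You instead argue \emph{per node}: each surviving target is neutralized in any given positive-round with probability at least $p=1/(2^{12}n\log n)$ regardless of history, survival is absorbing, so $(1-p)^{T}\leq n^{-6}$ for $T=\beta n\log^2 n$, and a union bound over $n$ nodes finishes. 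Your route reaches the same $\beta n\log^2 n$ bound with less bookkeeping (no halving schedule, no regime split), at the cost of a union bound over nodes rather than over $O(\log n)$ halving stages --- immaterial here. Both proofs rest on the same two structural facts, which you isolate more explicitly than the paper does: (i) a target node meeting a pure non-negative partner ends the interaction with $y$-coordinate at most $w/2$ (your case split on $a\in\{w/2,w,2w\}$ versus the default split is right), and (ii) the set of negative-sign nodes at level $\geq w$ never grows.

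One caution on fact (ii): your justification for the delicate sub-case --- that when \lit{cancel} zeroes $v$'s $x$-coordinate while $v$'s $y$-coordinate is large, the subsequent $x$-side \lit{split} restores $v$ to a positive state --- is valid only for the \emph{chained} rule of Figure~\ref{fig:logmajo}. The analysis section of the paper explicitly switches to a modified rule in which \lit{split} fires only when both \lit{cancel} and \lit{join} were unsuccessful; under that convention a successful cancel (e.g.\ $\langle 4w,w\rangle$ meeting $\langle 16w,4w\rangle$) leaves behind $\langle 0,w\rangle$, a fresh negative node at level $w$, and monotonicity of the target set is no longer obvious. So your argument genuinely needs the chained rule at this point (or a separate argument bounding such re-creations); note that the paper's own proof asserts ``the number of target nodes never increases'' without addressing this at all, so you are not behind the paper here --- you have simply surfaced the spot where both proofs lean on an unverified structural assertion.
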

\begin{proof}
We will prove the claim for nodes with negative values. (The converse claim follows analogously.) 
Fix a round $r$, and recall that $w > 1$ is the maximum level 
  of a node with a negative value at the beginning of the round. 
Let $U$ be the set of all nodes with negative values and the same level $w$ at the beginning of the round, 
  and let $u = |U|$. 
We call these nodes \emph{target nodes}.

By the structure of the algorithm, the number of target nodes never increases 
  and decreases by one in every \emph{eliminating} round where
  a target node meets a pure node with a non-negative value, due to a split or cancel reaction.
Consider a set of $\alpha n \log{n}$ consecutive positive-rounds after $r$, for some constant $\alpha > 2^{12}$. 
In each round, if there are still at least $\lceil u/2 \rceil$ target nodes, 
  then the probability of this round being eliminating is at least $\frac{\lceil u/2 \rceil}{2^{12} n \log{n}}$ 
  (since in a positive round at least half of $\frac{n}{2^{11} \log{n}}$ pure nodes have non-negative value).
Let us describe the process by considering a random variable 
  $Z \sim \mathrm{Bin}(\alpha n \log{n}, \frac{\lceil u/2 \rceil}{2^{12} n \log{n}})$, 
  where each success event corresponds to an eliminating round.
By a Chernoff Bound, the probability of having $\alpha n \log{n}$ iterations with at most 
  $\lceil u/2 \rceil$ eliminations is at most:
\begin{align*}
\Pr\left[Z \leq \lceil u/2 \rceil \right] = 
\Pr\left[Z \leq \frac{\alpha \lceil u/2 \rceil}{2^{12}} \left( 1 - \frac{\alpha - 2^{12}}{\alpha}\right)\right]
\leq \exp\left(- \frac{\alpha \lceil u/2 \rceil (\alpha - 2^{12})^2}{2^{13} \alpha^2} \right)
%% \leq 2^{-\frac{u \alpha}{18}}.
\end{align*}
For sufficiently large $\alpha$ and $u \geq \log{n}$, 
  the probability of this event is at most $\frac{1}{n^6}$ for $\alpha n \log{n}$ positive-rounds.
Applying the same rationale iteratively as long as $u \geq \log n$, we obtain by using a Union Bound that 
  the number of target nodes will become less than $\log n$ 
  within $\alpha n \log{n} (\log n - \log \log n)$ positive-rounds, 
  with probability at least $1 - \frac{\log{n} - \log\log{n}}{n^6}$.

Finally, we wish to upper bound the remaining number of positive-rounds until no target node remains. 
Again for sufficiently large $\alpha$, but when $u < \log{n}$, 
  we get from the same argument as above that the number of target nodes 
  is reduced to $\lfloor u/2 \rfloor$ within $\frac{\alpha n \log^2{n}}{u}$ consecutive positive-rounds 
  with probability $1/n^6$.
So we consider increasing numbers of consecutive positive-rounds, 
  and obtain that no target nodes will be left after at most 
  $\alpha n \log{n} + 2 \alpha n \log{n} + \ldots + \alpha n \log^2{n} \leq 2 \alpha n \log^2{n}$ positive-rounds, 
  with probability at least $1 - \frac{\log\log{n}}{n^6}$, where we have taken the union bound over $\log \log n$ events. 
The original claim follows by setting $\beta = 3 \alpha$, 
  taking Union Bound over the above two events ($u \geq \log{n}$ and $u < \log{n}$) and $\log{n} \leq n$.
\end{proof}

We get a condition for halving the maximum level (among positive or negative values) 
  in the system with high probability.
The initial levels in the system is $n$, which can only be halved $\log{n}$ times for each sign.
Combining everything results in the following claim:
\begin{claim}
\label{clm:thrcases}
There exists a constant $\beta$, such that if during the first $2 \beta n \log^3{n}$ rounds 
  the number of pure nodes is always at least $\frac{n}{2^{11} \log{n}}$,  
  then with probability at least $1-\frac{2\log{n}}{n^5}$,
  one of the following three events occurs at some point during these rounds: 
  \begin{enumerate}
    \item Nodes only encode values in $\{-1,0,1\}$;
    \item There are less than $\frac{n}{2^{12} \log{n}}$ nodes with non-positive values, all encoding $0$ or $-1$, 
    \item There are less than $\frac{n}{2^{12} \log{n}}$ nodes with non-negative values, all encoding $0$ or $1$.  
  \end{enumerate}
\end{claim}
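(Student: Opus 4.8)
The plan is to iterate the per-sign level-halving estimate of \claimref{clm:elimchf}, using the assumed abundance of pure nodes to guarantee that every round is ``favorable'' for at least one sign. We condition throughout on the assumed event that each of the $2\beta n\log^3 n$ rounds begins with at least $\tfrac{n}{2^{11}\log n}$ pure nodes; all probabilities below are within this conditioning. The first observation is that under this conditioning every round is a positive-round \emph{or} a negative-round: if fewer than half of the pure nodes encode a non-negative value, then more than half encode a strictly negative --- hence non-positive --- value. Consequently, among the $2\beta n\log^3 n$ rounds at least one of ``$\ge \beta n\log^3 n$ positive-rounds'' or ``$\ge \beta n\log^3 n$ negative-rounds'' must hold. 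I would choose $\beta$ to be a large multiple of the constant in \claimref{clm:elimchf}; since $n$ is a power of two, every node starts at level $n = 2^{\log n}$, so a level can be halved at most $\log n$ times before dropping to $1$, one invocation of \claimref{clm:elimchf} costs $O(n\log^2 n)$ favorable rounds, and therefore $O(n\log^3 n)$ favorable rounds of a fixed sign suffice to drive the corresponding maximum level to $1$, failing with probability at most $\tfrac{\log n}{n^5}$ after a union bound over the $\le \log n$ halvings. Two monotonicity facts will be used repeatedly, each verified by inspecting \lit{cancel}, \lit{join}, \lit{split}, \lit{normalize}: once no node has value $\le -2$ this stays true forever (no reaction produces a value $\le -2$ from two values each $\ge -1$), and symmetrically for value $\ge 2$; moreover, once no node has value $\le -2$ the maximum level among positive-sign nodes is non-increasing (and symmetrically).

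\textbf{Case 1 (both $\ge \beta n\log^3 n$ positive- and negative-rounds).} Iterating \claimref{clm:elimchf} on the positive-rounds drives the maximum level among negative-sign nodes to $\le 1$ at some time $\tau^-$, and iterating it on the negative-rounds drives the maximum level among positive-sign nodes to $\le 1$ at some time $\tau^+$. By the monotonicity facts both properties hold from time $\max(\tau^-,\tau^+)$ on, so every node then encodes a value in $\{-1,0,1\}$, i.e.\ event~1. A union bound over the $\le\log n$ halvings on each side gives failure probability $\le \tfrac{2\log n}{n^5}$, which is the source of the stated bound.

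\textbf{Case 2 (WLOG $\ge \beta n\log^3 n$ positive-rounds but fewer than $\beta n\log^3 n$ negative-rounds; the mirror sub-case yields event~1 or~3 identically).} With probability $\ge 1 - \tfrac{\log n}{n^5}$ the maximum level among negative-sign nodes reaches $\le 1$ at some time $\tau$, so from $\tau$ onward no node has value $\le -2$, i.e.\ every node of non-positive value encodes $0$ or $-1$. If at some later time within the window there are fewer than $\tfrac{n}{2^{12}\log n}$ nodes of non-positive value, event~2 holds there and we are done. Otherwise, at every such time there are at least $\tfrac{n}{2^{12}\log n}$ nodes of value in $\{-1,0\}$; each such node is pure (its $x$-component is $0$), and a short check of the reactions shows that whenever a maximum-level positive-sign node meets one of them the ensuing \lit{split} or \lit{cancel} reaction at least halves that node's level. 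This is exactly the scenario analyzed in the proof of \claimref{clm:elimchf}, with this guaranteed reservoir of $\tfrac{n}{2^{12}\log n}$ partners playing the role of the pure non-positive nodes of a negative-round; rerunning the same Chernoff-plus-union-bound computation, within a further $O(n\log^3 n)$ rounds --- which fit inside the window once $\beta$ is large enough --- the maximum level among positive-sign nodes reaches $\le 1$ with probability $\ge 1 - \tfrac{\log n}{n^5}$, and then all values lie in $\{-1,0,1\}$, i.e.\ event~1.

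\textbf{Main obstacle.} Case~1 is the easy part. The delicate point is Case~2, precisely because the three events are \emph{not} symmetric under sign-flip (a value-$0$ node is at once non-positive and non-negative): one has to notice that the very configuration pattern that rules out event~2 --- a persistent linear-size population of value-$\{-1,0\}$ nodes --- is exactly the reservoir that forces the remaining large positive values to be eliminated, yielding event~1, and that this reservoir is maintained precisely under the hypothesis that event~2 fails. The remaining work is bookkeeping: making the constant $\beta$, the window length $2\beta n\log^3 n$, and the failure probability $\tfrac{2\log n}{n^5}$ mutually consistent across the up to $2\log n$ invocations of \claimref{clm:elimchf}, and checking the handful of reaction cases that justify ``a maximum-level node that meets a value-$\{-1,0\}$ node has its level halved''.
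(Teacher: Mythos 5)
Your proof is correct and rests on exactly the same two pillars as the paper's: iterating \claimref{clm:elimchf} over the (at most $2\log n$) halvings, and the observation that a persistent population of at least $\tfrac{n}{2^{12}\log n}$ pure value-$\{-1,0\}$ nodes (the negation of event~2) is precisely the reservoir that drives the maximum positive level down to $1$. The only difference is organizational --- you pigeonhole on which sign receives $\ge \beta n\log^3 n$ favorable rounds and treat two cases, while the paper runs a single contradiction argument classifying every round as contributing to one of the two halving processes; your handling of the reservoir (re-running the elimination computation with it playing the role of the pure non-positive nodes, rather than asserting the round is literally a negative-round) is if anything slightly more careful than the paper's.
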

\begin{proof}
We take a constant $\beta$ that works for~\claimref{clm:elimchf}. 
Since there are at least $\frac{n}{2^{11} \log{n}}$ pure node at all times during the first $2 \beta n \log^3{n}$ rounds, 
  each round during this interval is a negative-round, a positive-round, or both.
We call maximum positive (resp. negative) level the maximum level 
  among all the nodes encoding non-negative (resp. non-positive) values. 
Unless the maximum positive level in the system is $\leq 1$, by~\claimref{clm:elimchf}, a stretch of 
  $\beta n \log^2{n}$ negative-rounds halves the maximum positive level, with probability at least $1-\frac{1}{n^5}$. 
The same holds for stretches of $\beta n \log^2{n}$ positive-rounds and the maximum negative level. 

Assume that none of the three events hold at any time during the first $2 \beta n \log^3{n}$ rounds. 
In that case, each round can be classified as either:
\begin{itemize}
\item a negative-round where the maximum positive level is strictly larger than $1$, or 
\item a positive-round where the maximum negative level is strictly larger than $1$.
\end{itemize} 
To show this, without a loss of generality consider any positive-round 
  (we showed earlier that each round is positive-round or a negative-round).
If the maximum negative level is $>1$ then the round can be classified as claimed,
  thus all non-positive values in the system must be $0$ or $-1$.  
Now if there are less than $\frac{n}{2^{12} \log{n}}$ such nodes, then we have the second event,
  so there must be more than $\frac{n}{2^{12} \log{n}}$ nodes encoding $0$ and $-1$.
However, all these nodes are pure, so the round is simultaneously a negative-round.
Now if the maximum positive level is $>1$ then the round can again be classified as claimed,
  and if the maximum positive level is at most $1$, then all nodes in the system are encoding values $-1$, $0$ or $1$
  and we have the first event. 

Thus, each round contributes to at least one of the stretches of $\beta n \log^2{n}$ rounds 
  that halve the maximum (positive or negative) level, w.h.p.
However, this may happen at most $2\log{n}$ times. 
By applying~\claimref{clm:elimchf} $2\log{n}$ times and the Union Bound
  we get that after the first $2 \beta n \log^3{n}$ rounds, 
  with probability at least $1-\frac{2\log{n}}{n^5}$ only values $-1$, $0$ and $1$ may remain.
However, this is the same as the first event above. 
Hence, the probability that none of these events happen is at most $\frac{2\log{n}}{n^5}$.
  %% we get a contradiction showing that it is impossible for none of these events to happen. 
%% This establishes the claim.
\end{proof}
\paragraph{Final Argument} 
To see how this claim can be used to obtain the stabilization upper bound,
  let us assume without loss of generality that the initial majority of nodes was 
  in $A$ (positive) state, i.e. $a > b$.

Setting $\beta$ as in~\claimref{clm:thrcases}, by~\claimref{clm:randomwalk}, with high probability,
  we have at least $\frac{n}{2^{11} \log{n}}$ pure nodes during the first $2 \beta n \log^3{n}$ rounds.
Thus, w.h.p. during these rounds the execution reaches a configuration 
  where one of the three events from~\claimref{clm:thrcases} holds.
Consider this point $T$ in the execution.

By our assumption about the initial majority and~\invariantref{sum-invariant}, 
  $\sum_{v \in V} \id{value}(c(v)) = \epsilon n^2$ holds in every reachable configuration $c$. 
The third event is impossible, because the total sum would be negative. 
In the first event, the total sum is $\epsilon n^2 \geq n$ of $n$ encoded values each being $-1$, 
  $0$ or $1$.
Therefore, in this case, all nodes must be in state $\langle 1, 0 \rangle$ and we are done. 

In the second event implies there are at least 
  $\frac{n (2^{12}\log{n} -1)}{2^{12}\log{n}} \geq \frac{2n}{3}$ nodes encoding strictly positive values.
Hence, at time $T$ during the first $2 \beta n \log^3{n}$ rounds 
  there are at least $n /3$ more strictly positive than strictly negative values. 
Moreover, $-1$'s are the only strictly negative values of the nodes at point $T$, 
  and this will be the case for the rest of the execution because of the update rules.
After time $T$, we have
\begin{repclaim}{clm:4verge}
Consider a configuration where all nodes with strictly negative values encode $-1$,
  while at least $\frac{2n}{3}$ nodes encode strictly positive values.
The number of rounds until stabilization is 
  $O(n \log{n})$ in expectation and
  $O(n \log^2{n})$ with high probability.
\end{repclaim}
Using this, and by Union Bound over~\claimref{clm:randomwalk} and ~\claimref{clm:thrcases},
  with probability $1-\frac{\log{n} + 1}{n^5}$ the number of communication rounds to stabilization 
  is thus $2 \beta n \log^3{n} + O(n \log^2{n}) = O(n \log^3{n})$.
  % $2 \beta n \log^3{n} + O(n \log{n}) = O(n \log^3{n})$ in expectation and
 % with high probability. 

In the remaining low probability event, with probability at most $\frac{\log{n} + 1}{n^5}$, 
  the remaining number of rounds is at most $O(n^5)$ by~\lemmaref{lem:lowp}.
Therefore, the same $O(n \log^3{n})$ bound also holds in expectation, 

\begin{claim}
\label{clm:4verge}
Consider a configuration where all nodes with strictly negative values encode $-1$,
  while at least $\frac{2n}{3}$ nodes encode strictly positive values.
The number of rounds until stabilization is 
  $O(n \log{n})$ in expectation and
  $O(n \log^2{n})$ with high probability.
\end{claim}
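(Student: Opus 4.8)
The plan is to reduce the claim to two ``coupon-collector'' death chains run back to back. First, note what ``stabilization'' means here: since the initial majority is $A$, by \invariantref{sum-invariant} the global sum stays $\epsilon n^2\ge n>0$, so stabilization (to the configuration $\hat c$ of \lemmaref{lem:lowp}) is exactly the moment at which no node has a negative value and no node is in the weak state $\langle0,0\rangle^-$. Write $N(c)$ for the number of nodes of negative value (by hypothesis all equal to $-1$) and $D(c)$ for the number of nodes in state $\langle0,0\rangle^-$; the target is $N+D=0$, and we start from $N+D\le n/3$.

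The first step is to extract a few structural facts from a direct case analysis of \lit{interact} (working through \lit{cancel}, \lit{join}, \lit{split}, \lit{normalize}):
\begin{itemize}
\item if two nodes of positive sign interact, both keep positive sign, since two nonnegative values cannot produce a negative one and a zero outcome is normalized to $\langle0,0\rangle^+$ whenever the partner has value $\ge0$ --- hence $N$ and $D$ change only through interactions touching a node of nonpositive sign;
\item $N$ never increases, and strictly decreases whenever a $-1$ node meets a node of value $\ge1$ (the $-1$ gets absorbed into a nonnegative value);
\item once $N=0$, no transition ever produces a negative value, so from then on $N\equiv0$ and $D$ is non-increasing, and $D$ strictly decreases whenever a $\langle0,0\rangle^-$ node meets a node of value $\ge1$ (a split or cancel injects a positive unit, or it is normalized to $\langle0,0\rangle^+$);
\item a constant fraction $cn$ of the nodes have value $\ge1$ in every reachable configuration (the delicate point, discussed below).
\end{itemize}

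Granting these, let $T_1$ be the first time $N=0$. Before $T_1$ (Phase~1), in each round a $-1$ node is paired with a value-$\ge1$ node with probability at least $\tfrac{N\cdot cn}{\binom n2}=\Omega(N/n)$, which by the second fact drops $N$ by one; since $N$ starts at $\le n/3$ and is monotone, the time to reach $N=0$ is stochastically dominated by $\sum_{k=1}^{n/3}\mathrm{Geom}(\Omega(k/n))$, which has expectation $O(n\log n)$ and is $O(n\log^2 n)$ with probability $1-n^{-\Omega(1)}$ (split into the stages with $k\ge\log n$, bounded by a union of $\le n$ Chernoff tails, and those with $k<\log n$, which are $O(\log n)$ stages each lasting $O(n\log n)$ w.h.p.). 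After $T_1$ (Phase~2), $N\equiv0$ and $D$ is non-increasing, and the same estimate applied to the third fact gives a further $O(n\log n)$ expected / $O(n\log^2 n)$ w.h.p.\ rounds. Summing the two phases yields the high-probability bound; combining it with the $O(n^5)$ worst-case bound of \lemmaref{lem:lowp} on the low-probability complement gives the bound in expectation.

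The main obstacle is the fourth structural fact --- keeping $\Omega(n)$ nodes of value $\ge1$. A value-$\ge1$ node can lose all its value: for instance a $\langle1,0\rangle$ meeting a \emph{mixed} positive partner $\langle x,y\rangle$ (with $x\ge4y$) becomes $\langle0,0\rangle^+$ while the partner merely sheds its $y$-component. The plan is to charge each such ``absorption'' either to a decrease of $N$ or to the creation of a mixed positive state, and to use that in the present regime --- where every negative value is exactly $-1$ --- a mixed positive state is produced only by converting a $-1$ node to a positive value (split on the $y$-side never fires, and \lit{cancel} never attaches a fresh $y$-unit to a pure node). Hence the number of absorptions is bounded in terms of $N_0$ and the number of mixed positive nodes present initially, while each absorption consumes a distinct pure positive node; together with the fact that positive nodes are only ever replenished (a value-$\ge2$ node meeting $\langle0,0\rangle^+$ splits into two positive nodes), this should give $P=\Omega(n)$ throughout. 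Making this bookkeeping airtight across every branch of \lit{interact}, and ruling out cascades in which the same positive unit is repeatedly consumed, is the part that requires real care; the probabilistic content is routine.
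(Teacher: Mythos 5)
Your decomposition into two back-to-back death chains (first the $-1$ nodes, then the $\langle 0,0\rangle^-$ nodes) and the coupon-collector timing are exactly the paper's proof, and the probabilistic content is, as you say, routine. The proof therefore stands or falls on your fourth structural fact --- that $\Omega(n)$ nodes of strictly positive value survive throughout --- and this is precisely the step you leave open. The paper closes it far more cheaply than your charging scheme: it couples every loss of a strictly positive (``target'') node to the loss of a $-1$ node in the same interaction, so that the difference (targets minus $-1$ nodes) is non-decreasing; since it starts at $\ge 2n/3 - n/3 = n/3$, at least $n/3$ targets are present in every reachable configuration, and no audit over the whole execution is needed. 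You should aim for this local coupling rather than a global count of absorptions.

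As written, your sketch has two concrete problems. First, the bookkeeping is not sound: a pre-existing mixed positive node $\langle x, 2^j\rangle$ can absorb up to $j+1$ pure positives in succession, since each absorption only halves or zeroes its $y$-component via $\id{Reduce}$, so the number of absorptions is not bounded by $N_0$ plus the number of mixed positive states present initially; and because absorbed nodes can later be regenerated by splits, ``each absorption consumes a distinct pure node'' does not by itself give $P=\Omega(n)$. Second, your first bullet (two positive nodes both keep positive sign) must be checked against the variant of \lit{interact} that the paper actually analyzes, in which \lit{split} is applied only when both \lit{cancel} and \lit{join} fail: there a successful \lit{cancel} can zero the $x$-component of a mixed positive node whose partner carries a matching $y$-component, and the compensating split is suppressed, so the outcome of such an interaction needs explicit treatment. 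Whichever route you take, the case analysis of \lit{interact} showing that target losses are coupled to $-1$ losses is the entire content of the claim and cannot be deferred as ``bookkeeping that requires real care.''
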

\begin{proof}
In any configuration, let us call \emph{conflicting} any node that encodes $-1$, 
  and \emph{target} node any node that has a strictly positive value. 
Because of the structure of the algorithm, 
  and that in configuration $c$ the only nodes with non-positive sign encode $-1$ or $0$,
  in all configurations reachable from $c$ nodes with negative values will also only encode $-1$ or $-0$.
Moreover, the number of conflicting nodes can never increase after an interaction.
Observe that the number of conflicting nodes decreases by one after an interaction where a target node 
  (with a stritly positive value) meets a node with value $-1$,
  while the number of target nodes may also decrease by at most $1$. 
This is because a split reaction happens on the positive component of the target node 
  (since the positive component of the conflicting node is $0$) 
  and both nodes get value $\geq 0$ after the interaction. 
There are at least $n/3$ more target nodes than conflicting nodes in $c$,
  therefore, in every later configuration, there must always be at least $n /3$ target nodes. 

Let us estimate the number of rounds until each conflicting node has interacted with a target node, 
  at which point no more conflicting nodes may exist.
Let us say there were $x$ conflicting nodes in configuration $c$.
%% Trivially, there cannot be more than $n$ conflicting nodes in configuration $c$. 
The expected number of rounds until the first conflicting node meets a target node is at most $\frac{3n}{x}$,
  since the probability of such an interaction happening in each round is at least $\frac{x}{n} \cdot \frac{n }{3n}$. 
The expected number of rounds for the second node is then $\frac{3n}{(x-1)}$, and so on.
By linearity of expectation, the expected number of rounds until all conflicting nodes are eliminated is 
  $O(n \log x) \leq O(n \log n)$.

At this point, all nodes that do not have a positive sign must be in state $\langle 0, 0 \rangle^-$.
If we redefine \emph{conflicting} to describe these nodes, it is still true that an interaction of a conflicting node 
  with a target node brings the conflicting node to state $\langle 0, 0 \rangle^+$,
  decreasing the number of conflicting nodes.
As we discussed at least $n /3$ target nodes are still permanently present in the system.
By the structure of the algorithm no interaction can increase the number of conflicting nodes,
  and the system stabilizes when all conflicting nodes are eliminated.
This takes expected $O(n \log n)$ rounds by exactly the same argument as above.

To get the high probability claim, simply observe that when there are $x$ conflicting nodes in the system,
  a conflicting node will interact with a target node within $\frac{3n O(\log{n})}{x}$ rounds, with high probability.
The same applies for the next conflicting node, etc.
Taking Union Bound over these events gives the desired result.
\end{proof}

\section{Synthetic Coins}

\begin{claim}\label{clm:coinexp}
$\EX[X_{i+m} \mid X_i = x] = n / 2  + ( 1- 4 / n )^m \cdot ( x - n / 2)$.
\end{claim}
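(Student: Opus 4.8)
The plan is to establish the one-step conditional expectation of $X$ and then iterate it. Condition on the full configuration of coin bits after interaction $i$, and suppose $X_i = x$. In interaction $i+1$ a uniformly random unordered pair of distinct nodes is drawn from the $\binom{n}{2}$ possibilities, and both chosen nodes flip their \id{coin} bit. I would split into three cases according to how many of the two chosen nodes currently hold a $1$: if both do (there are $\binom{x}{2}$ such pairs) then $X$ decreases by $2$; if neither does ($\binom{n-x}{2}$ pairs) then $X$ increases by $2$; if exactly one does ($x(n-x)$ pairs) then $X$ is unchanged. Note that these counts depend on the configuration only through $x$, so $X_0, X_1, \dots$ is itself a Markov chain.

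Averaging over the chosen pair gives
\[
\EX[X_{i+1}-X_i \mid X_i = x] \;=\; \binom{n}{2}^{-1}\Bigl(2\binom{n-x}{2} - 2\binom{x}{2}\Bigr),
\]
and a short expansion of the binomial coefficients yields $2\binom{n-x}{2}-2\binom{x}{2} = (n-1)(n-2x)$, hence $\EX[X_{i+1}\mid X_i=x] = x + 2 - \tfrac{4x}{n} = \bigl(1-\tfrac{4}{n}\bigr)x + 2$. The key observation is that this affine recursion has fixed point $n/2$: setting $D_j := X_j - n/2$, it becomes $\EX[D_{i+1}\mid X_i = x] = \bigl(1-\tfrac4n\bigr)(x-\tfrac n2)$, i.e.\ the centered quantity contracts in expectation by the factor $1-4/n$ at every step, regardless of its current value.

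Finally I would iterate by the tower property and induction on $m$. The base case $m=0$ is trivial. Assuming the identity for $m-1$, condition on the configuration after $i+m-1$ interactions, apply the one-step identity in the form $\EX[X_{i+m}-\tfrac n2 \mid X_{i+m-1}] = \bigl(1-\tfrac4n\bigr)(X_{i+m-1}-\tfrac n2)$, and then invoke the inductive hypothesis on $\EX[X_{i+m-1}-\tfrac n2\mid X_i=x]$, obtaining $\bigl(1-\tfrac4n\bigr)^{m}(x-\tfrac n2)$; adding $n/2$ back gives the claim. There is no substantial obstacle here: the only points to get right are the three-way case analysis, the binomial arithmetic, and the observation that the recursion closes on the conditional mean alone, so iterating it needs no control of higher moments.
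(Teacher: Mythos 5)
Your proposal is correct and follows essentially the same route as the paper's proof: the same three-way case analysis on the coin values of the selected pair, the same one-step affine recursion $\EX[X_{i+1}\mid X_i=x]=(1-\tfrac4n)x+2$ (your binomial-coefficient count is just the paper's probability computation written multiplicatively), and the same iteration/telescoping to reach the closed form. No gaps.
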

\begin{proof}
If two agents both with coin values one are selected, the number of ones decreases by two.
If both coin values are zero, it increases by two, and otherwise stays the same. 
Hence, we have that
\begin{align*}
  \EX[X_{i+m} \mid X_{i+m-1} = t] &= (t-2) \cdot \Pr[X_{i+m} = t-2] + t \cdot Pr[X_{i+m} = t]
    + (t+2) \cdot \Pr[X_{i+m} = t+2] \\
  &= (t-2) \cdot \frac{t(t-1)}{n(n-1)} + t \cdot \frac{2t(n-t)}{n(n-1)} 
    + (t+2) \cdot \frac{(n-t)(n-t-1)}{n(n-1)} \\
  &= t + \frac{2}{n(n-1)} \cdot \left( n^2 -2nt -n +2t \right) = t 
    \cdot \left( 1-\frac{4}{n} \right) + 2
\end{align*}
Thus, we get a recursive dependence $\EX[X_{i+m}] = \EX[X_{i+m-1}] \cdot (1-4/n) + 2$, 
  that gives
\begin{equation*}
  \EX[X_{i+m}] = 2 \cdot \sum_{j=0}^{m-1} \left( 1-\frac{4}{n} \right)^j
    + \EX[X_{i}] \cdot \left( 1-\frac{4}{n} \right)^m
  = \frac{n}{2} + \left( 1-\frac{4}{n} \right)^m \left( x - \frac{n}{2} \right)
\end{equation*}
\noindent by telescoping.
\end{proof}

\section{Analysis of the Leader Election Algorithm}
\begin{lemma}
\label{lem:lecorrect}
All nodes can never be minions. 
A configuration with $n-1$ minions must have a stable leader, meaning that
  the non-minion node will never become a minion, while minions will remain minions.
\end{lemma}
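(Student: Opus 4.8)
The two parts are linked: the second follows from the first. For the second part, take any reachable configuration $c$ with $n-1$ minions and one non-minion agent $v$. Every configuration reachable from $c$ is reachable from the initial configuration, so by the first part none of them has all $n$ agents in $\lit{minion}$ mode; hence $v$ can never become a minion and stays a non-minion forever, while (since a minion stays a minion) the other $n-1$ agents remain minions forever. As the output map sends $\lit{minion}$ states to $\id{Lose}$ and all other states to $\id{Win}$, every configuration reachable from $c$ has exactly one leader, i.e. $c$ has a stable leader.

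For the first part I would track $P^*(t) = \max_u (\id{payoff}_u, \id{level}_u)$, the lexicographically largest $(\id{payoff},\id{level})$ pair in the system after $t$ interactions, and prove the invariant: in every reachable configuration, either some agent is in $\lit{seeding}$ or $\lit{lottery}$ mode, or some $\lit{tournament}$-mode agent attains $P^*(t)$. Either alternative exhibits a non-minion, which is exactly the first part. Two preliminary observations drive this. First, each agent's $(\id{payoff},\id{level})$ pair is non-decreasing in $t$: a routine check across the four modes and the mode transitions, using in particular that a defeated tournament agent copies its victor's pair, which is lexicographically at least its own; consequently $P^*(t)$ is non-decreasing. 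Second, $P^*$ can only increase via an agent whose own pair just increased, and by inspection of the update rules that agent is either a $\lit{tournament}$ agent incrementing its $\id{level}$ or a $\lit{lottery}$ agent finalizing its $\id{payoff}$ into tournament mode --- in both cases that agent is in $\lit{tournament}$ mode right afterwards, since a minion's adopted pair never exceeds the pre-interaction value of $P^*$.

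To establish the invariant I would argue by contradiction within a fixed reachable configuration at time $t$ in which every agent is in $\lit{tournament}$ or $\lit{minion}$ mode and no tournament agent attains $P := P^*(t)$. Using the second observation and minimality, the first interaction at which any agent attains $P$ is one where a tournament agent does; so the set of interaction indices $s < t$ at which some tournament agent has pair $P$ is non-empty --- let $s^\sharp$ be its maximum and $v^\sharp$ a corresponding tournament agent. Since $P^*$ is non-decreasing and equals $P$ at both $s^\sharp$ and $t$, it is $\equiv P$ in between, so $v^\sharp$'s pair (non-decreasing and $\le P^*$) stays $= P$ through time $t$. As $v^\sharp$ is not a tournament agent at time $t$, it loses a tournament interaction at step $s^\sharp{+}1$; its victor cannot be a minion (that would require a strictly larger pair, impossible since $P$ is the global maximum), hence is a tournament agent that keeps tournament mode and pair $P$ at step $s^\sharp{+}1$ --- contradicting the maximality of $s^\sharp$, or, if $s^\sharp{+}1 = t$, contradicting the starting assumption outright.

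The main obstacle I anticipate is the low-level bookkeeping around a single interaction: proving cleanly that a minion's pair can never overtake the previous value of $P^*$ (so that the maximum only ever migrates through tournament or lottery agents), fixing the within-interaction order of the compare/eliminate/level-increment steps so the ``latest tournament witness $s^\sharp$'' argument has no gaps, and verifying that the loser of a tournament-vs-tournament interaction indeed inherits a pair that is lexicographically $\ge$ its own. Each of these is routine from the protocol description, but the whole argument rests on the monotonicity of $P^*$ and on tournament mode being preserved for the victor, so they must be checked carefully.
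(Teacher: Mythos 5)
Your proposal is correct and follows essentially the same route as the paper: both hinge on the facts that $(\id{payoff},\id{level})$ pairs are non-decreasing and that a contender holding the globally maximal pair can only be eliminated by another contender with the same pair (via the coin tie-breaker), which then survives as a contender --- the paper phrases this as ``consider the last max-pair contender to turn minion,'' while you phrase it as an invariant on the latest time a tournament agent attains $P^*$. The reduction of the second part to the first is also identical to the paper's.
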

\begin{proof}
Assume for contradiction that all nodes are minions at some time $T$, 
  and let $u$ be a maximum $(.\id{payoff}, .\id{level})$ pair 
  (lexicographically) among all the minions at this time.
No node in the system could ever have had a larger pair,
  because no interaction can decrease a pair.
The minions only record the values of such pairs they encounter, 
  and never increase them, so there must have been a contender in the system 
  with a payoff and level pair $u$ that turned minion by time $T$.
Among all such contenders, consider the one that turned minion the last.
It could not have interacted with a minion, because no minion (and no node)
  in the system ever held a larger pair.
On the other hand, even if it interacted with another contender, 
  the contender also could not have held a larger pair.
Thus, it could only have been an interaction with another contender, 
  that held the same pair and a larger coin value used as a tie-breaker.
However, that interaction partner would remain a contender and
  and must have turned minion later, contradicting our assumption that 
  the interaction we considered was the last one where a contender with 
  a pair $u$ got eliminated.

By the structure of the algorithm, minions can never change their mode.
In any configuration with $n-1$ minions, the only non-minion must remain
  so forever, and thus be a stable leader, because otherwise we would get $n$
  minions and violate the above argument.
\end{proof}
\begin{lemma}
\label{lem:stage1}
With probability $1 - O(1)/n^3$, after $O(n \log{n})$ interactions, 
  all agents will be in the competition stage,
  that is, either in $\lit{tournament}$ or $\lit{minion}$ mode, 
  with maximum $\id{payoff}$ at least $\log{n} / 2$ and at most $9 \log{n}$,
  and at most $5 \log{n}$ agents will have this maximum $\id{payoff}$.  
\end{lemma}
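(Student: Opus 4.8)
The plan is to condition on a high-probability ``good event'' $G$ under which the synthetic coin is well mixed throughout the relevant prefix, and then treat the coin values an agent reads while in $\lit{lottery}$ mode as conditionally near-fair bits. Fix a large constant $C$ and look at the first $Cn\log n$ interactions. Applying \theoremref{thm:coin} with $\alpha=2$ and taking a union bound over interaction indices $k\in[2n,Cn\log n]$ --- together with the fact that $\EX[X_t]<n/2$ for every $t$, so $X_t$ also stays below $n/2+o(n)$ for $t<2n$ --- gives an event $G$ with $\Pr[\neg G]=o(1/n^3)$ on which $|X_t-n/2|\le\delta n$ for all $t\in[2n,Cn\log n]$ and $X_t\le n/2+\delta n$ for all $t\le 2n$, where $\delta$ is any fixed small constant. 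The key consequence is that, on $G$, if we reveal the sequence of all $\lit{lottery}$-mode coin observations in global time order, each one equals $1$ with conditional probability in $[\tfrac12-\delta,\tfrac12+\delta]$ given everything revealed before it, since an agent's partner is uniform among the other $n-1$ agents, at most $X_t\le n/2+\delta n$ of which currently hold coin value $1$ (and, once $t\ge 2n$, at least $n/2-\delta n-1$ hold $0$). I also condition on the event $A$, proved by a Chernoff bound, that every agent participates in at least $4+14\log n$ of the first $Cn\log n$ interactions; $\Pr[\neg A]=o(1/n^3)$ for $C$ large. Note that because each agent's first $j$ observations all precede its later observations in time order, membership of an agent in the set $N_j$ of agents whose first $j$ lottery observations are all $1$ is already determined when that agent's $(j{+}1)$-st observation is revealed.

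Three of the four assertions then follow by union bounds over agents or payoff levels, working on $G\cap A$. \textbf{Competition stage:} after $4$ interactions an agent leaves $\lit{seeding}$; in $\lit{lottery}$ mode each observation is $0$ with conditional probability $\ge\tfrac12-\delta\ge\tfrac14$, so the chance an agent sees no $0$ among its first $14\log n$ lottery observations is at most $(\tfrac12+\delta)^{14\log n}\le n^{-8}$, and a union bound over agents shows everyone has moved to $\lit{tournament}$ (hence is in $\lit{tournament}$ or $\lit{minion}$ mode) within $Cn\log n$ interactions; in particular nobody reaches the cap $\sqrt m\gg\log n$, so all payoffs are finalized by seeing a $0$. \textbf{Maximum payoff $\le 9\log n$:} a payoff above $9\log n$ needs $9\log n$ consecutive observed $1$'s, which (revealing that agent's observations in order) has conditional probability $\le(\tfrac12+\delta)^{9\log n}\le n^{-6}$; union over agents contributes $o(1/n^3)$. \textbf{At most $5\log n$ agents at the maximum:} the agents attaining the maximum payoff $M$ are exactly $N_M\setminus N_{M+1}=N_M$, so it suffices to bound $\Pr[\exists k:\ |N_k|>5\log n \text{ and } N_{k+1}=\emptyset]$; for fixed $k$, revealing in time order the $(k{+}1)$-st observation of each agent that has turned out to lie in $N_k$, each such bit is $0$ with conditional probability $\le\tfrac12+\delta$, so the probability that the first $5\log n+1$ of them are all $0$ is at most $(\tfrac12+\delta)^{5\log n}\le n^{-3.4}$, and a union bound over the $O(\log n)$ values $k\le 9\log n$ gives $o(1/n^3)$.

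The remaining --- and main --- difficulty is the lower bound \textbf{maximum payoff $\ge\tfrac12\log n$}, which cannot come from a per-agent union bound and where the shared synthetic coin makes distinct agents' payoffs dependent. I would handle it with a ``layered'' stochastic-domination argument. Let $N_0$ be the set of agents whose $5$-th interaction occurs after global time $2n$; a concentration (bounded-differences) bound gives $|N_0|\ge n/3$ on a further high-probability event, and every lottery observation of these agents occurs at a time $\ge 2n$, hence is $1$ with conditional probability $\ge\rho:=\tfrac12-\delta$. Restricting $N_j$ to these agents, revealing observations in global time order shows that $|N_{j+1}|$ stochastically dominates $\mathrm{Bin}(|N_j|,\rho)$ given $|N_j|$, so as long as $|N_j|$ is polynomially large a Chernoff bound gives $|N_{j+1}|\ge(1-o(1))\rho\,|N_j|$ with failure probability $\exp(-\Omega(|N_j|))$. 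Iterating $\tfrac12\log n$ times and taking a union bound over the steps yields $|N_{\log n/2}|\ge\bigl((1-o(1))\rho\bigr)^{\log n/2}\cdot\tfrac n3=n^{\Omega(1)}\ge 1$ with probability $1-o(1/n^3)$ --- i.e., some agent attains payoff $\ge\tfrac12\log n$. Combining the error contributions from $\neg G$, $\neg A$, the three union-bound arguments, and this layered argument gives the claimed failure probability $O(1)/n^3$. The one subtlety to track everywhere is that the coin is not yet well mixed in the first $2n$ interactions, so early observations can be biased toward $0$; this only helps the three union-bound parts, but it is precisely why the lower-bound argument must be run on the $\ge n/3$ ``late'' agents.
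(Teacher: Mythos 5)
Your proposal follows the same overall architecture as the paper's proof: condition on the synthetic coin being well mixed via \theoremref{thm:coin} plus a union bound over interaction indices, then dispatch the four assertions by per-agent or per-payoff union bounds together with one global argument for the lower bound on the maximum payoff. Three of your four parts essentially coincide with the paper's (your handling of the pre-mixing window via the one-sided bound $X_t\le n/2+o(n)$ replaces the paper's Chernoff bound on how often an agent interacts in the first $2n$ steps, but both yield the $9\log n$ cap). The one genuinely different component is the lower bound $\geq \log n/2$ on the maximum payoff: the paper examines roughly $n/\log n$ agents one at a time, arguing each attains payoff $\geq \log n /2$ with probability $\geq n^{-0.6}$ and applying a Chernoff bound to the resulting binomial; you instead track the survivor sets $N_j$ layer by layer and show $|N_{j+1}|$ stochastically dominates $\mathrm{Bin}(|N_j|,1/2-\delta)$, so $|N_{\log n/2}|=n^{\Omega(1)}$. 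Both are valid; yours avoids the paper's informal appeal to ``completely independent'' interactions, at the cost of a per-layer union bound.

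There is, however, one genuine flaw in your stated reasoning, in the ``at most $5\log n$ agents at the maximum'' part. You claim the early bias of the coin toward $0$ ``only helps the three union-bound parts.'' That is true for the first two (which need $0$'s to appear, resp.\ need $1$'s not to persist), but it is exactly backwards for the third: there you need each revealed $(k{+}1)$-st observation to be $0$ with conditional probability \emph{at most} $1/2+\delta$, and during the first $2n$ interactions nearly all coins are $0$, so that conditional probability can be close to $1$ and the bound $(1/2+\delta)^{5\log n}$ fails. Your union bound over all $k\le 9\log n$ includes small $k$ for which agents' $(k{+}1)$-st observations can easily fall inside the first $2n$ interactions. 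The gap is patchable: by your own lower-bound part only $k\ge \log n/2-1$ can witness the bad event, and for such $k$ an agent in $N_k$ has its $(k{+}1)$-st lottery observation only at its $(k{+}5)$-th interaction overall; a Chernoff bound shows that with probability $1-n^{-\omega(1)}$ no agent participates in $\log n/2+4$ of the first $2n$ interactions, so every observation you reveal occurs after time $2n$, where the two-sided bound from $G$ applies. You should add this step explicitly (the paper's own treatment of this point is also terse, but it does restrict attention to $k\in[\log n/2,\,9\log n]$ and to post-mixing interactions).
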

\begin{proof}
Let us call the first $2n$ interactions the \emph{irrelevant} interactions,
  and all interactions after that the \emph{relevant} interactions.
By~\theoremref{thm:coin}, during relevant interactions, 
  with probability at least $1 - 2\exp(-\sqrt{n}/4)$, there are 
  at least $n (1/2 - 1/2^8)$ and at most $n (1/2 + 1/2^8)$ agents holding each possible coin value.
Taking an union bound, this holds for all of the first $n^4$ relevant interactions
  with probability at least $1-\frac{2n^4}{\exp(\sqrt{n}/4)} < 1-1/n^3$.
From now on, we will consider this high probability event of synthetic coin working properly
  and implicitly take unoin bound over it.

During any relevant interaction, any agent in $\lit{lottery}$ mode 
  with probability at least $1/2 - 1/2^8$ observes $0$ and changes its $\id{mode}$ to the tournament.
Hence, the probability that it increases its payoff more than $5 \log{n}$ times
  is at most $(1/2 + 1/2^8)^{5\log{n}} \leq 1/n^4$.
Taking the union bound over all agents gives that the increase in payoffs in the relevant interactions 
  for all agents are less than $5\log{n}$ with probability at least $1-1/n^3$.
During the first $2n$ irrelevant interactions, by the Chernoff bound,
  the probability that a given agent interacts more than $4 \log{n}$ times is at most $1/n^4$.
Taking an union bound, all agents interact at most $4 \log{n}$ times 
  with probability at least $1-1/n^3$.
Even if they increase their $\id{payoff}$ each time,
  in total, with probability at least $1-O(1)/n^3$, all agents will have payoffs at most $9 \log{n}$.

Since every agent stays in $\lit{seeding}$ mode for four interactions, we can find 
  at least $n/2$ agents, who move to $\lit{lottery}$ mode during a relevant interaction.
Consider any one of the $n/2$ agents.
By assumption, the agent will have probability at least $1/2 - 1/2^8$ of finalizing 
  its $\id{payoff}$ and moving to $\lit{tournament}$ mode.
The probability that the payoff of this agent will be larger than $\log{n}/2$ 
  is thus at least $(1/2 - 1/2^8)^{\log{n}/2} \geq 1/n^{0.6}$.
If the payoff is indeed larger, we are done, otherwise, we can find another agent among 
  the $n/2 - \log{n}/2$ whose interactions we have not yet considered, 
  and analogously get that with probability at least $1/n^{0.6}$, 
  it would get a payoff at least $\log{n}/2$.
We can continue this process, and will end up with about $n/\log{n}$ agents, 
  whose interactions were completely independent, and because of the bias, 
  each of them had a probability of at least $1/n^{0.6}$ of getting a larger payoff than $\log{n}/2$.
If we describe this process as a random variable 
  $\mathrm{Bin} \left( \frac{n}{\log{n}}, \frac{1}{n^{0.6}} \right)$ with expectation $n^{0.4}/\log{n}$, 
  we get by the Chernoff Bound that the probability of no node getting $\geq \log{n}/2$ payoff  
  is extremely low (in particular, less than $1/n^3$).

Again, considering all the high probability events from above, we know that the 
  maximum payoff in the system is between $\log{n}/2$ and $9\log{n}$.
Consider any fixed payoff $k$ in this interval, and let us say it is the maximum.
Then, any agent that reaches this payoff, has to flip $0$, but they might flip
  $1$ with probability at least $1/2 - 1/2^8$.
Thus, the probability that at least $12 \log{n}$ agents will stop exactly 
  at payoff $k$ is at most $(1/2 + 1/2^8)^{5 \log{n}} \leq 1/n^4$.
Taking the union bound over at most $9 \log{n} < n$ payoffs, 
  and the above high probability events, we get that with probability at most $1-O(1)/n^3$,
  at most $5 \log{n}$ agents will have the maximum payoff, which will 
  be between $\log{n} / 2$ and $9 \log{n}$.

It only remains to prove that after $O(n \log{n})$ interactions, 
  all agents will be in the tournament stage with probability $1-O(1)/n^3$,
  which is a standard high probability coupon collector argument.
Recall that after $2n$ irrelevant interactions, with probability at least $1-1/n^3$
  synthetic coins work properly for the next $n^4$ interactions,
  and thus, each agent that has not yet moved to the tournament stage
  and interacts, has probability of at least $1/2 - 1/2^8$ of doing so.
Thus, after $O(n \log{n})$ interactions, by the Chernoff bound,
  a given agent will interact $O(\log{n})$ times with very high probability
  and move to the tournament stage at least during 
  one of these $O(\log{n})$ interactions also with very high probability.
Taking union bound over all agents and these events completes the proof.     
\end{proof}
\begin{lemma}
\label{lem:stage2}
with probability at least $1-O(1)/n^3$,
  only one contender reaches level $\ell = 3 \log{n} / \log \log n$,
  and it takes at most $O(n \log^{5.3} \log \log n)$ interactions 
  for a new level to be reached up to $\ell$. 
% With probability at least $1-\frac{O(\log{n})}{n^3}$, 
  % only one contender reaches level $\ell = \frac{3\log{n}}{\log{18}+\log{\log{n}}}$,
  % and for each level up to $\ell$, it takes at most $O(n \cdot \log^{9}{n} )$ 
  % interactions before some contender gets to a larger level.
% Conditioned on this high probability event, the expected number of interactions 
  % before having $n-1$ minions is $O(n \log^{9}{n} )$.
\end{lemma}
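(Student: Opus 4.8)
The plan is to condition on two high-probability events, each failing with probability $O(1)/n^3$: the conclusion of \lemmaref{lem:stage1} (after $O(n\log n)$ interactions every agent is in $\lit{tournament}$ or $\lit{minion}$ mode, the maximum $\id{payoff}$ value $p^{\star}$ satisfies $\tfrac12\log n\le p^{\star}\le 9\log n$, and at most $5\log n$ agents --- the \emph{contenders} --- hold $p^{\star}$), and the conclusion of \theoremref{thm:coin} extended by a union bound over the first $O(n\log^{6.3}n)$ interactions, so that at every such step the interaction partner's $\id{coin}$ equals $1$ with conditional probability in $[\tfrac12-2^{-8},\tfrac12+2^{-8}]$. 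Only contenders matter: an agent of smaller payoff can never reach level $\ell$, and minions only serve to propagate $(\id{payoff},\id{level})$ pairs. A contender has $\id{payoff}=p^{\star}$, so a phase has length $h=4.2(\log p^{\star}+1)=(4.2+o(1))\log\log n$, and it succeeds (its $\id{level}$ increments) exactly when all $h$ observed partner coins equal $1$; conditioning sequentially on the history, this occurs with probability $q$ satisfying $(\tfrac12-2^{-8})^h\le q\le(\tfrac12+2^{-8})^h$, and a short computation gives $q\ge(\log n)^{-4.3}$ and $q\le(\log n)^{-4}$ for all sufficiently large $n$.

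First I would establish the speed claim. Fix $j\le\ell$ and let $t_j$ be the first time some contender reaches level $j-1$. The point is that I need not track any single contender: conditioned on level $j$ not yet being reached, at every interaction at least one of the $n$ agents is a level-$(j-1)$ contender, so over a window of $T=c\,n(\log n)^{5.3}\log\log n$ interactions the number of interactions in which a level-$(j-1)$ contender participates stochastically dominates $\mathrm{Bin}(T,2/n)$, hence is $\ge T/n$ except with probability $e^{-\Omega(T/n)}\le n^{-5}$; this amounts to $N=\Omega((\log n)^{5.3})$ completed phases by level-$(j-1)$ contenders. Each such completed phase is successful with conditional probability $\ge q\ge(\log n)^{-4.3}$, and a successful one would reach level $j$; hence the probability that all $N$ of them fail --- i.e.\ that level $j$ is not reached within $T$ interactions of $t_j$ --- is at most $(1-q)^{N}\le e^{-qN}\le e^{-\Omega((\log n)^{5.3-4.3})}=e^{-\Omega(\log n)}\le n^{-5}$ for $c$ large. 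A union bound over the $\ell\le\log n$ values of $j$ yields the speed claim with failure probability $O(1)/n^3$.

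For uniqueness, suppose contenders $A$ and $B$ (both with $\id{payoff}=p^{\star}$) each reach level $\ell$. Whenever one of them, say $A$, is strictly ahead in $\id{level}$, its pair $(p^{\star},\id{level}_A)$ spreads epidemically through minions, so within $O(n\log n)$ further interactions every agent's recorded pair is $\ge(p^{\star},\id{level}_A)$; the first interaction $B$ has thereafter turns $B$ into a minion, unless $B$ has meanwhile incremented its own $\id{level}$. In $O(n\log n)$ interactions $B$ participates in $O(\log n/\log\log n)$ phases, so it increments with probability at most $\rho:=O(\tfrac{\log n}{\log\log n})\cdot(\log n)^{-4}\le(\log n)^{-3}$. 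Consequently, in any scenario where neither of $A,B$ is eliminated the gap $|\id{level}_A-\id{level}_B|$ stays in $\{0,1\}$, and since $\id{level}_A+\id{level}_B$ climbs from $0$ to $2\ell$ in $\pm1$ steps, exactly $\ell$ of those steps are catch-ups by the trailing contender, each of conditional probability $\le\rho$ (with an extra per-window $n^{-5}$ slack for the Chernoff and epidemic bounds). Hence $\Pr[A\text{ and }B\text{ both reach level }\ell]\le\rho^{\ell}+O(\ell)\,n^{-5}\le n^{-8}+O(1)/n^4$, using $\ell=3\log n/\log\log n$ and $\rho\le(\log n)^{-3}$; a union bound over the $\binom{5\log n}{2}$ pairs gives $O(1)/n^3$.

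The hard part is the uniqueness argument, i.e.\ making ``a trailing contender dies before it can catch up'' rigorous. It rests on (i) a clean rumor-spreading bound --- the pair $(p^{\star},\id{level})$ of the current leader reaches all $n$ agents within $O(n\log n)$ interactions w.h.p.; (ii) a careful definition of the gap process that handles the interleaving of $A$'s and $B$'s phase boundaries and the $\id{coin}$ tie-break at equal levels, so the gap provably never exceeds $1$ while both survive; and (iii) the observation that reaching the deliberately \emph{large} threshold $\ell=3\log n/\log\log n$ forces $\Omega(\ell)$ events of probability $(\log n)^{-3}$, which is precisely what turns $(\mathrm{polylog}\,n)^{-\ell}$ into $n^{-\Theta(1)}$ with a comfortable constant. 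The speed claim is comparatively routine given the aggregate counting above, and the per-phase probability estimate together with the coin union bound are standard.
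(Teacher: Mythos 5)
Your proposal is correct and follows essentially the same route as the paper's proof: condition on \lemmaref{lem:stage1} and the synthetic-coin concentration, obtain the speed bound by counting $\Omega(\log^{5.3} n)$ completed phases with per-phase success probability at least $(\log n)^{-4.3}$ in each window, and obtain uniqueness by combining epidemic spreading of the leading $(\id{payoff},\id{level})$ pair with the fact that a trailing contender must catch up, with probability $1/\polylog n$ per level, at each of the $\ell = 3\log n/\log\log n$ levels, which is exactly what converts a polylogarithmic per-level bound into $n^{-\Theta(1)}$. The only deviations are bookkeeping: the paper bounds the probability that \emph{any} of the $5\log n$ contenders catches up at a given level by $1/\log n$ via a first-moment count over all their phases in its $16n\log^2 n$-interaction spreading window and then takes $(1/\log n)^{\ell}\leq n^{-3}$, whereas you argue per pair and union over pairs (note only that your claim that the level gap stays in $\{0,1\}$ while both contenders survive is not literally true, since the leader can increment twice before the trailing contender is forced to respond, though the essential count of $\Omega(\ell)$ forced low-probability catch-ups survives).
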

\begin{proof}
By our assumption on $m$, it holds that $\frac{m}{\log{m}} \geq \frac{3\log{n}}{\log{\log{n}}}$,
  so this level can always be reached. 
We consider the high probability case in~\lemmaref{lem:stage1}, 
  which occurs with probability $\geq 1-O(1)/n^3$. 
Hence, we need to prove that the probability that 
  more than one competitor reaches level $\ell$ is also at most $O(1)/n^3$.
 
Consider some competitor $v$ which just increased the maximum level among competitors in the system. 
Until some other competitor reaches the same level, 
  $v$ will turn every interaction partner into its minion. 
Furthermore, as in epidemic spreading, these minions will also turn their interaction partners 
  into minions of the highest level contender $v$. 
Let the $\id{payoff}$ and $\id{level}$ pair of this competitor be $u$. 
We call a node whose pair also at least $u$ an \emph{up-to-date} node; 
  the node is \emph{out-of-date} otherwise.
Initially, only the contender $v$ that reached the maximum level is up-to-date.

We will show that if in some configuration $x < n$ nodes are up-to-date,
  after a \emph{phase} of $\frac{16 n (n-1) \log{n}}{4x(n-x)}$ interactions, 
  at least $x+1$ nodes will be up-to-date with probability at least $1-\frac{1}{n^5}$.
Up-to-date nodes may never become out-of-date.
On the other hand, an out-of-date node becomes up-to-date itself after an interaction with 
  any up-to-date node.
If we have $x$ up-to-date nodes, 
  in each interaction, the probability that an out-of-date node interacts with an up-to-date node 
  increasing the number of up-to-date nodes to $x+1$, is $\frac{2x(n-x)}{n(n-1)}$,
  which we denote by $1/\alpha$.
To probability that this never happens during $\frac{16 n(n-1)\log{n}}{4x(n-x)} = 8 \alpha \log{n}$
  interactions is then $(1 - 1 / \alpha)^{8 \alpha \log{n}} \leq 1 - 1/n^5$. 

\noindent An Union Bound over at most $n$ phases gives that with probability at least $1 - 1/n^4$, 
  after at most $$\sum_{x = 1}^{n-1} \frac{16 n (n-1) \log{n}}{4x(n-x)} 
  \leq \frac{16 (n-1)\log{n}}{4} \sum_{x = 1}^{n-1} \left( \frac{1}{x} 
  + \frac{1}{n-x}\right) \leq 16 n\log^2 n$$ 
  rounds, all nodes will have value at least $u$.
Taking an union bound over all possible levels $\ell < \log{n} < n$,
  and all previous events, we get that with probability at least $1-O(1)/n^3$, 
  once a contender reaches some level, unless some other contender reaches the same level 
  within the next $16 n \log^2{n}$ interactions,
  the original node will turn every other node into minions and become a stable leader.

Once some contender has increased the maximum level, 
  it needs to observe between $4.2 \log{\log{n}}$ and $4.2 (\log \log n + \log 9 + 1)$  
  consecutive ones to increase a level,
  as with high probability we knew that the payoff was between $\log{n} / 2$ and 
  $9 \log n$ and we set the phase size to $4.2 (\log{\id{payoff}} + 1)$. 
Notice that all nodes that did not get a maximum payoff are not considered as contenders,
  because they will all be eliminated by maximum payoff contenders within $O(n \log {n})$ interactions.

A given node at each iteration observes coin value one with probability at least $(1/2 - 1/2^8)$, 
so the probability that a stretch of $4.2 \log{\log{n}} + O(1)$ consecutive interactions results 
  in a level increase is at least $\Theta(1/\log^{4.3} n)$.
If we consider an interval containing $O(n \log^{5.3}{n} \log\log n)$ interactions,
  then the agent increases the level in expectation $O(\log^{5.3}{n} / \log^{4.3} n) = O(\log n)$
  times and by the Chernoff bound, at least once with high probability.

For the rest of the argument, we will bound the probability that any of the other contenders, 
  whose number is at most $5 \log n$ by~\lemmaref{lem:stage1}, 
  will be able to increment their level during an arbitrary stretch of 
  $6 \log{\log{n}} + 12$ consecutive interactions. 
This probability will be low, and therefore it is likely that the process 
  will terminate after a level increase. 

More precisely, once a new level is reached after a level increment, 
  the nodes have $16 n \log^2{n}$ interactions to increment to the same level,
  or they will soon all become minions.
To do so, they should all have at least one iteration of observing at least $4.2 \log{\log{n}}$ 
  consecutive ones.
% , because all contenders have the maximum payoff from the first stage; 
% by ~\lemmaref{lem:stage1},
  % the maximum payoff was at least $\log{n} / 4$.

Hence, there can be at most $5 \log{n} \cdot \Theta(\log^2{n} / \log{\log{n}})$ such interaction intervals 
  % with probability at least $1-1/n^3$.
  and each interaction interval has probability at most 
  $(1 / 2 + 1 / 2^8)^{4.2\log\log{n}} \leq 1/\log^4{n}$ of success.
Hence, by taking sufficiently large $n$, we can make the 
  expectation of the number of successful iterations be less than $1/\log{n}$.
% More precisely, by fixing the constants, we can show that the probability that even one of the iterations is successful
  % is at most $\frac{1}{18\log{n}}$.

Hence, the probability that there is no second survivor among contenders at each level 
  (which would correspond to a stable leader being elected) is at most $1/\log{n}$, 
  every time the maximum level is incremented.
The probability that this does not happen for all 
  $\ell = \frac{3\log{n}}{\log{\log{n}}}$ levels is then at most
  $(\frac{1}{\log{n}})^{\ell} \leq \frac{1}{n^3}$.
\end{proof}
\begin{lemma}
\label{lem:electexp}
The expected parallel time until stabilization is $O(\log^{5.3} n \log \log n)$.
\end{lemma}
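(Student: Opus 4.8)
The plan is to combine the two stage lemmas (\lemmaref{lem:stage1} and \lemmaref{lem:stage2}) with a geometric bound on the number of level increments that occur before a leader is elected, and to dispose of the low-probability ``bad'' events by a crude polynomial bound.

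First I would set up the good event: with probability $1 - O(1)/n^3$ the synthetic coins behave as in \theoremref{thm:coin} throughout the relevant execution, and \lemmaref{lem:stage1} holds, so that within $O(n\log n)$ interactions every agent is in $\lit{tournament}$ or $\lit{minion}$ mode, the maximum $\id{payoff}$ is $\Theta(\log n)$, and at most $5\log n$ agents attain it. Conditioning on this event, the analysis inside \lemmaref{lem:stage2} shows two things. (i) Each time the system's maximum level is incremented to a fresh value $k$ (which requires the leading contender to complete one phase of $\Theta(\log\log n)$ of its own interactions with all observed coins equal to $1$, an event of probability $\Theta(1/\log^{4.2}n)$, so in expectation $O(n\log^{5.3}n\log\log n)$ global interactions, the subsequent $16n\log^2 n$-interaction epidemic window being lower order), the probability that some second contender also reaches level $k$ before the leading one converts the whole population into its minions is at most $1/\log n$. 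Hence the number $L$ of level increments until a stable leader is elected (a configuration with $n-1$ minions, which by \lemmaref{lem:lecorrect} has a stable leader) satisfies $\Pr[L \ge j] \le (1/\log n)^{j-1}$, so $\EX[L] = O(1)$. Summing the per-increment costs over $L$ — which is a stopping time and each increment costs $O(n\log^{5.3}n\log\log n)$ interactions in conditional expectation regardless of the prior failures — a Wald-type identity gives $\EX[\text{competition-stage interactions}] = \EX[L]\cdot O(n\log^{5.3}n\log\log n) = O(n\log^{5.3}n\log\log n)$, i.e. $O(\log^{5.3}n\log\log n)$ parallel time on the good event.

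It remains to absorb the complementary event, of probability $O(1)/n^3$. For this I would prove a crude $\poly(n)$ bound on the expected stabilization time from an arbitrary reachable configuration: the $\lit{seeding}$ countdown is a fixed four-step process, the $\id{payoff}$ counter is capped at $\sqrt m = \Theta(\log n)$, so every agent reaches the competition stage within $O(n\log^2 n)$ interactions in expectation, after which the argument of the previous paragraph, restarted from whatever configuration has been reached, elects a leader within $O(n\log^{6.3}n)$ further interactions with probability at least $1/2$; iterating over independent blocks yields $O(n\log^{6.3}n)$ expected interactions, i.e. $\poly(n)$ parallel time, from any start. Therefore the bad event contributes at most $O(1/n^3)\cdot\poly(n) = o(1)$ to $\EX[T]$, and combining the two contributions gives the claimed $O(\log^{5.3}n\log\log n)$ expected parallel time.

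The main obstacle is making step (i) precise enough to justify the summation: one must verify that after a ``failure'' (a second contender catching up at some level) the system is again in a configuration to which the level-increment and catch-up estimates of \lemmaref{lem:stage2} apply with the same bounds — in particular that the max $\id{payoff}$ is still $\Theta(\log n)$ with at most $5\log n$ contenders, and that the phase-success probabilities are still $\Theta(1/\log^{4.2}n)$ — so that the per-level failure probability stays $\le 1/\log n$ and the per-level expected cost stays $O(n\log^{5.3}n\log\log n)$ uniformly in the number of preceding increments. Given that, the geometric tail of $L$ and the stopping-time summation are routine.
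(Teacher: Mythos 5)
Your good-event analysis is essentially the paper's: \lemmaref{lem:stage1} gets everyone into the competition stage with $\Theta(\log n)$ maximum payoff and at most $5\log n$ top contenders, \lemmaref{lem:stage2} gives a per-level cost of $O(n\log^{5.3}n\log\log n)$ interactions and a per-level ``second survivor'' probability of at most $1/\log n$, and the geometric tail of the number of level increments gives a constant expected number of levels, hence $O(\log^{5.3}n\log\log n)$ expected parallel time on that event. The obstacle you flag about re-applying the stage-2 estimates after a failure is real but is glossed over by the paper as well.

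The genuine gap is in how you absorb the $O(1/n^3)$-probability bad event. You propose to ``restart the argument of the previous paragraph from whatever configuration has been reached'' and claim it elects a leader within $O(n\log^{6.3}n)$ interactions with probability at least $1/2$. That restart is not valid: the stage-2 machinery is conditioned on the \emph{conclusions} of \lemmaref{lem:stage1} (maximum $\id{payoff}$ in $[\log n/2,\,9\log n]$, at most $5\log n$ contenders at that payoff, well-mixed synthetic coins), and on the bad event none of these need hold. Worse, from an arbitrary reachable configuration all surviving contenders may share the same $\id{payoff}$ and may all have reached the level cap $\sqrt{m}/\log m$, at which point no further level increments are possible and the minion propagation mechanism cannot break ties at all (minions deliberately ignore the $\id{coin}$ tie-breaker); the only remaining elimination mechanism is a direct contender-versus-contender interaction with differing $\id{coin}$ values. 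So no iteration of ``independent blocks'' of the stage-2 argument can be set up, and your crude $\poly(n)$ bound is not established. The paper closes exactly this hole with a separate elementary argument: for any two contenders, within every two interactions there is probability $\Omega(1/n^3)$ that they meet with different coin values and one eliminates the other, giving an unconditional $O(n^4)$ expected-interaction (i.e.\ $O(n^3)$ parallel-time) bound to reach a single contender from \emph{any} configuration; multiplying by the $O(1/n^3)$ probability of the bad event contributes $O(1)$ to the expectation. You need this (or some other configuration-independent worst-case bound) to complete the proof.
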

\begin{proof}
To get the same bound on expected parallel time as in the with high probability case, 
  we should incorporate  the expected time in the low probability event.
The lottery stage takes expected $O(n \log n)$ interactions by standard coupon collector argument
  (see the argument in~\lemmaref{lem:stage1}).

During competition, as we saw in~\lemmaref{lem:stage2}, with high probability, 
  a new level is reached within every $O(n \log^{5.3}{n}\log \log n)$ interactions, 
  and with more than constant probability a single contender remains after each level.
Therefore, in the high probability case, only constantly many levels will be used 
  in expectation and expected parallel time will be $O(\log^{5.3}n \log \log n)$.

Otherwise, during competition, recall that non-minions can always eliminate each other 
  in direct interactions comparing their payoffs, levels, and the coin as a tie-breaker.
So, for any given two non-minion nodes $x$ and $y$, in every interaction, 
  there is a probability of at least $O(1/n^2)$ that they meet, and one eliminates
  each other for certain if they had different coin values.
If not, then with probability at least $1/n$, one of the nodes, say $x$, interacts with some 
  other node in this interaction, and then immediately afterward, interacts with $y$,
  this time with different coin values.
Hence, in every two iterations, with probability at least $O(1/n^3)$, 
  the number of contenders decreases by at least one.
The expected number of interactions until one remains is $O(n^4)$,
  thus parallel time $O(n^3)$ with probability at most $O(1/ n^3)$,
  which gives negligible expectation $O(1)$ on parallel time in this low probability case.
\end{proof}

\end{document}